\providecommand{\U}[1]{\protect\rule{.1in}{.1in}}
\newtheorem{theorem}{Theorem}
\newtheorem{corollary}[theorem]{Corollary}
\newtheorem{definition}[theorem]{Definition}
\newtheorem{lemma}[theorem]{Lemma}
\newtheorem{proposition}[theorem]{Proposition}
\newtheorem{remark}[theorem]{Remark}
\newenvironment{proof}[1][Proof]{\noindent\textbf{#1.} }{\ \rule{0.5em}{0.5em}}
\newcommand{\bpartial}{\mathop{\partial\kern -4pt\raisebox{.8pt}{$|$}}}
\newcommand{\bra}{\mathopen{[\kern-1.6pt[}}
\newcommand{\ket}{\mathclose{]\kern-1.5pt]}}
\newcommand{\bbra}{\mathopen{[\kern-2.2pt[\kern-2.3pt[}}
\newcommand{\bket}{\mathclose{]\kern-2.1pt]\kern-2.3pt]}}
\begin{document}

\title{A Clifford Bundle Approach to the Differential Geometry of
Branes\thanks{Advances in Applied Clifford Algebras (2014) DOI
10.1007/s00006-014-0452-6}}
\author{Waldyr A. Rodrigues Jr.$^{(1)}$ and Samuel A. Wainer$^{(2)}$\\$\hspace{-0.1cm}$Institute of Mathematics, Statistics and Scientific Computation\\IMECC-UNICAMP\\13083-859 Campinas, SP, Brazil\\e-mail:$^{(1)}$ walrod@ime.unicamp.br\smallskip\ and $^{(2)}$samuelwainer@ime.unicamp.br}
\date{March 25 2014}
\maketitle

\begin{abstract}
We first recall using the Clifford bundle formalism (CBF) of differential
forms and the theory of extensors acting on $\mathcal{C\ell}(M,g)$ (the
Clifford bundle of differential forms) the formulation of the intrinsic
geometry of a differential manifold $M$ equipped with a metric field
$\boldsymbol{g}$ of signature $(p,q)$ and an arbitrary metric compatible
connection $\nabla$ introducing the torsion ($2-1$)-extensor field $\tau$, the
curvature $(2-2)$ extensor field $\mathfrak{R}$ and (once fixing a gauge) the
connection $(1-2)$-extensor $\omega$ and the Ricci operator
$\boldsymbol{\partial}\wedge\boldsymbol{\partial}$ (where
$\boldsymbol{\partial}$ is the Dirac operator acting on sections of
$\mathcal{C\ell}(M,g)$) which plays an important role in this paper. Next,
using the CBF we give a thoughtful presentation the Riemann or the Lorentzian
geometry of an orientable submanifold $M$ ($\dim M=m$) living in a manifold
$\mathring{M}$ (such that $\mathring{M}\simeq\mathbb{R}^{n}$ is equipped with
a semi-Riemannian metric $\boldsymbol{\mathring{g}}$ with signature
$(\mathring{p},\mathring{q})$ and \ $\mathring{p}+\mathring{q}=n$ and its
Levi-Civita connection $\mathring{D}$) and where there is defined a metric
$\boldsymbol{g=i}^{\ast}\mathring{g}$, where $\boldsymbol{i}:$ $M\rightarrow
\mathring{M}$ is the inclusion map. We prove several equivalent forms for the
curvature operator $\mathfrak{R}$ of $M$. Moreover we show a very important
result, namely that the Ricci operator of $M$ is the (negative) square of the
shape operator $\mathbf{S}$ of $M$ (object obtained by applying the
restriction on $M$ of the Dirac operator $\boldsymbol{\mathring{\partial}}$ of
$\mathcal{C\ell}(\mathring{M},\mathring{g})$ to the projection operator
$\mathbf{P}$). Also we disclose the relationship between the ($1-2$)-extensor
$\omega$ and the shape biform $\mathcal{S}$ (an object related to $\mathbf{S}%
$). The results obtained are used to give a mathematical formulation to
Clifford's theory of matter. It is hoped that our presentation will be useful
for differential geometers and theoretical physicists interested, e.g., in
string and brane theories and relativity theory\ by divulging, improving and
expanding very important and so far unfortunately largely ignored results
appearing in reference \cite{hs1984}.\newpage

\end{abstract}
\tableofcontents

\section{Introduction}

In this paper we use the Clifford bundle formalism (CBF) in order to analyze
the Riemann or the Lorentzian geometry of an orientable submanifold $M$ ($\dim
M=m$) living in a manifold $\mathring{M}$ such that $\mathring{M}%
\simeq\mathbb{R}^{n}$ is equipped with a semi-Riemannian metric
$\boldsymbol{\mathring{g}}$ (with signature $(\mathring{p},\mathring{q})$
and\ $\mathring{p}+\mathring{q}=n$) and its Levi-Civita connection
$\mathring{D}$.

In order to achieve our objectives and exhibit some nice results that are not
well known (and which, e.g., may possibly be of interest for the description
and formulation of branes theories \cite{ma} and string theories
\cite{becker}) we first recall in Section 2 how to formulate using the CBF the
intrinsic differential geometry of a structure $\langle M,\boldsymbol{g}%
,\nabla\rangle$ where $\nabla$ is a general metric compatible Riemann-Cartan
connection, i.e., $\nabla\boldsymbol{g=0}$ and the Riemann and torsion tensors
of $\nabla$ are non null. In our approach we will introduce (once we fix a
gauge in the frame bundle) a $(1,2)$-extensor field $\boldsymbol{\omega}:\sec%
{\textstyle\bigwedge\nolimits^{1}}
T^{\ast}M\rightarrow%
{\textstyle\bigwedge\nolimits^{2}}
T^{\ast}M$ closed related with the connection $1$-forms which permits to write
a very nice formula for the covariant derivative (see Eq.(\ref{code})) of any
section of the Clifford bundle of the structure $\langle M,\boldsymbol{g}%
,\nabla\rangle$. It will be shown that $\boldsymbol{\omega}$ is related to
$\mathcal{S}:\sec%
{\textstyle\bigwedge\nolimits^{1}}
T^{\ast}M\rightarrow%
{\textstyle\bigwedge\nolimits^{2}}
T^{\ast}M$ the shape operator biform of the manifold.

Then in Section 3, we suppose that $M$ is a \emph{proper}
submanifold\footnote{By a proper (or regular) submanifold $M$ of $\mathring
{M}$ we mean a subset $M$ $\subset$ $\mathring{M}$ such that for every $x\in
M$ in the domain of a chart \ $(U,\sigma)$ of\ $\mathring{M}$ such that
$\sigma:$ $\mathring{M}\cap U\rightarrow\mathbb{R}^{n}\times\{\mathbf{l}%
\},\newline\sigma(x)=(x^{1},\cdots,x^{n},l^{1},\cdots,l^{m-n})$, where
$\mathbf{l\in}\mathbb{R}^{n-m}$.} of $\mathring{M}$ which $\boldsymbol{i}%
:M\mapsto\mathring{M}$ the inclusion map. Introducing natural \emph{global}
coordinates $(\boldsymbol{x}^{1},...,\boldsymbol{x}^{n})$\ for $\mathring
{M}\simeq\mathbb{R}^{n}$ we write $\boldsymbol{\mathring{g}=}%
{\textstyle\sum\nolimits_{i,j=1}^{n}}
\eta_{ij}d\boldsymbol{x}^{i}\otimes d\boldsymbol{x}^{j}$ $\equiv\eta
_{ij}d\boldsymbol{x}^{i}\otimes d\boldsymbol{x}^{j}$ and equip $\mathring{M}$
with the pullback metric $\boldsymbol{g:=i}^{\ast}\boldsymbol{\mathring{g}}$.
We then find the relation between the Levi-Civita connection $D$ of
$\boldsymbol{g}$ and $\mathring{D}$, the Levi-Civita connection of
$\boldsymbol{\mathring{g}}$. We suppose that $\boldsymbol{g}$ is non
degenerated of signature $(p,q)$ with \ $p+q=m$.

$\mathcal{C\ell}(\mathring{M},\mathtt{\mathring{g}})$ and $\mathcal{C\ell
}(M,\mathtt{g})$\ denote respectively the Clifford bundles of differential
forms of $\mathring{M}$ and $M$\footnote{For all applications in what follows
take notice that $%
{\textstyle\bigwedge}
T^{\ast}M=%
{\textstyle\bigoplus\nolimits_{r=0}^{n}}
{\textstyle\bigwedge^{r}}
T^{\ast}M$ $\hookrightarrow\mathcal{C\ell}(M,\mathtt{g})$, where the symbol
$\hookrightarrow$ means that \ for each $x\in M$, $%
{\textstyle\bigwedge}
T_{x}^{\ast}M$ (the bundle of differential forms) is embedded in
$\mathcal{C\ell}(T_{x}^{\ast}M,\mathtt{g}_{x})$ and $%
{\textstyle\bigwedge}
T_{x}^{\ast}M\subseteq\mathcal{C\ell}(%
{\textstyle\bigwedge}
T_{x}^{\ast}M,\mathtt{g}_{x})$.}. Moreover, in what follows $\mathring{g}=%
{\textstyle\sum\nolimits_{i,j=1}^{n}}
\eta^{ij}\frac{\partial}{\partial\boldsymbol{x}^{i}}\otimes\frac{\partial
}{\partial\boldsymbol{x}^{j}}\equiv\eta^{ij}\frac{\partial}{\partial
\boldsymbol{x}^{i}}\otimes\frac{\partial}{\partial\boldsymbol{x}^{j}}$ is the
metric of the cotangent bundle. The Dirac operators\footnote{Take notice that
the Dirac operators used in this paper is acting on sections of the Clifford
bundle. It is not to be confused with the Dirac operator which acts on
sections of the spinor bundle (see details in \cite{nra}) . This last operator
can be used to probe the topology of the brane, as showed in \cite{rbh}} of
$\mathcal{C\ell}(\mathring{M},\mathtt{\mathring{g}})$\ and $\mathcal{C\ell
}(M,\mathtt{g})$ will be denoted\footnote{We follow here, whenever possible,
the notation used in \cite{rodcap2007}. Note that differently from references
\cite{hs1984,h1986,sobczyk} we use the left and right contractions operators
$\lrcorner$ and $\llcorner$ and the scalar product operator (denoted by
$\cdot$ ) acting on sections of the Clifford bundle. Also our convention for
the Riemann tensor makes some equations to have a different signals than ones
appearing in the references just quoted.}\ by $\boldsymbol{\mathring{\partial
}}$ and $\boldsymbol{\partial}$. Let $l=n-m$ and $\{\mathring{e}_{1}%
,\mathring{e}_{2},...,\mathring{e}_{\boldsymbol{m}},\mathring{e}%
_{m+1},...,\mathring{e}_{m+l}\}$ an orthonormal basis for $T\mathring{U}$
($\mathring{U}\subset\mathring{M}$) such that $\{e_{\mathbf{1}},\boldsymbol{e}%
_{\mathbf{2}},...,e_{\mathbf{m}}\}=\{\mathring{e}_{1},\mathring{e}%
_{2},...,\mathring{e}_{\boldsymbol{m}}\}$ is a basis for $TU$ ($U\subset
\mathring{U}$) \ and if $\{\mathring{\theta}^{\mathbf{1}},\mathring{\theta
}^{\mathbf{2}},...,\mathring{\theta}^{\mathbf{m}},\mathring{\theta}%
^{m+1},...,\mathring{\theta}^{m+l}\}$ \ is the dual basis of the $\{e_{i}\}$
we have that $\{\theta^{\mathbf{1}},\theta^{\mathbf{2}},...,\theta
^{\mathbf{m}}\}=\{\mathring{\theta}^{1},\mathring{\theta}^{2},...,\mathring
{\theta}^{m}\}$ is a basis for $T^{\ast}U$ dual to the basis $\{e_{\mathbf{1}%
},\boldsymbol{e}_{\mathbf{2}},...,e_{\mathbf{m}}\}$ of $TU$. We have, as well
known \cite{rodcap2007}:
\begin{equation}
\boldsymbol{\mathring{\partial}}=%
{\textstyle\sum\nolimits_{i=1}^{n}}
\mathring{\theta}^{i}\mathring{D}_{\boldsymbol{e}_{i}}=\mathring{\theta}%
^{i}\mathring{D}_{\boldsymbol{e}_{i}}\text{, \ }\boldsymbol{\partial
=}\text{\ }%
{\textstyle\sum\nolimits_{i=1}^{m}}
\theta^{i}D_{\boldsymbol{e}_{i}}=\theta^{\mathbf{i}}D_{\boldsymbol{e}%
_{\mathbf{i}}}, \label{1}%
\end{equation}

\begin{remark}
Take notice tha the bold face sub and superscripts are use to denote bases
$\{e_{\mathbf{i}}\}$ and $\{\theta^{\mathbf{i}}\}$ of the tangent and
cotangent space of $M$. This notation is conveniently used in what follows.
\end{remark}

The dual basis of the natural coordinate basis $\{\frac{\partial}%
{\partial\boldsymbol{x}^{i}}\}$ is denoted in what follows by $\{\gamma^{i}\}$
where, of course, $\gamma^{i}=d\boldsymbol{x}^{i}$. Moreover, we denote by
$\{\mathring{e}^{1},\mathring{e}^{2},...,\mathring{e}^{m}\}$ the reciprocal
frame of \ $\{\mathring{e}_{i}\}$, i.e., $\boldsymbol{\mathring{g}}%
(\mathring{e}^{i},\mathring{e}_{j})=\delta_{j}^{i}$ and by $\{\mathring
{\theta}_{i}\}$ the reciprocal basis of$\{\mathring{\theta}^{i}\}$, i.e.,
$\mathring{g}(\mathring{\theta}^{i},\mathring{\theta}_{j}):=\mathring{\theta
}^{i}\cdot\mathring{\theta}_{j}=\delta_{j}^{i}$ Moreover, take into account
that for $\mathbf{i},\mathbf{j}=1,...,m$ it is $\mathtt{g}(\theta
^{^{\mathbf{i}}},\theta_{\mathbf{j}})=\mathtt{\mathring{g}}(\mathring{\theta
}^{i},\mathring{\theta}_{j})$ So we will write also $\mathtt{g}(\theta
^{\mathbf{i}},\theta_{\mathbf{j}})=$ $\theta^{\mathbf{i}}\cdot\theta
_{\mathbf{j}}=\delta_{\mathbf{j}}^{\mathbf{i}}$. The representation of the
Dirac operator $\boldsymbol{\mathring{\partial}}$ in the natural coordinate
basis of $\mathring{M}$ is of course, $\mathfrak{%
{\textstyle\sum\nolimits_{i=1}^{n}}
\gamma}^{i}\frac{\partial}{\partial\boldsymbol{x}^{i}}=%
{\textstyle\sum\nolimits_{i=1}^{n}}
\mathring{\theta}^{i}\mathring{D}_{\boldsymbol{e}_{i}}$. Note that we have
$\left.  \mathring{\theta}^{m+1}\right\vert _{M}=0,...,\left.  \mathring
{\theta}^{m+l}\right\vert _{M}=0$, i.e., the $\{\mathring{\theta}%
^{m+1},...,\mathring{\theta}^{m+l}\}$ for any vector field $\boldsymbol{a}%
\in\sec TU$ and $d=1,...,m+l$ we have%
\[
\left.  \mathring{\theta}^{m+d}\right\vert _{M}(\boldsymbol{a})=0.
\]
We denote moreover
\begin{equation}
\mathfrak{\mathring{d}=}\left.  \boldsymbol{\mathring{\partial}}\right\vert
_{M}:=\theta^{\mathbf{i}}\theta_{\mathbf{i}}\cdot\boldsymbol{\mathring
{\partial}=}%
{\textstyle\sum\nolimits_{i=1}^{m}}
\theta^{\mathbf{i}}\mathring{D}_{\boldsymbol{e}_{\mathbf{i}}}=\theta
^{\mathbf{i}}\mathring{D}_{\boldsymbol{e}_{i}} \label{2a}%
\end{equation}
the restriction of $\boldsymbol{\mathring{\partial}}$ on the submanifold $M$.
The projection operator $\mathbf{P}$ (an extensor field\footnote{For a
thougtful presentation of the theory of extensor fields, see, e.g.,
\cite{rodcap2007}.}) on $M$ and the \emph{shape} \emph{operator }%
$\mathbf{S}=\mathfrak{\mathring{d}{}}\mathbf{P}$\textbf{:} $\sec
\mathcal{C\ell}(\mathring{M},\mathtt{\mathring{g}})\rightarrow\sec
\mathcal{C\ell}(M,\mathtt{g})$ and \emph{shape biform} operator of the
manifold $M$, $\mathcal{S}:$ $\sec%
{\textstyle\bigwedge\nolimits^{1}}
T^{\ast}M\mapsto\
{\textstyle\bigwedge\nolimits^{2}}
T^{\ast}M,$ $\mathcal{S}(a):=-(a\cdot\mathfrak{d}I_{m})I_{m}^{-1}$ (where
$\tau_{\boldsymbol{g}}=I_{m}=\theta^{1}\theta^{2}\cdots\theta^{m}$ is the
volume form\footnote{The volume $\tau_{\boldsymbol{\mathring{g}}}$ for on
$\mathring{U}\subset\mathring{M}$ $\ $will be denoted by $I_{n}%
=\boldsymbol{\mathring{\theta}}^{1}\boldsymbol{\mathring{\theta}}^{2}%
\cdots\boldsymbol{\mathring{\theta}}^{m}$. The volume form $\tau
_{\boldsymbol{\mathring{g}}}$ on $\mathring{U}\subset M$ will be denoted
$I_{n}=\mathring{\theta}^{\mathbf{1}}\mathring{\theta}^{\mathbf{2}}%
\cdots\mathring{\theta}^{\mathbf{m}}\mathring{\theta}^{m+1}\cdots
\mathring{\theta}^{m+l}=I_{m}\mathring{\theta}^{\mathbf{m}}\mathring{\theta
}^{m+1}\cdots\mathring{\theta}^{m+l}$.} on $U\subset M$ ) are fundamental
objects in this study. The definition of those objects are given in Section 3
and the main algebraic properties of $\mathbf{P,}$ $\mathbf{S}$ and
$\mathcal{S}$\ besides all identities necessary for the present paper are
given and proved at the appropriate places.

Section 4 is dedicated to find several equivalent expressions for the
curvature biform $\mathfrak{R}(u,v)$ in terms of the shape operator. There we
recall that the square of the Dirac operator $\boldsymbol{\partial}$\ acting
on sections of the Clifford bundle has two different decompositions, namely%
\begin{equation}
\boldsymbol{\partial}^{2}=-(d\delta+\delta d)=\boldsymbol{\partial
\cdot\partial+\partial\wedge\partial,} \label{3}%
\end{equation}
where $d$ and $\delta$ are respectively the exterior derivative and the\ Hodge
coderivative\ and $\boldsymbol{\partial\cdot\partial+\partial\wedge\partial}$
are respectively the covariant Laplacian and the Ricci operator. The explicit
forms of $\boldsymbol{\partial\cdot\partial}$ and $\boldsymbol{\partial
\wedge\partial}$ are given in \cite{rodcap2007} where it is shown moreover
that $\boldsymbol{\partial\wedge\partial}$ is an extensorial operator and the
remarkable result
\begin{equation}
\boldsymbol{\partial\wedge\partial~}\theta^{i}=\mathcal{R}^{i}, \label{3a}%
\end{equation}
where the objects $\mathcal{R}^{i}=R_{j}^{i}\theta^{j}\in\sec%
{\textstyle\bigwedge\nolimits^{1}}
T^{\ast}M\hookrightarrow\sec\mathcal{C\ell}(\mathring{M},\mathtt{\mathring{g}%
})$ with $R_{j}^{i}$ the components of the Ricci tensor associated with $D$
are called the Ricci $1$-form fields. One of the main purposes of the present
paper is to give (Section 5) a detailed proof of the\ remarkable equation%
\begin{equation}
\boldsymbol{\partial\wedge\partial~(}v)=-\mathbf{S}^{2}(v), \label{3aa}%
\end{equation}
which says that the shape biform operator is the negative square root of the
Ricci operator\footnote{This result appears (with a positive sign on the
second member of Eq.(\ref{3aa}) in \cite{hs1984}. See also \cite{sobczyk}.
However, take into account that the methods used in those references use the
Clifford algebra of multivectors and thus, comparison of the results there
with the standard presentations of modern differential geometry using
differential forms are not so obvious, this being probably one of the reasons
why some important and beautiful results displayed in \cite{hs1984} are
unfortunately ignored.}. We moreover find the relation between $\mathcal{S}%
(v)$ and $\boldsymbol{\omega}(v)$ thus providing a very interesting
geometrical meaning for the connection $1$-forms $\omega_{\cdot j}^{i\cdot}$
of the Levi-Civita connection $D$, namely as the\ angular `velocity' with
which the pseudo scalar $I_{m}$ when it slides on $M$.

We also discuss in Section 5 if the present formalism permits to give a
mathematical representation concerning Clifford's space theory of matter. In
Section 6 we show that in a Lorentzian brane containing a Killing vector field
Einstein equation can be encoded in a Maxwell like equation whose source is a
current given by $J=2\mathbf{S}^{2}(A)$. The article contains also an Appendix
presenting some identities involving the projection operator and its covariant
derivative which permit to prove Proposition \ref{prodif}.

In Section 7 we present our conclusions.

\section{Curvature and Torsion Extensor of a Riemann-Cartan Connection}

Let $\boldsymbol{u},\boldsymbol{v},\boldsymbol{t},\boldsymbol{z}\in\sec TM$
and $u,v,t,z$ $\in\sec\bigwedge^{1}T^{\ast}M\hookrightarrow\sec\mathcal{C\ell
}(M,\mathtt{g})$ the physically equivalent $1$-forms, i.e., $u=\boldsymbol{g}%
(\boldsymbol{u}\mathbf{,})$, etc.\textbf{ }Let moreover $\{\boldsymbol{e}%
_{\mathbf{a}}\}$ be an orthonormal basis for $TM$ and $\{\theta^{\mathbf{a}%
}\}$, $\theta^{\mathbf{a}}\in\sec\bigwedge\nolimits^{1}T^{\ast}%
M\hookrightarrow\mathcal{C\ell}(M,\mathtt{g})$ the corresponding dual basis
and consider the Riemann-Cartan structure $(M,\boldsymbol{g},\nabla)$.

\begin{definition}
The form derivative of $M$ is the operator
\begin{gather}
\eth:\sec\mathcal{C\ell}(M,\mathtt{g})\rightarrow\sec\mathcal{C\ell
}(M,\mathtt{g}),\nonumber\\
\eth\mathfrak{\mathcal{C}:}=\theta^{\mathbf{a}}\eth_{\boldsymbol{e}%
_{\mathbf{a}}}\mathfrak{\mathcal{C}} \label{formderiv1}%
\end{gather}
where $\mathfrak{d}_{\boldsymbol{e}_{\mathbf{a}}}$ is the Pfaff derivative of
form fields
\begin{equation}
\eth_{\boldsymbol{e}_{\mathbf{a}}}\mathfrak{\mathcal{C}}:\mathfrak{\mathcal{=}%
}%
{\textstyle\sum\nolimits_{r=0}^{m}}
\eth_{\boldsymbol{e}_{\mathbf{a}}}\langle\mathfrak{\mathcal{C}}\rangle_{r}
\label{pffaf0}%
\end{equation}
such that if $\langle\mathfrak{\mathcal{C}}\rangle_{r}$ is expanded in the
basis generated by $\{\theta^{\mathbf{a}}\}$, i.e., $\langle
\mathfrak{\mathcal{C}}\rangle_{r}=\mathfrak{\mathcal{C}}_{r}=\frac{1}%
{r!}\mathcal{C}_{\mathbf{i}_{1}\cdots\mathbf{i}_{r}}\theta^{\mathbf{i}%
_{1}\cdots\mathbf{i}_{r}}\in\sec\bigwedge\nolimits^{r}T^{\ast}M\hookrightarrow
\sec\mathcal{C\ell}(M,\mathtt{g})$ it is%
\begin{equation}
\eth_{\boldsymbol{e}_{\mathbf{a}}}\langle\mathfrak{\mathcal{C}}\rangle
_{r}:=\frac{1}{r!}\boldsymbol{e}_{\mathbf{a}}(\mathcal{C}_{\mathbf{i}%
_{1}\cdots\mathbf{i}_{r}}\theta^{\mathbf{i}_{1}\cdots\mathbf{i}_{r}})=\frac
{1}{r!}\boldsymbol{e}_{\mathbf{a}}(\mathcal{C}_{\mathbf{i}_{1}\cdots
\mathbf{i}_{r}})\theta^{\mathbf{i}_{1}\cdots\mathbf{i}_{r}}. \label{pffaf}%
\end{equation}

\end{definition}

Given two different\ pairs of basis $\{\boldsymbol{e}_{\mathbf{a}}%
,\theta^{\mathbf{a}}\}$ \ and $\{\boldsymbol{e}_{\mathbf{a}}^{\prime}%
,\theta^{\prime\mathbf{a}}\}$ we have that
\begin{equation}
\theta^{\mathbf{a}}\eth_{\boldsymbol{e}_{\mathbf{a}}}\mathfrak{\mathcal{C}%
}=\theta^{\prime\mathbf{a}}\eth_{\boldsymbol{e}_{\mathbf{a}}}^{\prime
}\mathfrak{\mathcal{C}}, \label{pffaf1}%
\end{equation}
since for all $\mathfrak{\mathcal{C}}_{r}$
\begin{equation}
\eth^{\prime}\mathfrak{\mathcal{C}}_{r}=\theta^{\prime\mathbf{a}}%
\eth_{\boldsymbol{e}_{\mathbf{a}}}^{\prime}\mathfrak{\mathcal{C}}_{r}%
=\theta^{\prime\mathbf{a}}\boldsymbol{e}_{\mathbf{a}}^{\prime}(\frac{1}%
{r!}\mathcal{C}_{\mathbf{i}_{1}\cdots\mathbf{i}_{r}}^{\prime}\theta
^{\prime\mathbf{i}_{1}\cdots\mathbf{i}_{r}})=\theta^{\mathbf{a}}%
\boldsymbol{e}_{\mathbf{a}}(\frac{1}{r!}\mathcal{C}_{\mathbf{i}_{1}%
\cdots\mathbf{i}_{r}}\theta^{\mathbf{i}_{1}\cdots\mathbf{i}_{r}}).
\label{pffaf2}%
\end{equation}

\begin{remark}
We recall also that any biform $B\in\sec%
{\textstyle\bigwedge\nolimits^{2}}
T^{\ast}M\hookrightarrow\sec\mathcal{C}\ell(M,\mathtt{g})$ and any $A_{r}\in%
{\textstyle\bigwedge\nolimits^{r}}
T^{\ast}M\hookrightarrow\sec\mathcal{C}\ell(M,\mathtt{g})$ with $r\geq2$ it
holds that
\begin{equation}
BA_{r}=B\lrcorner A_{r}+B\times A_{r}+B\wedge A_{r}\text{.} \label{pcomu}%
\end{equation}
where for any $\mathcal{C},\mathcal{D}\in\sec\mathcal{C}\ell(M,\mathtt{g})$
\begin{equation}
\mathcal{C}\times\mathcal{D}=\frac{1}{2}(\mathcal{CD-DC}) \label{pcomu1}%
\end{equation}
We observe that \ for $v\in%
{\textstyle\bigwedge\nolimits^{1}}
T^{\ast}M\hookrightarrow\sec\mathcal{C}\ell(M,\mathtt{g})$ it is%
\begin{equation}
B\times v=B\llcorner v=-v\lrcorner B. \label{pcomu2}%
\end{equation}

\end{remark}

Call $\overset{\triangledown}{\mathbf{\partial}}$ $:=\theta^{\mathbf{a}}%
\nabla_{\boldsymbol{e}_{\mathbf{a}}}$ the Dirac operator associated with
$\nabla$, a general Riemann-Cartan connection. In \cite{rodcap2007} it is
introduced the Dirac commutator\emph{ }of two $1$-form fields $u,v\in
\sec\bigwedge\nolimits^{1}T^{\ast}M\hookrightarrow\sec\mathcal{C\ell
}(M,\mathtt{g})$ associated with $\nabla$ by%
\begin{align*}%
\bbra
\;,\;%
\bket
&  :\sec\bigwedge\nolimits^{1}T^{\ast}M\times\text{ }\sec\bigwedge
\nolimits^{1}T^{\ast}M\rightarrow\sec\bigwedge\nolimits^{1}T^{\ast}M\\%
\bbra
u,v%
\bket
&  =(u\cdot\overset{\triangledown}{\mathbf{\partial}})v-(v\cdot\overset
{\triangledown}{\mathbf{\partial}})u-%
\bra
u,v%
\ket
\end{align*}
where
\begin{equation}%
\bra
u,v%
\ket
=(u\cdot\boldsymbol{\partial})v-(v\cdot\boldsymbol{\partial})u, \label{ten1}%
\end{equation}
is the Lie bracket of $1$-form fields\footnote{We have, e.g., that if
$[\boldsymbol{e}_{\mathbf{a}},\boldsymbol{e}_{\mathbf{b}}]=c_{\cdot
\mathbf{ab}}^{\mathbf{d\cdot\cdot}}\boldsymbol{e}_{\mathbf{d}}$, then $%
\bra
\theta_{\mathbf{a}},\theta_{\mathbf{b}}%
\ket
=c_{\cdot\mathbf{ab}}^{\mathbf{d\cdot\cdot}}\theta_{\mathbf{d}}$.}.

\begin{definition}
For a metric compatible connection $\nabla$, recalling the definition of the
torsion operator\footnote{The torsion operator of a connection $\nabla$ is the
mapping $\mathbf{\tau}:\sec TM\times\sec TM\rightarrow\sec TM$,
$(\boldsymbol{u,v})\mapsto\mathbf{\tau}(\boldsymbol{u,v})=\nabla
_{\boldsymbol{u}}\boldsymbol{v}-\nabla_{v}\boldsymbol{v}-[\boldsymbol{u,v}]$.}
we conveniently write%
\begin{equation}
\mathbf{\tau(}u,v)=%
\bbra
u,v%
\bket
, \label{ten1aa}%
\end{equation}
which we call the \emph{(}form\emph{)} torsion operator.
\end{definition}

\begin{remark}
We recall\ the action of the operator\footnote{More details on the concept of
a general derivative operator $\partial_{A}$ \ ($A\in\sec\mathcal{C\ell
}(M,\mathtt{g})$) acting on a a general multiform field $E:\sec\mathcal{C\ell
}(M,\mathtt{g})\rightarrow\sec\mathcal{C\ell}(M,\mathtt{g})$ may be found,
e.g., in \cite{rodcap2007} where several explicit examples are given.}
$\partial_{u}$ \emph{(}$u\in\sec\bigwedge\nolimits^{1}T^{\ast}M\rightarrow
\sec\mathcal{C\ell}(M,\mathtt{g})$\emph{)} acting on an extensor field
$F:\sec\bigwedge\nolimits^{1}T^{\ast}M\rightarrow\sec\bigwedge\nolimits^{r}%
T^{\ast}M$, $u\mapsto F(u)$. If $u=u^{\mathbf{i}}\theta_{\mathbf{i}}$,
$\partial_{u}:=\theta^{\mathbf{k}}\frac{\partial}{\partial u^{\mathbf{k}}}$
acting on $F(u)$ is given by
\begin{align}
\partial_{u}F(u)  &  :=\theta^{\mathbf{k}}\frac{\partial}{\partial
u^{\mathbf{k}}}F(u^{\mathbf{i}}\theta_{\mathbf{i}}):=\theta^{\mathbf{k}}%
\frac{\partial}{\partial u^{\mathbf{k}}}u^{\mathbf{i}}F(\theta_{\mathbf{i}%
})\nonumber\\
&  =\theta^{\mathbf{k}}F(\theta_{\mathbf{k}})=\theta^{\mathbf{k}}\lrcorner
F(\theta_{\mathbf{k}})+\theta^{\mathbf{k}}\wedge F(\theta_{\mathbf{k}}).
\label{du1}%
\end{align}
Also the action of the operator $\partial_{u}\wedge\partial_{v}$
\emph{(}$u=u^{\mathbf{i}}\theta_{\mathbf{i}}$, $v=v^{\mathbf{i}}%
\theta_{\mathbf{i}}$\emph{)} acting on an extensor field \ $G:\sec
\bigwedge\nolimits^{1}T^{\ast}M\times\sec\bigwedge\nolimits^{1}T^{\ast
}M\hookrightarrow\sec\bigwedge\nolimits^{r}T^{\ast}M$,$\ (u,v)\mapsto G(u,v)$
is given by
\begin{align}
\partial_{u}\wedge\partial_{v}G(u,v)  &  =\theta^{\mathbf{k}}\frac{\partial
}{\partial u^{\mathbf{k}}}\wedge\theta^{\mathbf{l}}\frac{\partial}{\partial
u^{\mathbf{l}}}u^{\mathbf{m}}u^{n}G(\theta^{\mathbf{m}},\theta^{\mathbf{n}%
})\nonumber\\
&  =\theta^{\mathbf{k}}\wedge\theta^{\mathbf{l}}G(\theta_{\mathbf{k}}%
,\theta_{\mathbf{l}}). \label{du2}%
\end{align}

\end{remark}

\begin{definition}
\label{torsion x dirac copy(1)}The mapping
\begin{align}
\mathfrak{t}\text{ }  &  :\sec\bigwedge\nolimits^{2}T^{\ast}M\rightarrow
\sec\bigwedge\nolimits^{1}T^{\ast}M,\nonumber\\
\mathfrak{t}\mathbf{(}B)  &  =\frac{1}{2}B\cdot(\partial_{u}\wedge\partial
_{v})\mathbf{\tau(}u,v). \label{ten2}%
\end{align}
is called the $(2$-$1)$-extensorial torsion field and
\begin{equation}
\mathfrak{t}(u\wedge v)=\mathbf{\tau(}u,v). \label{ten2aa}%
\end{equation}

\end{definition}

\noindent Indeed, from Eq.(\ref{ten2}) we have taking $B=a\wedge b$
\begin{equation}
\mathfrak{t}\mathbf{(}a\wedge b)=\frac{1}{2}(a\wedge b)\cdot(\partial
_{u}\wedge\partial_{v})\mathbf{\tau(}u,v). \label{t1}%
\end{equation}
Now,%
\begin{equation}
\left(  \partial_{u}\wedge\partial_{v}\right)  \tau\mathbf{(}u,v)=(\theta
^{\mathbf{k}}\wedge\theta^{\mathbf{l}})\mathbf{\tau(}\theta_{\mathbf{k}%
},\theta_{\mathbf{l}}). \label{t2}%
\end{equation}
Then,%
\[
\mathfrak{t}\mathbf{(}a\wedge b)=\frac{1}{2}(a\wedge b)\cdot(\theta
^{\mathbf{k}}\wedge\theta^{\mathbf{l}})\mathbf{\tau(}\theta_{\mathbf{k}%
},\theta_{\mathbf{l}})=\tau\mathbf{(}a,b).
\]

\begin{definition}
The extensor mapping
\begin{align}
\mathbf{\Theta}  &  :\sec\bigwedge\nolimits^{1}T^{\ast}M\rightarrow
\sec\bigwedge\nolimits^{2}T^{\ast}M,\nonumber\\
\mathbf{\Theta}(c)  &  =\frac{1}{2}(\partial_{u}\wedge\partial_{v}%
)\tau\mathbf{(}u,v)\cdot c, \label{ten2a}%
\end{align}
is called the Cartan torsion field.
\end{definition}

We have that
\[
\mathfrak{t}(u\wedge v)=\partial_{c}(u\wedge v)\cdot\mathbf{\Theta}(c)
\]
and if $\nabla_{\boldsymbol{e}_{\mathbf{a}}}\theta^{\mathbf{b}}:=-\omega
_{\cdot\mathbf{ac}}^{\mathbf{b\cdot\cdot}}\theta^{\mathbf{c}}$ then
\begin{align}
z\cdot\mathfrak{t}(u\wedge v)  &  =z_{\mathbf{d}}u^{\mathbf{a}}v^{\mathbf{b}%
}T_{\cdot\mathbf{ab}}^{\mathbf{d\cdot\cdot}},\nonumber\\
T_{\mathbf{\cdot ab}}^{\mathbf{c\cdot\cdot}}  &  =\omega_{\mathbf{\cdot ab}%
}^{\mathbf{c\cdot\cdot}}-\omega_{\mathbf{\cdot ba}}^{\mathbf{c\cdot\cdot}%
}-c_{\mathbf{\cdot ab}}^{\mathbf{c\cdot\cdot}}\text{ }. \label{TORSION rie}%
\end{align}

\begin{definition}
The connection $(1$-$2)$-extensor field $\boldsymbol{\omega}$ in a given gauge
is given by \emph{(}$v=\boldsymbol{g}(\boldsymbol{v}\mathbf{,}$ $)$\emph{)}
\begin{align}
\omega &  :\sec\bigwedge\nolimits^{1}T^{\ast}M\rightarrow\sec\bigwedge
\nolimits^{2}T^{\ast}M,\nonumber\\
v  &  \mapsto\omega(v)=\frac{1}{2}v^{\mathbf{c}}\omega_{\cdot\mathbf{c}\cdot
}^{\mathbf{a}\cdot\mathbf{b}}\theta_{\mathbf{a}}\wedge\theta_{\mathbf{b}}.
\label{om1}%
\end{align}

\end{definition}

We also introduce the operator
\begin{align}
\boldsymbol{\omega}  &  :\sec\bigwedge\nolimits^{1}TM\rightarrow\sec
\bigwedge\nolimits^{2}T^{\ast}M,\nonumber\\
\boldsymbol{v}  &  \mapsto\boldsymbol{\omega}(\boldsymbol{v})=\omega
_{\boldsymbol{v}}:=\frac{1}{2}v^{\mathbf{c}}\omega_{\cdot\mathbf{c}\cdot
}^{\mathbf{a}\cdot\mathbf{b}}\theta_{\mathbf{a}}\wedge\theta_{\mathbf{b}}
\label{om2}%
\end{align}
and it is clear that
\begin{equation}
\omega(v)=\omega_{\boldsymbol{v}}. \label{om3}%
\end{equation}

One can immediately verify that for any $\mathcal{C}\in\sec\mathcal{C\ell
}(M,\mathtt{g})$ we have\footnote{For a rigorous derivation of this formula
using the concept of connections as 1-forms on a principal bundle with values
in a given Lie algebra see \cite{mr2004}.}
\begin{align}
\nabla_{\boldsymbol{v}}\mathcal{C}  &  =\eth_{\boldsymbol{v}}\mathcal{C}%
+\frac{1}{2}[\omega_{\boldsymbol{v}},\mathcal{C}]\label{code}\\
&  =\eth_{\boldsymbol{v}}\mathcal{C}+\omega_{\boldsymbol{v}}\times
\mathcal{C},\nonumber
\end{align}
where $\omega_{\boldsymbol{v}}\times\mathcal{C}:\mathcal{=}\frac{1}{2}%
(\omega_{\boldsymbol{v}}\mathcal{C-C}\omega_{\boldsymbol{v}})$ is the
commutator of sections of the Clifford bundle.

\begin{remark}
\label{vcommu}Note for future reference that if $v=\boldsymbol{g}%
(\boldsymbol{v}\mathbf{,}$ $)$ then
\begin{equation}
v\times\mathcal{C}=v\lrcorner\mathcal{C}. \label{code0}%
\end{equation}
Also take notice that
\begin{equation}
\nabla_{\boldsymbol{v}}\mathcal{C}=\boldsymbol{v}\cdot\overset{\triangledown
}{\mathbf{\partial}} \label{code1}%
\end{equation}

\end{remark}

\begin{definition}
The form curvature operator is the mapping\footnote{As well known the
curvature operator of a general connection $\nabla$ is the mapping
$\boldsymbol{\rho}:\sec\mathbf{(}TM\times TM)\rightarrow\mathrm{End}TM$,
$\boldsymbol{\rho}(\boldsymbol{u}\mathbf{,}\boldsymbol{v})=[\nabla
_{\boldsymbol{u}},\nabla_{\boldsymbol{v}}]-\nabla_{\lbrack\boldsymbol{u}%
\mathbf{,}\boldsymbol{v}]}$.}
\end{definition}

\begin{gather*}
\boldsymbol{\rho}:\sec\mathbf{(}\bigwedge\nolimits^{1}T^{\ast}M\times
\bigwedge\nolimits^{1}T^{\ast}M)\rightarrow\mathrm{End}\bigwedge
\nolimits^{1}T^{\ast}M,\\
\boldsymbol{\rho}(u,v)=[u\cdot\overset{\triangledown}{\mathbf{\partial}%
},v\cdot\overset{\triangledown}{\mathbf{\partial}}]-%
\bra
u,v%
\ket
\cdot\overset{\triangledown}{\mathbf{\partial}}\\
=[\nabla_{\boldsymbol{u}},\nabla_{\boldsymbol{v}}]-\nabla_{\lbrack
\boldsymbol{u}\mathbf{,}\boldsymbol{v}\mathbf{]}}%
\end{gather*}
with $u=\boldsymbol{g}(\boldsymbol{u}\mathbf{,}$ $)$, $v=\boldsymbol{g}%
(\boldsymbol{v}\mathbf{,}$ $)$, $\boldsymbol{u,v}\in\sec TU\subset\sec TM$

\begin{definition}
The form curvature extensor is the mapping
\end{definition}

\begin{gather*}
\mathfrak{\rho}:\sec\mathbf{(}\bigwedge\nolimits^{1}T^{\ast}M\times
\bigwedge\nolimits^{1}T^{\ast}M\times\bigwedge\nolimits^{1}T^{\ast
}M)\rightarrow\sec\bigwedge\nolimits^{1}T^{\ast}M,\\
\mathfrak{\rho}(u,v,w)=[u\cdot\overset{\triangledown}{\mathbf{\partial}%
},v\cdot\overset{\triangledown}{\mathbf{\partial}}]w-%
\bra
u,v%
\ket
\cdot\overset{\triangledown}{\mathbf{\partial}}w\\
=[\nabla_{\boldsymbol{u}},\nabla_{\boldsymbol{v}}]w-\nabla_{\lbrack
\boldsymbol{u}\mathbf{,}\boldsymbol{v}\mathbf{]}}w
\end{gather*}
with $u=\boldsymbol{g}(\boldsymbol{u}\mathbf{,}$ $)$, $v=\boldsymbol{g}%
(\boldsymbol{v}\mathbf{,}$ $)$, $w=\boldsymbol{g}(\boldsymbol{w}\mathbf{,}$
$)$, $\boldsymbol{u,v,w}\in\sec TU\subset TM$

It is obvious that for\ any Riemann-Cartan connection we have%
\begin{equation}
\mathfrak{\rho}(u,v,w)=-\mathfrak{\rho}(v,u,w), \label{pc1}%
\end{equation}

One can easily verify that for a Levi-Civita connection we have%
\begin{equation}
\mathfrak{\rho}(u,v,w)+\mathfrak{\rho}(v,w,u)+\mathfrak{\rho}(w,u,v)=0.
\label{pc2}%
\end{equation}
Note however that Eq.(\ref{pc2}) is not true for a general connection.

\begin{definition}
The mapping%
\begin{align}
\mathbf{R}  &  :\sec\mathbf{(}\bigwedge\nolimits^{1}T^{\ast}M)^{4}%
\rightarrow\sec\bigwedge\nolimits^{0}T^{\ast}M,\nonumber\\
\mathbf{R}(a,b,c,w)  &  =-\mathfrak{\rho}(a,b,c)\cdot w, \label{curv1}%
\end{align}
with $a=\boldsymbol{g}(\boldsymbol{a}\mathbf{,}$ $)$, $b=\boldsymbol{g}%
(\boldsymbol{b}\mathbf{,}$ $)$, $c=\boldsymbol{g}(\boldsymbol{c}\mathbf{,}$
$)$ $w=\boldsymbol{g}(\boldsymbol{w}\mathbf{,}$ $)$ and $\boldsymbol{u,v,w,c}%
\in\sec TU\subset TM$ is called the curvature tensor.
\end{definition}

One can show immediately that for the connection $\nabla$
\begin{equation}
\mathbf{R}(a,b,c,w)=-\mathbf{R}(b,a,c,w), \label{curv2}%
\end{equation}%
\begin{equation}
\mathbf{R}(a,b,c,w)=-\mathbf{R}(a,b,w,c), \label{curv3}%
\end{equation}
and that for a Levi-Civita connection
\begin{equation}
\mathbf{R}(a,b,c,w)=\mathbf{R}(c,w,a,b), \label{curv4}%
\end{equation}%
\begin{equation}
\mathbf{R}(a,b,c,w)+\mathbf{R}(b,c,a,w)+\mathbf{R}(c,a,b,w)=0, \label{curv5}%
\end{equation}
Equation \ref{curv5} is known as the first Bianchi identity.

\begin{proposition}
There exists a smooth $(2$-$2)$-extensor field,%
\begin{gather}
\mathfrak{R:}\sec%
{\textstyle\bigwedge\nolimits^{2}}
T^{\ast}M\rightarrow%
{\textstyle\bigwedge\nolimits^{2}}
T^{\ast}M,\nonumber\\
B\mapsto\mathfrak{R}(B) \label{curbiform}%
\end{gather}
called the curvature biform such that for any $a,v,c,d\in\sec%
{\textstyle\bigwedge\nolimits^{1}}
T^{\ast}M$ we have
\begin{equation}
\mathbf{R}(a,b,c,d)=\mathfrak{R}(a\wedge b)\cdot(c\wedge d)=-(c\wedge
d)\lrcorner\mathfrak{R}(a\wedge b) \label{curv6}%
\end{equation}
Such $B\mapsto\mathfrak{R}(B)$ is given by
\begin{equation}
\mathfrak{R}(B)=-\frac{1}{4}B\cdot(\partial_{a}\wedge\partial_{b})\partial
_{c}\wedge\partial_{d}\mathfrak{\rho}(a,b,c)\cdot d, \label{curv7}%
\end{equation}
and\ we also have%
\begin{equation}
\mathfrak{R}(a\wedge b)=-\frac{1}{2}\partial_{c}\wedge\partial_{d}%
\mathfrak{\rho}(a,b,c)\cdot d. \label{curv8}%
\end{equation}

\end{proposition}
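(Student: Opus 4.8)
The plan is to read Eq.(\ref{curv7}) as the \emph{definition} of the candidate extensor and then verify that it reproduces the curvature tensor through Eq.(\ref{curv6}), obtaining the reduced expression Eq.(\ref{curv8}) en route. Existence and smoothness are essentially free: Eq.(\ref{curv7}) is manifestly linear in $B$, its output lands in $\bigwedge\nolimits^{2}T^{\ast}M$ because the inner derivative $\partial_{c}\wedge\partial_{d}$ contributes a biform factor, and its only ingredients besides $B$ are the smooth scalars $\mathfrak{\rho}(\theta_{\mathbf{i}},\theta_{\mathbf{j}},\theta_{\mathbf{k}})\cdot\theta_{\mathbf{l}}$, i.e. the components of the Riemann tensor of the smooth connection $\nabla$. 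So the real content is the two algebraic identities, and the only genuine labour is bookkeeping of the nested multiform derivatives and their combinatorial factors; note that only the pair-antisymmetries (curv2)--(curv3), valid for \emph{any} Riemann-Cartan $\nabla$, are needed, so no use is made of (curv4)--(curv5).

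First I would evaluate the fixed multiform obtained by letting the derivatives act on $\mathfrak{\rho}(a,b,c)\cdot d$, working from the inside out exactly as in the verification of $\mathfrak{t}(a\wedge b)=\tau(a,b)$ carried out just after Eq.(\ref{t2}). Applying Eq.(\ref{du2}) to the $(c,d)$-slots gives $\partial_{c}\wedge\partial_{d}\,\mathfrak{\rho}(a,b,c)\cdot d=(\theta^{\mathbf{k}}\wedge\theta^{\mathbf{l}})\,\mathfrak{\rho}(a,b,\theta_{\mathbf{k}})\cdot\theta_{\mathbf{l}}$, a $2$-form depending bilinearly and antisymmetrically on $(a,b)$; applying Eq.(\ref{du2}) again to the $(a,b)$-slots produces the leading biform $\theta^{\mathbf{i}}\wedge\theta^{\mathbf{j}}$ — the factor reserved for contraction by $B$, just as $\theta^{\mathbf{k}}\wedge\theta^{\mathbf{l}}$ is reserved in the torsion formula Eq.(\ref{ten2}) — multiplying $(\theta^{\mathbf{k}}\wedge\theta^{\mathbf{l}})\,\mathfrak{\rho}(\theta_{\mathbf{i}},\theta_{\mathbf{j}},\theta_{\mathbf{k}})\cdot\theta_{\mathbf{l}}$.

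Next I would set $B=a\wedge b$ and carry out the contraction. Expanding $(a\wedge b)\cdot(\theta^{\mathbf{i}}\wedge\theta^{\mathbf{j}})$ and summing against the reciprocal frame, the antisymmetry of $\mathfrak{\rho}(a,b,\,\cdot\,)$ in $(a,b)$ (Eq.(\ref{pc1}), equivalently Eq.(\ref{curv2})) collapses the double sum over $(\mathbf{i},\mathbf{j})$ to $2\,\mathfrak{\rho}(a,b,\theta_{\mathbf{k}})\cdot\theta_{\mathbf{l}}$; together with the prefactor $-\tfrac14$ this yields Eq.(\ref{curv8}). Dotting Eq.(\ref{curv8}) with $c\wedge d$ and expanding $(\theta^{\mathbf{k}}\wedge\theta^{\mathbf{l}})\cdot(c\wedge d)$ the same way, the antisymmetry $\mathbf{R}(a,b,c,w)=-\mathbf{R}(a,b,w,c)$ of Eq.(\ref{curv3}) supplies a \emph{second} factor of $2$ that cancels the remaining $-\tfrac12$, leaving $-\mathfrak{\rho}(a,b,c)\cdot d=\mathbf{R}(a,b,c,d)$ by Eq.(\ref{curv1}); this is the first equality of Eq.(\ref{curv6}). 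The second equality $\mathfrak{R}(a\wedge b)\cdot(c\wedge d)=-(c\wedge d)\lrcorner\mathfrak{R}(a\wedge b)$ is then a purely algebraic statement relating the scalar product and the left contraction of two biforms, a special case of Eq.(\ref{pcomu}) once the reversion implicit in $\cdot$ is accounted for.

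Finally, well-definedness of $B\mapsto\mathfrak{R}(B)$ on all of $\bigwedge\nolimits^{2}T^{\ast}M$, not merely on decomposable biforms, is automatic since Eq.(\ref{curv7}) is linear in $B$; conversely the antisymmetry in $(a,b)$ guarantees that the value on a simple biform depends only on $a\wedge b$, so the two descriptions are consistent. I expect the main obstacle to be purely combinatorial: keeping the two derivative contractions attached to the correct slots and tracking the two independent factors of $2$ (one from each antisymmetrization) so that the normalizations $-\tfrac14$ in Eq.(\ref{curv7}) and $-\tfrac12$ in Eq.(\ref{curv8}) are mutually consistent with Eq.(\ref{curv6}). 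The only other delicate point is the sign convention distinguishing $\cdot$ from $\lrcorner$ on bivectors, which is settled by Eq.(\ref{pcomu}).
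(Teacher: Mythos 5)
Your proposal is correct and follows essentially the same route as the paper: take Eq.(\ref{curv7}) as the definition, contract with $B=a\wedge b$ using the $(a,b)$-antisymmetry Eq.(\ref{pc1}) to collapse $-\tfrac14$ to $-\tfrac12$ and obtain Eq.(\ref{curv8}), then contract with $c\wedge d$ using the antisymmetry Eq.(\ref{curv3}) to absorb the remaining $-\tfrac12$ and recover $\mathbf{R}(a,b,c,d)$, with the $\cdot$ versus $\lrcorner$ relation handled as a convention-dependent algebraic identity. The paper additionally writes out the converse implication Eq.(\ref{curv8})$\Rightarrow$Eq.(\ref{curv7}) via the expansion $B=\tfrac12 B\cdot(\partial_a\wedge\partial_b)\,a\wedge b$, which your linearity/well-definedness remark covers in substance.
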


\begin{proof}
First, we verify that Eq.(\ref{curv7}) and Eq.(\ref{curv8}) are indeed
equivalent. Indeed, Eq.(\ref{curv7}) implies Eq.(\ref{curv8}) since we have%
\begin{align*}
\mathfrak{R}(a\wedge b)  &  =-\frac{1}{4}(a\wedge b)\cdot(\partial_{p}%
\wedge\partial_{q})\partial_{\mathbf{c}}\wedge\partial_{\mathbf{d}%
}\mathfrak{\rho}(p,q,c)\cdot d\\
&  =-\frac{1}{4}\det\left[
\begin{array}
[c]{cc}%
a\cdot\partial_{p} & a\cdot\partial_{q}\\
b\cdot\partial_{p} & b\cdot\partial_{q}%
\end{array}
\right]  \partial_{\mathbf{c}}\wedge\partial_{\mathbf{d}}\mathfrak{\rho
}(p,q,c)\cdot d\\
&  =-\frac{1}{4}\left(  a\cdot\partial_{p}b\cdot\partial_{q}-a\cdot
\partial_{q}b\cdot\partial_{p}\right)  \partial_{\mathbf{c}}\wedge
\partial_{\mathbf{d}}\mathfrak{\rho}(p,q,c)\cdot d\\
&  =-\frac{1}{2}\left(  a\cdot\partial_{p}b\cdot\partial_{q}\right)
\partial_{\mathbf{c}}\wedge\partial_{\mathbf{d}}\mathfrak{\rho}(p,q,c)\cdot
d\\
&  =-\frac{1}{2}\partial_{c}\wedge\partial_{d}\mathfrak{\rho}(a,b,c)\cdot
d\text{ }.
\end{align*}
Also, Eq.(\ref{curv8}) implies Eq.(\ref{curv7}) since taking into account that%
\[
B=\frac{1}{2}B\cdot(\partial_{a}\wedge\partial_{b})a\wedge b
\]
we have%
\begin{align*}
\mathfrak{R}(B)  &  =\mathfrak{R(}\frac{1}{2}B\cdot(\partial_{a}\wedge
\partial_{b})a\wedge b)\\
&  =\frac{1}{2}B\cdot(\partial_{a}\wedge\partial_{b})\mathfrak{R}(a\wedge b)\\
&  =-\frac{1}{4}B\cdot(\partial_{a}\wedge\partial_{b})\partial_{c}%
\wedge\partial_{d}\mathfrak{\rho}(a,b,c)\cdot d\text{ }.
\end{align*}
Now, we show the validity of Eq.(\ref{curv6}). We have taking into account
Eq.(\ref{curv8})
\begin{align*}
\mathfrak{R}(a\wedge b)\lrcorner(c\wedge d)  &  =-\frac{1}{2}(c\wedge
d)\cdot(\partial_{p}\wedge\partial_{q})\mathfrak{\rho}(a,b,p)\cdot q\\
&  =-\frac{1}{2}\det\left[
\begin{array}
[c]{cc}%
c\cdot\partial_{p} & c\cdot\partial_{q}\\
d\cdot\partial_{p} & d\cdot\partial_{q}%
\end{array}
\right]  \mathfrak{\rho}(a,b,p)\cdot q\\
&  =-\frac{1}{2}\left(  c\cdot\partial_{p}d\cdot\partial_{q}-c\cdot
\partial_{q}d\cdot\partial_{p}\right)  \mathfrak{\rho}(a,b,p)\cdot q\\
&  =-c\cdot\partial_{p}d\cdot\partial_{q}\mathfrak{\rho}(a,b,p)\cdot q\\
&  =-\mathfrak{\rho}(a,b,c)\cdot d=\mathbf{R}(a,b,c,d),
\end{align*}
and the proposition is proved.
\end{proof}

\begin{proposition}
The curvature biform $\mathfrak{R}(u\wedge v)$ is given by\emph{\footnote{Note
that in Eq.(\ref{curvature}) $[\boldsymbol{u}\mathbf{,}\boldsymbol{v}]$ is the
standard Lie bracket of vector fields $\boldsymbol{u}$\textbf{ }and\textbf{
}$\boldsymbol{v}$ and $%
\bbra
u,v%
\bket
:=u\cdot\boldsymbol{\partial}v-v\cdot\boldsymbol{\partial}u$ is the commutator
of the 1-form fields $u$\textbf{ }and\textbf{ }$v$. More details if necessary
may be found in \cite{rodcap2007}.}}
\begin{equation}
\mathfrak{R}(u\wedge v)=u\cdot\eth\omega~(v)-v\cdot\eth\omega~(u)+\omega
(u)\times\omega(v). \label{rie1}%
\end{equation}

\end{proposition}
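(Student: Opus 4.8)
The plan is to read off $\mathfrak{R}(u\wedge v)$ from the way the curvature operator $\boldsymbol{\rho}(u,v)=[\nabla_{\boldsymbol{u}},\nabla_{\boldsymbol{v}}]-\nabla_{[\boldsymbol{u},\boldsymbol{v}]}$ acts on Clifford fields, using the covariant-derivative formula $\nabla_{\boldsymbol{v}}\mathcal{C}=\eth_{\boldsymbol{v}}\mathcal{C}+\omega_{\boldsymbol{v}}\times\mathcal{C}$ of Eq.(\ref{code}). First I would record the bridge between $\mathfrak{\rho}$ and $\mathfrak{R}$: for one-forms $w$ I claim $\boldsymbol{\rho}(u,v)w=\mathfrak{R}(u\wedge v)\times w$. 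This follows from Eq.(\ref{curv8})/(\ref{curv6}) together with Eq.(\ref{curv1}): using $B\times c=-c\lrcorner B$ (Eq.(\ref{pcomu2})) and the contraction--wedge adjunction $(c\lrcorner B)\cdot q=B\cdot(c\wedge q)$, one gets $(\mathfrak{R}(u\wedge v)\times c)\cdot q=-\mathfrak{R}(u\wedge v)\cdot(c\wedge q)=-\mathbf{R}(u,v,c,q)=\mathfrak{\rho}(u,v,c)\cdot q$ for every $q$, whence $\mathfrak{R}(u\wedge v)\times c=\mathfrak{\rho}(u,v,c)=\boldsymbol{\rho}(u,v)c$.

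Next I would expand the commutator. Substituting Eq.(\ref{code}) twice and using that the Pfaff derivative is a (scalar-coefficient) derivation of the Clifford product, hence $\eth_{\boldsymbol{u}}(\omega_{\boldsymbol{v}}\times\mathcal{C})=(\eth_{\boldsymbol{u}}\omega_{\boldsymbol{v}})\times\mathcal{C}+\omega_{\boldsymbol{v}}\times\eth_{\boldsymbol{u}}\mathcal{C}$, three things happen: the mixed terms $\omega_{\boldsymbol{v}}\times\eth_{\boldsymbol{u}}\mathcal{C}+\omega_{\boldsymbol{u}}\times\eth_{\boldsymbol{v}}\mathcal{C}$ are symmetric in $u,v$ and cancel; the pure Pfaff part obeys $[\eth_{\boldsymbol{u}},\eth_{\boldsymbol{v}}]=\eth_{[\boldsymbol{u},\boldsymbol{v}]}$ (because $\eth$ only differentiates components in a fixed frame) and is exactly killed by the $\eth_{[\boldsymbol{u},\boldsymbol{v}]}$ inside $\nabla_{[\boldsymbol{u},\boldsymbol{v}]}$; and the quadratic part collapses by the Jacobi identity for the commutator $\times$, namely $\omega_{\boldsymbol{u}}\times(\omega_{\boldsymbol{v}}\times\mathcal{C})-\omega_{\boldsymbol{v}}\times(\omega_{\boldsymbol{u}}\times\mathcal{C})=(\omega_{\boldsymbol{u}}\times\omega_{\boldsymbol{v}})\times\mathcal{C}$. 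What remains is $\boldsymbol{\rho}(u,v)\mathcal{C}=\big(\eth_{\boldsymbol{u}}\omega_{\boldsymbol{v}}-\eth_{\boldsymbol{v}}\omega_{\boldsymbol{u}}-\omega_{[\boldsymbol{u},\boldsymbol{v}]}+\omega_{\boldsymbol{u}}\times\omega_{\boldsymbol{v}}\big)\times\mathcal{C}$.

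Comparing with the first step and taking $\mathcal{C}=w$ a one-form, the identity $\mathfrak{R}(u\wedge v)\times w=B\times w$ holds for all $w$, where $B$ is the 2-form in parentheses (note $\omega_{\boldsymbol{u}}\times\omega_{\boldsymbol{v}}$ is again a 2-form, since bivectors close under the commutator). Since $B\times w=B\llcorner w$ and a 2-form is determined by its right contractions with all one-forms, I conclude $\mathfrak{R}(u\wedge v)=\eth_{\boldsymbol{u}}\omega_{\boldsymbol{v}}-\eth_{\boldsymbol{v}}\omega_{\boldsymbol{u}}-\omega_{[\boldsymbol{u},\boldsymbol{v}]}+\omega_{\boldsymbol{u}}\times\omega_{\boldsymbol{v}}$. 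Finally, using $\omega(u)=\omega_{\boldsymbol{u}}$ (Eq.(\ref{om3})) for the quadratic term and the definition (\ref{formderiv1}) of $\eth$ for the linear part, I would rewrite $\eth_{\boldsymbol{u}}\omega_{\boldsymbol{v}}-\eth_{\boldsymbol{v}}\omega_{\boldsymbol{u}}-\omega_{[\boldsymbol{u},\boldsymbol{v}]}$ as $u\cdot\eth\omega(v)-v\cdot\eth\omega(u)$, giving Eq.(\ref{rie1}).

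The hard part is this last rewriting. Writing $\omega_{\boldsymbol{v}}=v^{\mathbf{c}}\,\omega(\theta_{\mathbf{c}})$ and applying the Leibniz rule, the terms $\boldsymbol{u}(v^{\mathbf{c}})$ and $\boldsymbol{v}(u^{\mathbf{c}})$ produced by $\eth_{\boldsymbol{u}}\omega_{\boldsymbol{v}}-\eth_{\boldsymbol{v}}\omega_{\boldsymbol{u}}$ cancel against the corresponding pieces of $\omega_{[\boldsymbol{u},\boldsymbol{v}]}$, but a structure-function remainder $-u^{\mathbf{a}}v^{\mathbf{b}}c_{\cdot\mathbf{ab}}^{\mathbf{c}}\,\omega(\theta_{\mathbf{c}})$ survives; so the compact notation $u\cdot\eth\omega(v)-v\cdot\eth\omega(u)$ must be read as the frame-independent object carrying exactly this correction (the antisymmetrised directional form-derivative of the biform-valued extensor $\omega$). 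I expect the cleanest way to pin this down rigorously is to verify the identity on the basis, i.e.\ for $u=\theta_{\mathbf{a}}$, $v=\theta_{\mathbf{b}}$ and $\boldsymbol{u}=\boldsymbol{e}_{\mathbf{a}}$, $\boldsymbol{v}=\boldsymbol{e}_{\mathbf{b}}$, where the statement reduces to the Clifford-bundle form of Cartan's second structure equation, $\mathfrak{R}(\theta_{\mathbf{a}}\wedge\theta_{\mathbf{b}})=\eth_{\boldsymbol{e}_{\mathbf{a}}}\omega(\theta_{\mathbf{b}})-\eth_{\boldsymbol{e}_{\mathbf{b}}}\omega(\theta_{\mathbf{a}})-c_{\cdot\mathbf{ab}}^{\mathbf{c}}\,\omega(\theta_{\mathbf{c}})+\omega(\theta_{\mathbf{a}})\times\omega(\theta_{\mathbf{b}})$, and then to invoke bilinearity together with the frame-independence of $\eth$ recorded in Eq.(\ref{pffaf1}).
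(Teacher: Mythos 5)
Your proof is correct, and it rests on the same two pillars as the paper's argument --- the covariant-derivative formula Eq.(\ref{code}) and the Jacobi identity for the Clifford commutator $\times$ --- but the route is organized differently enough to be worth contrasting. The paper establishes the key identity $([\nabla_{\boldsymbol{u}},\nabla_{\boldsymbol{v}}]-\nabla_{[\boldsymbol{u},\boldsymbol{v}]})\mathcal{C}=\frac{1}{2}[\mathfrak{R}(u\wedge v),\mathcal{C}]$ by first verifying it on $1$-forms and then running a finite induction over form degree (its step (b)), and only afterwards closes the loop with the curvature tensor by checking $(z\wedge t)\cdot\mathfrak{R}(u\wedge v)=\mathbf{R}(u,v,t,z)$ (its step (c)). You obtain the same commutator identity for arbitrary $\mathcal{C}$ in one stroke by observing that $\eth_{\boldsymbol{u}}$ is a derivation of the Clifford product with $[\eth_{\boldsymbol{u}},\eth_{\boldsymbol{v}}]=\eth_{[\boldsymbol{u},\boldsymbol{v}]}$ (which is exactly the content of the paper's Eq.(\ref{p22})), so the induction is eliminated; and you import the bridge $\mathfrak{R}(u\wedge v)\times w=\mathfrak{\rho}(u,v,w)$ from the preceding proposition (Eqs.(\ref{curv1}), (\ref{curv6}), (\ref{curv8})) rather than re-deriving it. What your route buys is brevity and a clean separation of the algebra (bivectors close under $\times$, $\eth$ is a flat derivation) from the geometry; the cost is that you must take Eq.(\ref{code}) as given for general $\mathcal{C}$, which the paper's induction in effect re-proves. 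Your final rewriting of $\eth_{\boldsymbol{u}}\omega_{\boldsymbol{v}}-\eth_{\boldsymbol{v}}\omega_{\boldsymbol{u}}-\omega_{[\boldsymbol{u},\boldsymbol{v}]}$ as $u\cdot\eth\omega~(v)-v\cdot\eth\omega~(u)$, which you flag as the delicate point, is precisely the paper's step (a): the definition $\left(u\cdot\eth\omega\right)(v):=u\cdot\eth(\omega(v))-\omega(u\cdot\eth v)$ supplies exactly the structure-function correction you describe, so no separate basis verification is needed. Finally, your biform agrees with the paper's Eq.(\ref{curvature}), since $\nabla_{\boldsymbol{u}}\boldsymbol{\omega}_{\boldsymbol{v}}-\nabla_{\boldsymbol{v}}\boldsymbol{\omega}_{\boldsymbol{u}}-\frac{1}{2}[\boldsymbol{\omega}_{\boldsymbol{u}},\boldsymbol{\omega}_{\boldsymbol{v}}]=\eth_{\boldsymbol{u}}\boldsymbol{\omega}_{\boldsymbol{v}}-\eth_{\boldsymbol{v}}\boldsymbol{\omega}_{\boldsymbol{u}}+\boldsymbol{\omega}_{\boldsymbol{u}}\times\boldsymbol{\omega}_{\boldsymbol{v}}$.
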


\begin{proof}
The proof is given in three steps \textbf{(a), (b) }and\textbf{ (c)}

(\textbf{a) }We first show that Eq.(\ref{rie1}) can be written as
\begin{align}
\mathfrak{R}(u\wedge v)  &  =u\cdot\overset{\triangledown}{\mathbf{\partial}%
}\boldsymbol{\omega}_{\boldsymbol{v}}-v\cdot\overset{\triangledown
}{\mathbf{\partial}}\boldsymbol{\omega}_{\boldsymbol{u}}-\frac{1}%
{2}[\boldsymbol{\omega}_{\boldsymbol{u}},\boldsymbol{\omega}_{\boldsymbol{v}%
}]-\boldsymbol{\omega}_{[\boldsymbol{u},\boldsymbol{v}]}\nonumber\\
&  =\nabla_{\boldsymbol{u}}\boldsymbol{\omega}_{\boldsymbol{v}}-\nabla
_{\boldsymbol{v}}\boldsymbol{\omega}_{\boldsymbol{u}}-\frac{1}{2}%
[\boldsymbol{\omega}_{\boldsymbol{u}},\boldsymbol{\omega}_{\boldsymbol{v}%
}]-\boldsymbol{\omega}_{[\boldsymbol{u},\boldsymbol{v}]}, \label{curvature}%
\end{align}
with $u=\boldsymbol{g}(\boldsymbol{u}\mathbf{,}$ $)$, $v=\boldsymbol{g}%
(\boldsymbol{v}\mathbf{,}$ $)$. Indeed,we have
\begin{equation}
u\cdot\overset{\triangledown}{\mathbf{\partial}}{(}\omega{(}v{))}=u\cdot
\eth{(}\omega{(}v{))+}\frac{1}{2}{[}\omega{(}u{),}\omega{(}v{)].}%
\end{equation}
and recalling the definition of the derivative of an extensor field, it is:%
\begin{equation}
\left(  u\cdot\eth\omega\right)  {{(}}v{{)}}\equiv u\cdot\eth\omega~{{{(}}%
}v{{{)}}}:=u{\cdot\eth{(}}\omega{{(}}v{{))}-}\omega{(}u\cdot\eth v{).}%
\end{equation}
we have,%
\begin{align}
u\cdot\eth{(}\omega{(}v{))-}v\cdot\eth{{(}}\omega{{(}}u{{))}}  &  =u\cdot
\eth\text{ }\omega{{(}}v{{)}-}v\cdot\eth\text{ }\omega{(}u{)+}\omega{(}%
u{\cdot}\eth v{)-}\omega{(}v{\cdot}\eth u{)}\nonumber\\
&  =u\cdot\eth\text{ }\omega{{(}}v{{)}-}v\cdot\eth\text{ }\omega{(}u{)+}%
\omega{(%
\bra
}u,v{%
\ket
)}\nonumber\\
&  =u\cdot\eth\text{ }\boldsymbol{\omega}{_{\boldsymbol{v}}}-v\cdot\eth\text{
}\boldsymbol{\omega}_{\boldsymbol{u}}+\boldsymbol{\omega}_{[\boldsymbol{u}%
,\boldsymbol{v}]},
\end{align}
and using the above equations in Eq.(\ref{rie1}) we arrive at
Eq.(\ref{curvature})

\textbf{(b)} Next we show (by finite induction) that for any $\mathcal{C}%
\in\sec\mathcal{C\ell}(M,\mathtt{g})$ we have
\begin{equation}
([\nabla_{\boldsymbol{u}},\nabla_{\boldsymbol{v}}]-\nabla_{\lbrack
\boldsymbol{u},\boldsymbol{v}]})\mathcal{C}=\frac{1}{2}[\mathfrak{R}(u\wedge
v),\mathcal{C}]\text{,} \label{exercise}%
\end{equation}
with $\mathfrak{R}(u\wedge v)$ given by Eq.(\ref{curvature}). Given that any
$\mathcal{C}\in\sec\mathcal{C\ell}(M,\mathtt{g})$ is a sum of nonhomogeneous
differential forms, i.e. $\mathcal{C}=%
{\textstyle\sum\nolimits_{p=0}^{n}}
\mathcal{C}_{p}$ with $\mathcal{C}_{p}\in\sec%
{\textstyle\bigwedge\nolimits^{r}}
T^{\ast}M\hookrightarrow\sec\mathcal{C\ell}(M,\mathtt{g})$ and taking into
account that $\mathcal{C}_{p}=\frac{1}{r!}\mathcal{C}_{\mathbf{i}_{1}%
\cdots\mathbf{i}_{p}}\theta^{\mathbf{i}_{1}}\cdots\theta^{\mathbf{i}_{p}}$ it
is enough to verify the formula for $p$-forms. We first verify the validity of
the formula for a $1$-form $\theta^{i}\in\sec\bigwedge\nolimits^{1}T^{\ast
}M\hookrightarrow\sec\mathcal{C\ell}(M,\mathtt{g})$. Using Eq.(\ref{curvature}%
) and the Jacobi identity%
\begin{equation}%
\bra
\boldsymbol{\omega}{_{\boldsymbol{v}}},%
\bra
\boldsymbol{\omega}{_{\boldsymbol{u}},}\theta^{\mathbf{i}}%
\ket
\ket
+%
\bra
\boldsymbol{\omega}{_{\boldsymbol{u}}},%
\bra
\theta^{\mathbf{i}}{,\boldsymbol{\omega}_{\boldsymbol{v}}}%
\ket
\ket
+%
\bra
\theta^{\mathbf{i}},%
\bra
\boldsymbol{\omega}_{\boldsymbol{v}},\boldsymbol{\omega}_{\boldsymbol{u}}%
\ket
\ket
=0,
\end{equation}
we have that%
\begin{align}
&  \frac{1}{2}[\mathfrak{R}(u\wedge v),\theta^{\mathbf{i}}]\label{caset}\\
&  =\frac{1}{2}\left\{  (\nabla_{\boldsymbol{u}}\omega{_{\boldsymbol{v}}%
}\theta^{\mathbf{i}}-\theta^{\mathbf{i}}\nabla_{\boldsymbol{u}}\omega
{_{\boldsymbol{v}}+}\frac{1}{2}{[[}\boldsymbol{\omega}{{_{\boldsymbol{v}}%
},\boldsymbol{\omega}{_{\boldsymbol{u}}}],\theta^{\mathbf{i}}]-(}%
\nabla_{\boldsymbol{v}}\boldsymbol{\omega}{_{\boldsymbol{u}}}\theta
^{\mathbf{i}}+\theta^{\mathbf{i}}\nabla_{\boldsymbol{v}}\boldsymbol{\omega
}{_{\boldsymbol{u}}-[\boldsymbol{\omega}{_{[\boldsymbol{u}\mathbf{,}%
\boldsymbol{v}]},}\theta^{\mathbf{i}}]}\right\} \nonumber\\
&  =\frac{1}{2}\left\{  \left[  \nabla_{\boldsymbol{u}}\boldsymbol{\omega
}{_{\boldsymbol{v}}},\theta^{\mathbf{i}}\right]  +\frac{1}{2}%
[\boldsymbol{\omega}{_{\boldsymbol{v}},[\boldsymbol{\omega}_{\boldsymbol{u}%
},\theta^{\mathbf{i}}]]-}\left[  \nabla_{\boldsymbol{v}}\boldsymbol{\omega
}{_{\boldsymbol{v}}},\theta^{\mathbf{i}}\right]  -\frac{1}{2}%
[\boldsymbol{\omega}{_{\boldsymbol{u}},[\boldsymbol{\omega}_{\boldsymbol{v}%
},\theta^{\mathbf{i}}]]-[\boldsymbol{\omega}_{[\boldsymbol{u},\boldsymbol{v}%
]},\theta^{\mathbf{i}}]}\right\} \nonumber\\
&  =\nabla_{\boldsymbol{u}}{{\nabla}_{\boldsymbol{v}}}\theta^{\mathbf{i}%
}-\nabla_{\boldsymbol{v}}{{\nabla}_{\boldsymbol{u}}}\theta^{\mathbf{i}}%
-\nabla_{\lbrack\boldsymbol{u}\mathbf{,}\boldsymbol{v}]}\theta^{\mathbf{i}}.
\end{align}
Now, suppose the formula is valid for $p$-forms. Let us calculate the first
member of Eq.(\ref{exercise}) for the $(r+1)$-form $\theta^{\mathbf{i}%
_{1}\cdots\mathbf{i}_{r+1}}=\theta^{\mathbf{i}_{1}}\theta^{\mathbf{i}_{2}%
}\cdots\theta^{\mathbf{i}_{r+1}}$. We have .
\begin{align}
&  \nabla_{\boldsymbol{u}}{{\nabla}_{\boldsymbol{v}}}(\theta^{\mathbf{i}%
_{1}\cdots\mathbf{i}_{r+1}})-\nabla_{\boldsymbol{v}}{{\nabla}_{\boldsymbol{u}%
}}(\theta^{\mathbf{i}_{1}\cdots\mathbf{i}_{r+1}})-\nabla_{\lbrack
\boldsymbol{u}\mathbf{,}\boldsymbol{v}]}(\theta^{\mathbf{i}_{1}\cdots
\mathbf{i}_{r+1}})\nonumber\\
&  =\nabla_{\boldsymbol{u}}((\nabla_{\boldsymbol{v}}{\theta^{\mathbf{i}_{1}}%
)}\theta^{\mathbf{i}_{2}{\cdots}\mathbf{i}_{r+1}}+{\theta^{\mathbf{i}_{1}%
}\nabla_{\boldsymbol{v}}}\theta^{\mathbf{i}_{2}{\cdots}\mathbf{i}_{r+1}%
})-\nabla_{\boldsymbol{v}}((\nabla_{\boldsymbol{u}}{\theta^{\mathbf{i}_{1}}%
)}\theta^{\mathbf{i}_{2}{\cdots}\mathbf{i}_{r+1}}+{\theta^{\mathbf{i}_{1}%
}\nabla_{\boldsymbol{u}}}\theta^{\mathbf{i}_{2}{\cdots}\mathbf{i}_{r+1}%
})\nonumber\\
&  -(\nabla_{\lbrack\boldsymbol{u}\mathbf{,}\boldsymbol{v}]}\theta
^{\mathbf{i}_{1}})\theta^{\mathbf{i}_{2}{\cdots}\mathbf{i}_{r+1}}%
)-\theta^{\mathbf{i}_{1}}\nabla_{\lbrack\boldsymbol{u}\mathbf{,}%
\boldsymbol{v}]}\theta^{\mathbf{i}_{2}{\cdots}\mathbf{i}_{r+1}}\nonumber\\
&  =(\nabla_{\boldsymbol{u}}{{\nabla}_{\boldsymbol{v}}\theta^{\mathbf{i}_{1}%
})}\theta^{\mathbf{i}_{2}{\cdots}\mathbf{i}_{r+1}}+{{\nabla}_{\boldsymbol{v}%
}\theta^{\mathbf{i}_{1}}}\nabla_{\boldsymbol{u}}\theta^{\mathbf{i}_{2}{\cdots
}\mathbf{i}_{r+1}}+\nabla_{\boldsymbol{u}}{\theta^{\mathbf{i}_{1}}{\nabla
}_{\boldsymbol{v}}}\theta^{\mathbf{i}_{2}{\cdots}\mathbf{i}_{r+1}}%
+{\theta^{\mathbf{i}_{1}}\nabla_{\boldsymbol{u}}{\nabla}_{\boldsymbol{v}}%
}\theta^{\mathbf{i}_{2}{\cdots}\mathbf{i}_{r+1}}\nonumber\\
&  -(\nabla_{\boldsymbol{v}}{{\nabla}_{\boldsymbol{u}}\theta^{\mathbf{i}_{1}%
})}\theta^{\mathbf{i}_{2}{\cdots}\mathbf{i}_{r+1}}-{{\nabla}_{\boldsymbol{u}%
}\theta^{\mathbf{i}_{1}}}\nabla_{\boldsymbol{v}}\theta^{\mathbf{i}_{2}{\cdots
}\mathbf{i}_{r+1}}-\nabla_{\boldsymbol{v}}{\theta^{\mathbf{i}_{1}}{\nabla
}_{\boldsymbol{u}}}\theta^{\mathbf{i}_{2}{\cdots}\mathbf{i}_{r+1}}%
-{\theta^{\mathbf{i}_{1}}}\nabla_{\boldsymbol{v}}{{\nabla}_{\boldsymbol{u}}%
}\theta^{\mathbf{i}_{2}{\cdots}\mathbf{i}_{r+1}}\nonumber\\
&  -({\nabla_{\lbrack\boldsymbol{u}\mathbf{,}\boldsymbol{v}]}}\theta
^{\mathbf{i}_{1}})\theta^{\mathbf{i}_{2}{\cdots}\mathbf{i}_{r+1}}%
)-\theta^{\mathbf{i}_{1}}{\nabla_{\lbrack\boldsymbol{u}\mathbf{,}%
\boldsymbol{v}]}}\theta^{\mathbf{i}_{2}{\cdots}\mathbf{i}_{r+1}}\nonumber\\
&  ={\theta^{\mathbf{i}_{1}}(\nabla_{\boldsymbol{u}}{{\nabla}_{\boldsymbol{v}%
}}\theta^{\mathbf{i}_{2}{\cdots}\mathbf{i}_{r+1}}-\nabla_{\boldsymbol{v}%
}{{\nabla}_{\boldsymbol{u}}}\theta^{\mathbf{i}_{2}{\cdots}\mathbf{i}_{r+1}%
}-\nabla_{\lbrack\boldsymbol{u}\mathbf{,}\boldsymbol{v}]}\theta^{\mathbf{i}%
_{2}{\cdots}\mathbf{i}_{r+1}})}\nonumber\\
&  {+(\nabla_{\boldsymbol{u}}{{\nabla}_{\boldsymbol{v}}\theta^{\mathbf{i}_{1}%
}-}\nabla_{\boldsymbol{v}}{{\nabla}_{\boldsymbol{u}}\theta^{\mathbf{i}_{1}}%
-}\nabla_{\lbrack\boldsymbol{u}\mathbf{,}\boldsymbol{v}]}\theta^{\mathbf{i}%
_{1}})}\theta^{\mathbf{i}_{2}{\cdots}\mathbf{i}_{r+1}}\nonumber\\
&  ={\theta^{\mathbf{i}_{1}}(\frac{1}{2}[\mathfrak{R}(u\wedge v),\theta
^{\mathbf{i}_{2}{\cdots}\mathbf{i}_{r+1}}])+(\frac{1}{2}[\mathfrak{R}(u\wedge
v),\theta^{\mathbf{i}_{1}}])}\theta^{\mathbf{i}_{2}{\cdots}\mathbf{i}_{r+1}%
}\nonumber\\
&  ={\frac{1}{2}[\mathfrak{R}(u\wedge v),\theta^{\mathbf{i}_{1}{\cdots
}\mathbf{i}_{r+1}}],} \label{pform}%
\end{align}
where the last line of Eq.(\ref{pform}) is the second member of
Eq.(\ref{exercise}) evaluated for ${\theta^{\mathbf{i}_{1}}\theta
^{\mathbf{i}_{2}}\cdots\theta^{\mathbf{i}_{r+1}}}$.

(\textbf{c}) Now, it remains to verify that
\begin{equation}
\mathbf{R(}u,v,t,z\mathbf{)}=\mathbf{(}t\wedge z)\cdot\mathfrak{R}(u\wedge v),
\label{rie2}%
\end{equation}
with $\mathfrak{R}(u\wedge v)$ given by Eq.(\ref{curvature}). Indeed, from a
well known identity, we have that for any $t,z\in\sec%
{\textstyle\bigwedge\nolimits^{1}}
T^{\ast}M$, and $\mathfrak{R}(u\wedge v)\in\sec%
{\textstyle\bigwedge\nolimits^{2}}
T^{\ast}M$\ it is
\begin{align*}
(z\wedge t)\cdot\mathfrak{R}(u\wedge v)  &  =\mathbf{-}z\lrcorner
(t\lrcorner\mathfrak{R}(u\wedge v))\\
&  =z\cdot(\mathfrak{R}(u\wedge v)\llcorner t)\\
&  =\frac{1}{2}z\cdot\lbrack\mathfrak{R}(u,v),t]\\
&  \overset{\text{Eq.(\ref{exercise})}}{=}z\cdot(\nabla_{\boldsymbol{u}%
}{{\nabla}_{\boldsymbol{v}}}t-\nabla_{\boldsymbol{v}}{{\nabla}_{\boldsymbol{u}%
}}t-\nabla_{\lbrack\boldsymbol{u}\mathbf{,}\boldsymbol{v}]}t)
\end{align*}
and the proposition is proved.
\end{proof}

In particular we have:%
\begin{align}
\mathbf{R(}u,v,z,t\mathbf{)}  &  =z_{\mathbf{c}}t^{\mathbf{d}}u^{\mathbf{a}%
}v^{\mathbf{b}}R_{\cdot\mathbf{dab}}^{\mathbf{c\cdot\cdot\cdot}},\nonumber\\
R_{\cdot\mathbf{cab}}^{\mathbf{d\cdot\cdot\cdot}}  &  =\boldsymbol{e}%
_{\mathbf{a}}(\omega_{\mathbf{\cdot bc}}^{\mathbf{d\cdot\cdot}}%
)-\boldsymbol{e}_{\mathbf{b}}(\omega_{\mathbf{\cdot ac}}^{\mathbf{d\cdot\cdot
}})+\omega_{\mathbf{\cdot ak}}^{\mathbf{d\cdot\cdot}}\omega_{\mathbf{\cdot
bc}}^{\mathbf{k\cdot\cdot}}-\omega_{\mathbf{\cdot bk}}^{\mathbf{d\cdot\cdot}%
}\omega_{\mathbf{\cdot ac}}^{\mathbf{k\cdot\cdot}}-c_{\mathbf{\cdot ab}%
}^{\mathbf{k\cdot\cdot}}\omega_{\mathbf{\cdot kc}}^{\mathbf{d\cdot\cdot}%
}\text{ }. \label{riemann impo}%
\end{align}

and
\begin{equation}
\mathbf{R(}\theta^{\mathbf{a}},\theta^{\mathbf{b}},\theta_{\mathbf{a}}%
,\theta_{\mathbf{b}}\mathbf{)=(}\theta^{\mathbf{a}}\wedge\theta^{\mathbf{b}%
})\cdot\mathfrak{R}(\theta_{\mathbf{a}}\wedge\theta_{\mathbf{b}})=R,
\label{curv scalar}%
\end{equation}
where $R$ is the curvature scalar.

\begin{proposition}
For any $v\in\sec\bigwedge\nolimits^{1}T^{\ast}M\hookrightarrow\sec
\mathcal{C}\ell(M,\mathtt{g})$
\begin{equation}
\lbrack\nabla_{\boldsymbol{e}_{\mathbf{a}}},\nabla_{\boldsymbol{e}%
_{\mathbf{b}}}]v=\mathfrak{R}(\theta_{\mathbf{a}}\wedge\theta_{\mathbf{b}%
})\llcorner v-(T_{\mathbf{ab}}^{\mathbf{c}}-\omega_{\mathbf{\cdot ab}%
}^{\mathbf{c\cdot\cdot}}+\omega_{\mathbf{\cdot ba}}^{\mathbf{c\cdot\cdot}%
})\nabla_{\boldsymbol{e}_{\mathbf{c}}}v. \label{commut ident 1}%
\end{equation}

\end{proposition}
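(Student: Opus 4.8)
The plan is to read the identity directly off the operator form of the curvature already established in Eq.(\ref{exercise}), and then to convert the resulting Clifford commutator into a left contraction. First I would rewrite the commutator of covariant derivatives along the frame fields as
\[
[\nabla_{\boldsymbol{e}_{\mathbf{a}}},\nabla_{\boldsymbol{e}_{\mathbf{b}}}]=\boldsymbol{\rho}(\theta_{\mathbf{a}},\theta_{\mathbf{b}})+\nabla_{[\boldsymbol{e}_{\mathbf{a}},\boldsymbol{e}_{\mathbf{b}}]},
\]
which is merely the definition of the form curvature operator $\boldsymbol{\rho}$ rearranged. Applying Eq.(\ref{exercise}) with $\mathcal{C}=v$ a $1$-form gives $\boldsymbol{\rho}(\theta_{\mathbf{a}},\theta_{\mathbf{b}})v=\frac{1}{2}[\mathfrak{R}(\theta_{\mathbf{a}}\wedge\theta_{\mathbf{b}}),v]$, so the whole computation reduces to handling these two pieces separately.

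The key step is to simplify the commutator piece. Since $\mathfrak{R}(\theta_{\mathbf{a}}\wedge\theta_{\mathbf{b}})$ is a biform and $v$ is a $1$-form, the definition Eq.(\ref{pcomu1}) identifies $\frac{1}{2}[\mathfrak{R}(\theta_{\mathbf{a}}\wedge\theta_{\mathbf{b}}),v]$ with the commutator product $\mathfrak{R}(\theta_{\mathbf{a}}\wedge\theta_{\mathbf{b}})\times v$, and Eq.(\ref{pcomu2}) then turns this product into the left contraction $\mathfrak{R}(\theta_{\mathbf{a}}\wedge\theta_{\mathbf{b}})\llcorner v$. This yields the first term of the claimed identity with no further effort. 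This reduction of the bracket to a contraction, valid precisely because $v$ has grade one, is really the only conceptual content; the rest is bookkeeping.

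It then remains to treat the transport term $\nabla_{[\boldsymbol{e}_{\mathbf{a}},\boldsymbol{e}_{\mathbf{b}}]}v$. Writing $[\boldsymbol{e}_{\mathbf{a}},\boldsymbol{e}_{\mathbf{b}}]=c_{\cdot\mathbf{ab}}^{\mathbf{d\cdot\cdot}}\boldsymbol{e}_{\mathbf{d}}$ for the structure functions of the frame, I would invoke the torsion relation Eq.(\ref{TORSION rie}), $T_{\cdot\mathbf{ab}}^{\mathbf{c\cdot\cdot}}=\omega_{\cdot\mathbf{ab}}^{\mathbf{c\cdot\cdot}}-\omega_{\cdot\mathbf{ba}}^{\mathbf{c\cdot\cdot}}-c_{\cdot\mathbf{ab}}^{\mathbf{c\cdot\cdot}}$, solved for the structure functions as $c_{\cdot\mathbf{ab}}^{\mathbf{c\cdot\cdot}}=-(T_{\mathbf{ab}}^{\mathbf{c}}-\omega_{\cdot\mathbf{ab}}^{\mathbf{c\cdot\cdot}}+\omega_{\cdot\mathbf{ba}}^{\mathbf{c\cdot\cdot}})$. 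Substituting into $\nabla_{[\boldsymbol{e}_{\mathbf{a}},\boldsymbol{e}_{\mathbf{b}}]}v=c_{\cdot\mathbf{ab}}^{\mathbf{c\cdot\cdot}}\nabla_{\boldsymbol{e}_{\mathbf{c}}}v$ reproduces exactly the second term of Eq.(\ref{commut ident 1}), completing the argument. The main obstacle, such as it is, is purely one of sign and index convention: one must ensure that the $\times$-to-$\llcorner$ passage of Eq.(\ref{pcomu2}) and the sign in the torsion relation are applied consistently, but no genuine calculation is required.
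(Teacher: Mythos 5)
Your proposal is correct and follows essentially the same route as the paper: both start from Eq.(\ref{exercise}), convert $\tfrac{1}{2}[\mathfrak{R}(\theta_{\mathbf{a}}\wedge\theta_{\mathbf{b}}),v]$ into $\mathfrak{R}(\theta_{\mathbf{a}}\wedge\theta_{\mathbf{b}})\llcorner v$ via Eqs.(\ref{pcomu1})--(\ref{pcomu2}), and then rewrite $\nabla_{[\boldsymbol{e}_{\mathbf{a}},\boldsymbol{e}_{\mathbf{b}}]}v$ using the relation $c_{\cdot\mathbf{ab}}^{\mathbf{c\cdot\cdot}}=\omega_{\cdot\mathbf{ab}}^{\mathbf{c\cdot\cdot}}-\omega_{\cdot\mathbf{ba}}^{\mathbf{c\cdot\cdot}}-T_{\cdot\mathbf{ab}}^{\mathbf{c\cdot\cdot}}$. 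The only cosmetic difference is that the paper obtains this last relation by adding and subtracting $\nabla_{\nabla_{\boldsymbol{e}_{\mathbf{a}}}\boldsymbol{e}_{\mathbf{b}}}v$ and $\nabla_{\nabla_{\boldsymbol{e}_{\mathbf{b}}}\boldsymbol{e}_{\mathbf{a}}}v$ and recognizing the torsion operator, whereas you quote the equivalent component identity Eq.(\ref{TORSION rie}) directly.
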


\begin{proof}
From Eq.(\ref{exercise}) we can write%
\begin{gather*}
\lbrack\nabla_{\boldsymbol{e}_{\mathbf{a}}},\nabla_{\boldsymbol{e}%
_{\mathbf{b}}}]v=\frac{1}{2}[\mathfrak{R}(\theta_{\mathbf{a}}\wedge
\theta_{\mathbf{b}}),v]+\nabla_{\lbrack\boldsymbol{e}_{\mathbf{a}}%
\mathbf{,}\boldsymbol{e}_{\mathbf{b}}]}v\\
=\mathfrak{R}(\theta_{\mathbf{a}}\wedge\theta_{\mathbf{b}})\llcorner
v+\nabla_{([\boldsymbol{e}_{\mathbf{a}}\mathbf{,}\boldsymbol{e}_{\mathbf{b}%
}]-\nabla_{\boldsymbol{e}_{a}\boldsymbol{e}\mathbf{b}}+\nabla_{\boldsymbol{e}%
_{\mathbf{b}}}\boldsymbol{e}_{\mathbf{a}})}v+\nabla_{\nabla_{\boldsymbol{e}%
_{\mathbf{a}}}\boldsymbol{e}_{\mathbf{b}}}v-\nabla_{\nabla_{\boldsymbol{e}%
_{\mathbf{b}}}\boldsymbol{e}_{\mathbf{a}}}v\\
=\mathfrak{R}(\theta_{\mathbf{a}}\wedge\theta_{\mathbf{b}})\llcorner
v+\nabla_{-T_{\cdot\mathbf{ab}}^{\mathbf{c}\cdot\cdot}\boldsymbol{e}%
_{\mathbf{c}}}v+\nabla_{\omega_{\cdot\mathbf{ab}}^{\mathbf{c}\cdot\cdot
}\boldsymbol{e}_{\mathbf{c}}}v-\nabla_{\omega_{\cdot\mathbf{ba}}%
^{\mathbf{c}\cdot\cdot}\boldsymbol{e}_{\mathbf{c}}}v\\
=\mathfrak{R}(\theta_{\mathbf{a}}\wedge\theta_{\mathbf{b}})\llcorner
v-(T_{\cdot\mathbf{ab}}^{\mathbf{c}\cdot\cdot}-\omega_{\cdot\mathbf{ab}%
}^{\mathbf{c}\cdot\cdot}+\omega_{\cdot\mathbf{ba}}^{\mathbf{c}\cdot\cdot
})\nabla_{\boldsymbol{e}_{\mathbf{c}}}v
\end{gather*}
which proves the proposition.
\end{proof}

\begin{proposition}%
\begin{equation}
\mathfrak{R}(\theta^{\mathbf{a}}\wedge\theta_{\mathbf{b}})=\mathcal{R}%
_{\mathbf{\cdot b}}^{\mathbf{a\cdot}}=d\omega_{\cdot\mathbf{b}}%
^{\mathbf{a\cdot}}+\omega_{\cdot\mathbf{c}}^{\mathbf{a\cdot}}\wedge
\omega_{\cdot\mathbf{b}}^{\mathbf{c\cdot}} \label{rR}%
\end{equation}

\end{proposition}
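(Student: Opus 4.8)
The plan is to reduce the statement to the component form of the Riemann tensor already established in Eq.(\ref{riemann impo}) and then to recognize Cartan's second structure equation. First I would read off the frame components of the $2$-form $\mathfrak{R}(\theta^{\mathbf{a}}\wedge\theta_{\mathbf{b}})$. Applying Eq.(\ref{curv6}) with $a=\theta^{\mathbf{a}}$, $b=\theta_{\mathbf{b}}$, $c=\theta_{\mathbf{c}}$, $d=\theta_{\mathbf{d}}$ and contracting against the reciprocal bivector basis gives
\[
\mathfrak{R}(\theta^{\mathbf{a}}\wedge\theta_{\mathbf{b}})=\frac{1}{2}\,\mathbf{R}(\theta^{\mathbf{a}},\theta_{\mathbf{b}},\theta_{\mathbf{c}},\theta_{\mathbf{d}})\,\theta^{\mathbf{c}}\wedge\theta^{\mathbf{d}}=\frac{1}{2}R_{\cdot\mathbf{bcd}}^{\mathbf{a}\cdot\cdot\cdot}\,\theta^{\mathbf{c}}\wedge\theta^{\mathbf{d}},
\]
where the $R_{\cdot\mathbf{bcd}}^{\mathbf{a}\cdot\cdot\cdot}$ are exactly the coefficients computed in Eq.(\ref{riemann impo}) after the relabelling $\mathbf{d}\to\mathbf{a}$, $\mathbf{c}\to\mathbf{b}$, $\mathbf{a}\to\mathbf{c}$, $\mathbf{b}\to\mathbf{d}$ of the indices used there. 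The only care needed at this stage is to track the sign and index conventions linking Eq.(\ref{curv6}) with the component expression displayed just before Eq.(\ref{curv scalar}).

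The second step expands the proposed right-hand side in the same orthonormal coframe. Writing the Levi-Civita connection $1$-forms as $\omega_{\cdot\mathbf{b}}^{\mathbf{a}\cdot}=\omega_{\cdot\mathbf{kb}}^{\mathbf{a}\cdot\cdot}\theta^{\mathbf{k}}$ (consistent with $\nabla_{\boldsymbol{e}_{\mathbf{a}}}\theta^{\mathbf{b}}=-\omega_{\cdot\mathbf{ac}}^{\mathbf{b}\cdot\cdot}\theta^{\mathbf{c}}$) and recording $[\boldsymbol{e}_{\mathbf{c}},\boldsymbol{e}_{\mathbf{d}}]=c_{\cdot\mathbf{cd}}^{\mathbf{k}\cdot\cdot}\boldsymbol{e}_{\mathbf{k}}$, the exterior derivative is
\[
d\omega_{\cdot\mathbf{b}}^{\mathbf{a}\cdot}=\boldsymbol{e}_{\mathbf{c}}(\omega_{\cdot\mathbf{kb}}^{\mathbf{a}\cdot\cdot})\,\theta^{\mathbf{c}}\wedge\theta^{\mathbf{k}}+\omega_{\cdot\mathbf{kb}}^{\mathbf{a}\cdot\cdot}\,d\theta^{\mathbf{k}},\qquad d\theta^{\mathbf{k}}=-\frac{1}{2}c_{\cdot\mathbf{cd}}^{\mathbf{k}\cdot\cdot}\,\theta^{\mathbf{c}}\wedge\theta^{\mathbf{d}}.
\]
After antisymmetrizing in $\mathbf{c},\mathbf{d}$ this contributes precisely the two derivative terms $\boldsymbol{e}_{\mathbf{c}}(\omega_{\cdot\mathbf{db}}^{\mathbf{a}\cdot\cdot})-\boldsymbol{e}_{\mathbf{d}}(\omega_{\cdot\mathbf{cb}}^{\mathbf{a}\cdot\cdot})$ together with the anholonomy term $-c_{\cdot\mathbf{cd}}^{\mathbf{k}\cdot\cdot}\omega_{\cdot\mathbf{kb}}^{\mathbf{a}\cdot\cdot}$ of Eq.(\ref{riemann impo}). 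Relabelling the contracted index by $\mathbf{m}$, the quadratic piece $\omega_{\cdot\mathbf{m}}^{\mathbf{a}\cdot}\wedge\omega_{\cdot\mathbf{b}}^{\mathbf{m}\cdot}=\omega_{\cdot\mathbf{cm}}^{\mathbf{a}\cdot\cdot}\omega_{\cdot\mathbf{db}}^{\mathbf{m}\cdot\cdot}\theta^{\mathbf{c}}\wedge\theta^{\mathbf{d}}$ supplies, again after antisymmetrization, the remaining terms $\omega_{\cdot\mathbf{cm}}^{\mathbf{a}\cdot\cdot}\omega_{\cdot\mathbf{db}}^{\mathbf{m}\cdot\cdot}-\omega_{\cdot\mathbf{dm}}^{\mathbf{a}\cdot\cdot}\omega_{\cdot\mathbf{cb}}^{\mathbf{m}\cdot\cdot}$.

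Adding the two contributions reconstructs $\frac{1}{2}R_{\cdot\mathbf{bcd}}^{\mathbf{a}\cdot\cdot\cdot}\theta^{\mathbf{c}}\wedge\theta^{\mathbf{d}}$ term by term, and comparison with the first step yields $\mathfrak{R}(\theta^{\mathbf{a}}\wedge\theta_{\mathbf{b}})=d\omega_{\cdot\mathbf{b}}^{\mathbf{a}\cdot}+\omega_{\cdot\mathbf{c}}^{\mathbf{a}\cdot}\wedge\omega_{\cdot\mathbf{b}}^{\mathbf{c}\cdot}$, as claimed. I expect the main obstacle to be the bookkeeping of the anholonomy coefficients: one must verify that the contribution $\omega_{\cdot\mathbf{kb}}^{\mathbf{a}\cdot\cdot}d\theta^{\mathbf{k}}$ reproduces exactly the last term of Eq.(\ref{riemann impo}) with the correct factor of $\frac{1}{2}$, i.e. that the bare frame derivatives $\boldsymbol{e}_{\mathbf{c}}(\omega_{\cdot\mathbf{db}}^{\mathbf{a}\cdot\cdot})$ genuinely organize into the coordinate-free $d\omega_{\cdot\mathbf{b}}^{\mathbf{a}\cdot}$ in the anholonomic coframe. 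A cleaner but more indirect alternative would start from the intrinsic formula Eq.(\ref{rie1}), translating $u\cdot\eth\omega(v)-v\cdot\eth\omega(u)$ into $d\omega_{\cdot\mathbf{b}}^{\mathbf{a}\cdot}$ and the Clifford commutator $\omega(u)\times\omega(v)$ into $\omega_{\cdot\mathbf{c}}^{\mathbf{a}\cdot}\wedge\omega_{\cdot\mathbf{b}}^{\mathbf{c}\cdot}$; this avoids components but first requires relating the connection $(1$-$2)$-extensor $\omega$ of Eq.(\ref{om1}) to the classical connection $1$-forms, so I would keep the component route as the primary argument.
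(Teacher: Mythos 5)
Your argument is correct and reaches the stated identity, but it follows a genuinely different route from the paper's. You prove the proposition by brute-force verification of Cartan's second structure equation: you expand $d\omega_{\cdot\mathbf{b}}^{\mathbf{a\cdot}}+\omega_{\cdot\mathbf{c}}^{\mathbf{a\cdot}}\wedge\omega_{\cdot\mathbf{b}}^{\mathbf{c\cdot}}$ in the anholonomic coframe (using $d\theta^{\mathbf{k}}=-\tfrac{1}{2}c_{\cdot\mathbf{cd}}^{\mathbf{k\cdot\cdot}}\theta^{\mathbf{c}}\wedge\theta^{\mathbf{d}}$) and match the result, term by term, against the component formula of Eq.~(\ref{riemann impo}); your bookkeeping of the anholonomy term is right, and the five terms do reassemble exactly. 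The paper instead never touches $d\omega+\omega\wedge\omega$ at all: it shows that $\mathfrak{R}(\theta_{\mathbf{a}}\wedge\theta_{\mathbf{b}})\llcorner v$ and $\mathcal{R}_{\mathbf{ab}}\llcorner v$ (with $\mathcal{R}_{\mathbf{ab}}:=\tfrac{1}{2}R_{\mathbf{klab}}\theta^{\mathbf{k}}\wedge\theta^{\mathbf{l}}$) agree for every $1$-form $v$, using the action of the curvature operator on the coframe recorded in Eq.~(\ref{riem}), and then simply identifies $\mathcal{R}_{\cdot\mathbf{b}}^{\mathbf{a\cdot}}$ with the classical curvature $2$-forms. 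Your version buys a self-contained derivation of the structure equation (which the paper leaves implicit), at the cost of a longer component computation; the paper's version is shorter but leans on the reader already knowing that $\tfrac{1}{2}R_{\cdot\mathbf{bkl}}^{\mathbf{a}\cdot\cdot\cdot}\theta^{\mathbf{k}}\wedge\theta^{\mathbf{l}}=d\omega_{\cdot\mathbf{b}}^{\mathbf{a\cdot}}+\omega_{\cdot\mathbf{c}}^{\mathbf{a\cdot}}\wedge\omega_{\cdot\mathbf{b}}^{\mathbf{c\cdot}}$. One point you should make explicit rather than bury in ``index bookkeeping'': your very first display silently converts $\mathbf{R}(\theta^{\mathbf{a}},\theta_{\mathbf{b}},\theta_{\mathbf{c}},\theta_{\mathbf{d}})$ into $R_{\cdot\mathbf{bcd}}^{\mathbf{a}\cdot\cdot\cdot}$ with $\mathbf{a},\mathbf{b}$ now playing the role of the \emph{endomorphism} pair, whereas in Eq.~(\ref{riemann impo}) they enter $\mathfrak{R}$ as the \emph{plane} pair; this exchange uses the pair symmetry of Eq.~(\ref{curv4}), which the paper establishes only for the Levi-Civita connection. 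The paper's own statement has the same tacit restriction, so this is not a defect of your proof relative to theirs, but it should be stated.
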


\begin{proof}
Recall that using
\begin{align}
([\nabla_{\boldsymbol{e}_{\mathbf{k}}},\nabla_{\boldsymbol{e}_{\mathbf{l}}%
}]-\nabla_{\lbrack\boldsymbol{e}_{\mathbf{k}}\mathbf{,}\boldsymbol{e}%
_{\mathbf{l}}]})\theta^{\mathbf{j}}  &  =\mathbf{\rho(}\boldsymbol{e}%
\mathbf{_{\mathbf{k}}\mathbf{,}}\boldsymbol{e}\mathbf{_{\mathbf{l}})}%
\theta^{\mathbf{j}}=-R_{\cdot\mathbf{ikl}}^{\mathbf{j\cdot\cdot\cdot}}%
\theta^{\mathbf{i}},\nonumber\\
([\nabla_{\boldsymbol{e}_{\mathbf{k}}},\nabla_{\boldsymbol{e}_{\mathbf{l}}%
}]-\nabla_{\lbrack\boldsymbol{e}_{\mathbf{k}}\mathbf{,}\boldsymbol{e}%
_{\mathbf{l}}]})\theta_{\mathbf{j}}  &  =\mathbf{\rho(}\boldsymbol{e}%
\mathbf{_{\mathbf{k}}\mathbf{,}}\boldsymbol{e}\mathbf{_{\mathbf{l}})}%
\theta_{\mathbf{j}}=R_{\cdot\mathbf{jkl}}^{\mathbf{i\cdot\cdot\cdot}}%
\theta_{\mathbf{i}}, \label{riem}%
\end{align}
we have%
\begin{equation}
\mathfrak{R}(\theta_{\mathbf{a}}\wedge\theta_{\mathbf{b}})\llcorner
v=v^{\mathbf{m}}\mathbf{\rho}(\boldsymbol{e}_{\mathbf{a}},\boldsymbol{e}%
_{\mathbf{b}})\theta_{\mathbf{m}}=v^{\mathbf{m}}R_{\cdot\mathbf{mab}%
}^{\mathbf{i\cdot\cdot\cdot}}\theta_{\mathbf{i}}. \label{rR2}%
\end{equation}

On the other hand, for a general connection, we must write
\begin{equation}
\mathcal{R}_{\mathbf{ab}}:=\frac{1}{2}R_{\mathbf{klab}}\theta^{\mathbf{k}%
}\wedge\theta^{\mathbf{l}} \label{2-formg}%
\end{equation}
and then
\[
\mathcal{R}_{\mathbf{ab}}\llcorner v=\frac{1}{2}v^{\mathbf{m}}R_{\mathbf{klab}%
}(\theta^{\mathbf{k}}\wedge\theta^{\mathbf{l}})\llcorner\theta_{\mathbf{m}%
}=-v^{\mathbf{m}}R_{\mathbf{mlab}}\theta^{\mathbf{l}}=v^{\mathbf{m}%
}R_{\mathbf{lmab}}\theta^{\mathbf{l}}=v^{\mathbf{m}}R_{\cdot\mathbf{mab}%
}^{\mathbf{l\cdot\cdot\cdot}}\theta_{\mathbf{l}}%
\]
and the proposition is proved.
\end{proof}

\begin{proposition}
\label{RICCI}The Ricci 1-forms\footnote{The $R_{\mathbf{b}}^{\mathbf{a}}$
$:=\eta^{\mathbf{ca}}R_{\cdot\mathbf{ckb}}^{\mathbf{k\cdot\cdot\cdot}}$are the
components of the Ricci tensor.} $\mathcal{R}^{\mathbf{d}}:=R_{\mathbf{b}%
}^{\mathbf{d}}\theta^{\mathbf{b}}$ and the curvature biform $\mathfrak{R}%
(\theta_{\mathbf{a}}\wedge\theta_{\mathbf{b}})$ for the Levi-Civita connection
$D$\ of $\boldsymbol{g}$ are related by,%
\begin{equation}
\mathcal{R}^{\mathbf{d}}=\frac{1}{2}(\theta^{\mathbf{a}}\wedge\theta
^{\mathbf{b}})(\mathfrak{R}(\theta_{\mathbf{a}}\wedge\theta_{\mathbf{b}%
})\llcorner\theta^{\mathbf{d}}) \label{riccixcurv}%
\end{equation}

\end{proposition}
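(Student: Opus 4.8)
The plan is to begin from the right-hand side of (\ref{riccixcurv}) and insert the contraction formula (\ref{rR2}) that has already been established. Writing $\theta^{\mathbf{d}}=\eta^{\mathbf{dm}}\theta_{\mathbf{m}}$, so that $v^{\mathbf{m}}=\eta^{\mathbf{dm}}$, (\ref{rR2}) gives $\mathfrak{R}(\theta_{\mathbf{a}}\wedge\theta_{\mathbf{b}})\llcorner\theta^{\mathbf{d}}=\eta^{\mathbf{dm}}R_{\cdot\mathbf{mab}}^{\mathbf{i\cdot\cdot\cdot}}\theta_{\mathbf{i}}$, and the right member of (\ref{riccixcurv}) becomes $\tfrac{1}{2}\eta^{\mathbf{dm}}R_{\cdot\mathbf{mab}}^{\mathbf{i\cdot\cdot\cdot}}(\theta^{\mathbf{a}}\wedge\theta^{\mathbf{b}})\theta_{\mathbf{i}}$. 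Since the remaining factor is the Clifford product of a biform with a $1$-form, I would split it into its grade-$1$ and grade-$3$ parts, using (\ref{pcomu}) together with (\ref{pcomu2}): for $B\in\sec\bigwedge\nolimits^{2}T^{\ast}M$ and a $1$-form $w$ one has $Bw=B\llcorner w+B\wedge w$, the first term being the grade-$1$ piece and the second the grade-$3$ piece. I would then treat the two pieces independently.

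For the grade-$1$ piece I would use $(\theta^{\mathbf{a}}\wedge\theta^{\mathbf{b}})\llcorner\theta_{\mathbf{i}}=-\theta_{\mathbf{i}}\lrcorner(\theta^{\mathbf{a}}\wedge\theta^{\mathbf{b}})=\delta_{\mathbf{i}}^{\mathbf{b}}\theta^{\mathbf{a}}-\delta_{\mathbf{i}}^{\mathbf{a}}\theta^{\mathbf{b}}$, contract the Kronecker deltas against $R_{\cdot\mathbf{mab}}^{\mathbf{i\cdot\cdot\cdot}}$, and invoke the antisymmetry $R_{\cdot\mathbf{mab}}^{\mathbf{i\cdot\cdot\cdot}}=-R_{\cdot\mathbf{mba}}^{\mathbf{i\cdot\cdot\cdot}}$ in the commutator index pair (this is (\ref{pc1}) read in components) to combine the two resulting terms into a single trace. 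Relabelling the summed indices and using the commutator-pair antisymmetry once more to bring the contraction onto the index pattern of the Ricci definition $R_{\mathbf{b}}^{\mathbf{a}}=\eta^{\mathbf{ca}}R_{\cdot\mathbf{ckb}}^{\mathbf{k\cdot\cdot\cdot}}$ should collapse this piece to $\pm R_{\mathbf{b}}^{\mathbf{d}}\theta^{\mathbf{b}}=\pm\mathcal{R}^{\mathbf{d}}$. Keeping track of the signs produced by (\ref{pcomu2}) and by the two commutator-index swaps is the only bookkeeping here, and the signs fall into place in the paper's contraction conventions.

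The genuine content is to show that the grade-$3$ piece $\tfrac{1}{2}\eta^{\mathbf{dm}}R_{\cdot\mathbf{mab}}^{\mathbf{i\cdot\cdot\cdot}}\,\theta^{\mathbf{a}}\wedge\theta^{\mathbf{b}}\wedge\theta_{\mathbf{i}}$ vanishes, and this is exactly where the hypothesis that $D$ is the Levi-Civita connection enters. Lowering the index $\mathbf{i}$, the totally antisymmetric factor $\theta^{\mathbf{a}}\wedge\theta^{\mathbf{b}}\wedge\theta^{\mathbf{n}}$ extracts only the part of the fully lowered curvature components that is antisymmetric in the two commutator indices together with the (lowered) output index. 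I would convert this antisymmetrization, by means of the pair-exchange symmetry (\ref{curv4}) and the pair antisymmetries (\ref{curv2})--(\ref{curv3}), into a cyclic sum over the two commutator indices and the form index, which is precisely the first Bianchi identity (\ref{curv5}) (equivalently (\ref{pc2}) written for $\mathfrak{\rho}$). Hence the coefficient of the $3$-form vanishes identically, the Clifford product reduces to its grade-$1$ part alone, and the identity (\ref{riccixcurv}) follows.

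The main obstacle is this last step: matching the index triple antisymmetrized by the wedge $\theta^{\mathbf{a}}\wedge\theta^{\mathbf{b}}\wedge\theta_{\mathbf{i}}$ (the commutator pair plus the output index) to the index triple that appears in the first Bianchi identity (the commutator pair plus the form index). The bridge is the pair-exchange symmetry (\ref{curv4}), which holds only for a torsion-free, metric connection; for a general Riemann--Cartan connection this term does not vanish, consistent with the explicit remark after (\ref{pc2}) that the first Bianchi identity fails in that generality. A secondary, purely clerical difficulty is fixing the overall sign, since each contraction and relabelling step contributes a sign and the convention (\ref{pcomu2}) relating $\llcorner$ and $\lrcorner$ must be applied consistently throughout.
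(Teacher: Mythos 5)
Your proposal is correct and follows essentially the same route as the paper's proof: both insert the component expression for $\mathfrak{R}(\theta_{\mathbf{a}}\wedge\theta_{\mathbf{b}})\llcorner\theta^{\mathbf{d}}$, split the resulting Clifford product of the curvature biform with a $1$-form into its contraction (grade-$1$) and wedge (grade-$3$) parts, kill the wedge part with the first Bianchi identity, and contract what remains down to $\mathcal{R}^{\mathbf{d}}$. The paper merely packages the identical computation by entering through the explicit formula for the Ricci operator $\boldsymbol{\partial}\wedge\boldsymbol{\partial}~\theta^{\mathbf{d}}$, which is not needed for the statement itself.
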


\begin{proof}
Recalling that the Ricci operator is given by\footnote{See Chapter 4 of
\cite{rodcap2007}.}%
\begin{equation}
\boldsymbol{\partial}{\wedge}\boldsymbol{\partial}\theta^{\mathbf{d}}=\frac
{1}{2}(\theta^{\mathbf{a}}\wedge\theta^{\mathbf{b}})\left(  [D_{\boldsymbol{e}%
_{\mathbf{a}}},D_{\boldsymbol{e}_{\mathbf{b}}}]\theta^{\mathbf{d}%
}-c_{\mathbf{\cdot ab}}^{\mathbf{c\cdot\cdot}}D_{\boldsymbol{e}_{\mathbf{c}}%
}\theta^{\mathbf{d}}\right)  \label{ri1}%
\end{equation}
and moreover taking into account that by the first Bianchi identity it is
$\mathcal{R}_{\cdot d}^{\mathbf{c}\cdot}\wedge\theta_{\mathbf{c}}=0$, we have
\begin{align}
\frac{1}{2}(\theta^{\mathbf{a}}\wedge\theta^{\mathbf{b}})\left(
[D_{\boldsymbol{e}_{\mathbf{a}}},D_{\boldsymbol{e}_{\mathbf{b}}}%
]\theta^{\mathbf{d}}-c_{\cdot\mathbf{ab}}^{\mathbf{c\cdot}}D_{\boldsymbol{e}%
_{\mathbf{c}}}\theta^{\mathbf{d}}\right)   &  =-\frac{1}{2}(\theta
^{\mathbf{a}}\wedge\theta^{\mathbf{b}})R_{\cdot\mathbf{cab}}^{\mathbf{d\cdot
\cdot\cdot}}\theta^{\mathbf{c}}=\mathcal{R}^{\mathbf{cd}}\theta_{\mathbf{c}%
}\nonumber\\
&  =\mathcal{R}^{\mathbf{cd}}\lrcorner\theta_{\mathbf{c}}+\mathcal{R}%
^{\mathbf{cd}}\wedge\theta_{\mathbf{c}}=-\theta_{\mathbf{c}}\lrcorner
\mathcal{R}^{\mathbf{cd}}\nonumber\\
&  =-\frac{1}{2}\theta_{\mathbf{c}}\lrcorner(\theta^{\mathbf{a}}\wedge
\theta^{\mathbf{b}})R_{\cdot\cdot\mathbf{ab}}^{\mathbf{cd\cdot\cdot}%
}\nonumber\\
&  =-R_{\cdot\cdot\mathbf{cb}}^{\mathbf{cd\cdot\cdot}}\theta^{\mathbf{b}%
}=\mathcal{R}^{\mathbf{d}} \label{ri2}%
\end{align}
which proves the proposition.
\end{proof}

Proposition \ref{RICCI} suggests the

\begin{definition}
The Ricci extensor is the mapping
\begin{gather}
\mathcal{R}:\sec\bigwedge\nolimits^{1}T^{\ast}M\rightarrow\sec\bigwedge
\nolimits^{1}T^{\ast}M,\nonumber\\
\mathcal{R}(v)=\partial_{u}\mathcal{R}(u\wedge v). \label{ricciext}%
\end{gather}

\end{definition}

\begin{remark}
Of course, we must have $\mathcal{R}(\theta^{\mathbf{d}})=\mathcal{R}%
^{\mathbf{d}}$. Moreover, we have%
\begin{align*}
\partial_{u}\mathfrak{R}(u\wedge v)  &  =\theta^{\mathbf{b}}\frac{\partial
}{\partial u^{\mathbf{b}}}\mathfrak{R}(u_{\mathbf{k}}\theta^{\mathbf{k}}\wedge
v)=\theta^{\mathbf{b}}\frac{\partial}{\partial u^{\mathbf{b}}}u^{\mathbf{k}%
}\mathfrak{R}(\theta_{\mathbf{k}}\wedge v)\\
&  =\theta^{\mathbf{b}}\mathfrak{R}(\delta_{\mathbf{b}}^{\mathbf{k}}%
\theta_{\mathbf{k}}\wedge v)=\theta^{\mathbf{b}}\mathfrak{R}(\theta
_{\mathbf{b}}\wedge v)\\
&  =\theta^{\mathbf{b}}\lrcorner\mathfrak{R}(\theta_{\mathbf{b}}\wedge
v)+\theta^{\mathbf{b}}\wedge\mathfrak{R}(\theta_{\mathbf{b}}\wedge v)\\
&  =\theta^{\mathbf{b}}\lrcorner\mathfrak{R}(\theta_{\mathbf{b}}\wedge v).
\end{align*}
So,
\begin{equation}
\partial_{u}\mathfrak{R}(u\wedge v)=\partial_{u}\lrcorner\mathfrak{R}(u\wedge
v)\text{ and }\partial_{u}\wedge\mathfrak{R}(u\wedge v)=0. \label{proctness}%
\end{equation}

\end{remark}

\section{The Riemannian or Semi-Riemannian Geometry of a Submanifold $M$ of
$\mathring{M}$}

\subsection{Motivation}

Any manifold $M,\dim M=m$, according to Whitney's theorem (see, e.g.,
\cite{am}), can be realized as a submanifold of $\mathbb{R}^{n}$, with $n=2m$.
However, if $M$ carries additional structure the number $n$ in general must be
greater than $2m$. Indeed, it has been shown by Eddington \cite{eddington}
that if dim $M=4$ and if $M$ carries a Lorentzian metric $\boldsymbol{g}$ and
which moreover satisfies Einstein's equations, then $M$ can be \emph{locally}
embedded in a (pseudo)Euclidean space $\mathbb{R}^{1,9}$. Also, isometric
embeddings of general Lorentzian spacetimes would require a lot of extra
dimensions \cite{clarke}. Indeed, a compact Lorentzian manifold can be
embedded isometrically in $\mathbb{R}^{2,46}$ and a non-compact one can be
embedded isometrically in $\mathbb{R}^{2,87}$! In particular this last result
shows that the spacetime of M-theory \cite{duff,horava} may not be large
enough to contain 4-dimensional branes with arbitrary metric tensors. In what
follows we show how to relate the intrinsic differential geometry of a
structure $(M,\boldsymbol{g},D)$ where $\boldsymbol{g}$ is a metric of
signature $(p,q)$, $D\ $is its Levi-Civita connection and $M$ is an orientable
\emph{proper } submanifold of $\mathring{M}$, i.e., there is defined on $M$ a
global volume element $\tau_{\boldsymbol{g}}=I_{m}$ whose expression on
$U\subset M$ is given by
\begin{equation}
I_{m}=\theta^{1}\theta^{2}\cdots\theta^{m}. \label{VOLUME}%
\end{equation}
We suppose moreover that $\mathring{M}\simeq\mathbb{R}^{n}$ and it is equipped
with a metric $\boldsymbol{\mathring{g}}$ of signature $(\mathring
{p},\mathring{q})=n$. However, take notice that our presentation in the form
of a local theory is easily adapted for a general manifold $\mathring{M}$.

\subsubsection{Projection Operator $\mathbf{P}$}

\begin{definition}
\label{projope}Let $\mathcal{C}=%
{\textstyle\sum\nolimits_{r=0}^{n}}
\mathcal{C}_{r}$, with $\mathcal{C}_{r}\in\sec%
{\textstyle\bigwedge\nolimits^{r}}
T^{\ast}\mathring{M}\hookrightarrow\sec\mathcal{C}\ell(\mathring
{M},\mathtt{\mathring{g}})$. The Projection operator on $M$ is the extensor
field%
\begin{gather}
\mathbf{P}:\sec\mathcal{C}\ell(\mathring{M},\mathtt{\mathring{g}}%
)\rightarrow\sec\mathcal{C}\ell(M,\mathtt{g}),\nonumber\\
\mathbf{P}(\mathcal{C})=(\mathcal{C}\lrcorner I_{m})I_{m}^{-1}. \label{proj}%
\end{gather}

\end{definition}

\begin{remark}
\label{outside}Note that$\ $for all $\mathcal{C}_{k}\in\sec%
{\textstyle\bigwedge\nolimits^{k}}
T^{\ast}\mathring{M}\hookrightarrow\sec\mathcal{C}\ell(\mathring
{M},\mathtt{\mathring{g}})$, if $k>m$ then $\mathbf{P}(\mathcal{C}_{k})=0$,
but of course, it may happen that even if $A_{r}\in\sec%
{\textstyle\bigwedge\nolimits^{r}}
T^{\ast}\mathring{M}\hookrightarrow\sec\mathcal{C}\ell(M,\mathtt{\mathring{g}%
})$ with $r\leq m$ we \ may have $\mathbf{P}(A_{r})=0.$
\end{remark}

We define the complement of $\mathbf{P}$ by
\begin{equation}
\mathbf{P}_{\perp}(\mathcal{C})=\mathcal{C}-\mathbf{P}(\mathcal{C})
\label{compl}%
\end{equation}
and it is clear that $\mathbf{P}_{\perp}(\mathcal{C})$ have only components
lying outside $\mathcal{C}\ell(M,\mathtt{g})$.\ It is quite clear also that
any $\mathcal{C}$ with components not all belonging to $\sec\mathcal{C}%
\ell(\mathring{M},\mathtt{\mathring{g}})$ will satisfy $\mathcal{C}\lrcorner
I_{m}=0.$

Having introduced in Section 1 the derivative operators $\boldsymbol{\mathring
{\partial}}$ and its restriction $\mathfrak{\mathring{d}}=\left.
\boldsymbol{\mathring{\partial}}\right\vert _{M}$ (Eq. (\ref{1}) and
Eq.(\ref{2a})) we extend the action of $\mathbf{P}$ to act on the operator
$\mathfrak{\mathring{d}}$, defining:
\begin{equation}
\mathbf{P}(\mathfrak{\mathring{d}})=%
{\textstyle\sum\nolimits_{\mathbf{k}=1}^{m}}
\mathbf{P}(\theta^{\mathbf{k}}\mathring{D}_{\boldsymbol{e}_{\mathbf{k}}}):=%
{\textstyle\sum\nolimits_{\mathbf{k}=1}^{m}}
\mathbf{P}(\theta^{\mathbf{k}})\mathring{D}_{\boldsymbol{e}_{\mathbf{k}}}=%
{\textstyle\sum\nolimits_{\mathbf{k}=1}^{m}}
\theta^{\mathbf{k}}\mathring{D}_{\boldsymbol{e}_{\mathbf{k}}}%
=\mathfrak{\mathring{d}}. \label{p1}%
\end{equation}

\subsubsection{Shape Operator $\mathbf{S}$}

\begin{definition}
Given $\mathcal{C}\in\sec\mathcal{C}\ell(\mathring{M},\mathtt{\mathring{g}}%
)$\ we define the shape operator
\begin{gather}
\mathbf{S:}\sec\mathcal{C}\ell(\mathring{M},\mathtt{\mathring{g}}%
)\rightarrow\sec\mathcal{C}\ell(\mathring{M},\mathtt{\mathring{g}%
}),\nonumber\\
\mathbf{S}(\mathcal{C})=\mathfrak{\mathring{d}}\mathbf{P}\text{~}%
(\mathcal{C})=\mathfrak{\mathring{d}(}\mathbf{P}(\mathcal{C}))-\mathbf{P}%
\mathfrak{(\mathring{d}}\mathcal{C}). \label{S1}%
\end{gather}

\end{definition}

For any $\mathcal{C\in}\sec\mathcal{C}\ell(M,\mathtt{g})$ and
$\boldsymbol{v\in}\sec TM$ we write as usual \cite{choquet,hicks}
\begin{equation}
\mathring{D}_{\boldsymbol{v}}\mathcal{C}=(\mathring{D}_{\boldsymbol{v}%
}\mathcal{C)}_{\parallel}+(\mathring{D}_{\boldsymbol{v}}\mathcal{C)}_{\perp}
\label{SS1}%
\end{equation}
where $(\mathring{D}_{\boldsymbol{v}}\mathcal{C)}_{\parallel}\in
\sec\mathcal{C}\ell(M,\mathtt{g})$ and $(\mathring{D}_{\boldsymbol{v}%
}\mathcal{C)}_{\perp}\in\sec[\mathcal{C}\ell(M,\mathtt{g})]_{\perp}$ where
$[\mathcal{C}\ell(M,\mathtt{g})]_{\perp}$ is the orthogonal complement of
$\mathcal{C}\ell(\mathring{M},\mathtt{\mathring{g}})$ in $\mathcal{C}%
\ell(M,\mathtt{g})$.

As it is very well known \cite{choquet,hicks} \ if $\boldsymbol{g}%
:=\boldsymbol{i}^{\ast}\boldsymbol{\mathring{g}}$ and $\boldsymbol{v}\in\sec
TM$ (and $v=\boldsymbol{g}(\boldsymbol{v},)$) and $\mathcal{C}\in
\sec\mathcal{C}\ell(\mathring{M},\mathtt{\mathring{g}})$ the Levi-Civita
connection $D$ of $\boldsymbol{g}:=\boldsymbol{i}^{\ast}\boldsymbol{\mathring
{g}}$ is given by
\begin{equation}
D_{\boldsymbol{v}}\mathcal{C}:=(\mathring{D}_{\boldsymbol{v}}\mathcal{C)}%
_{\parallel} \label{p2}%
\end{equation}
and of course%
\begin{equation}
D_{\boldsymbol{v}}\mathcal{C}:=(v\cdot\boldsymbol{\mathring{\partial}%
}\mathcal{C})_{\parallel} \label{ss3}%
\end{equation}

Moreover, note that we can write for any $\mathcal{C}\in\sec\mathcal{C}%
\ell(M,\mathtt{g})$
\begin{equation}
v\cdot\boldsymbol{\partial}\mathcal{C}=(v\cdot\boldsymbol{\mathring{\partial}%
}\mathcal{C})_{\parallel}=\mathbf{P}(v\cdot\boldsymbol{\mathring{\partial}%
}\mathcal{C}) \label{p3}%
\end{equation}

Also, writing
\begin{equation}
(\mathring{D}_{\boldsymbol{v}}\mathcal{C)}_{\perp}:=\mathbf{P}_{\perp}%
(v\cdot\boldsymbol{\mathring{\partial}}\mathcal{C}) \label{ss4}%
\end{equation}
we have%
\begin{equation}
v\cdot\boldsymbol{\partial}=\mathbf{P(}v\cdot\boldsymbol{\mathring{\partial}%
}\mathbf{)}=(v\cdot\boldsymbol{\mathring{\partial}})_{\parallel}%
=(v\cdot\mathfrak{\mathring{d}})_{\parallel}=v\cdot\boldsymbol{\mathring
{\partial}}-\mathbf{P}_{\perp}(v\cdot\boldsymbol{\mathring{\partial}}%
)=v\cdot\mathfrak{\mathring{d}-}\mathbf{P}_{\perp}(v\cdot\mathfrak{\mathring
{d}}). \label{p5}%
\end{equation}
So, it is
\begin{align}
v\cdot\mathfrak{\mathring{d}}I_{m}I_{m}^{-1}  &  =%
{\textstyle\sum\nolimits_{j=1}^{m}}
\theta^{1}\cdots(D_{\boldsymbol{v}}\theta^{j}+\mathbf{P}_{\perp}%
(v\cdot\mathfrak{\mathring{d}}\theta\mathfrak{\mathfrak{^{j}}))\cdots}%
\theta^{m}I_{m}^{-1}\nonumber\\
&  =D_{\boldsymbol{v}}I_{m}I_{m}^{-1}+\mathbf{P}_{\perp}(v\cdot
\mathfrak{\mathring{d}\theta}_{j}\mathfrak{)\wedge\theta^{j}.} \label{p5a}%
\end{align}

Now, $D_{\boldsymbol{v}}I_{m}\in\sec%
{\textstyle\bigwedge\nolimits^{m}}
T^{\ast}M\hookrightarrow\sec\mathcal{C}\ell(M,\mathtt{g})$ is a multiple of
$I_{m}$ and since $I_{m}^{2}=\pm1$ depending on the signature of the metric
$\boldsymbol{g}$ we have that $D_{\boldsymbol{v}}I_{m}=0$.

Indeed,
\begin{equation}
0=D_{\boldsymbol{v}}I_{m}^{2}=2(D_{\boldsymbol{v}}I_{m})I_{m} \label{p7}%
\end{equation}
and so%
\[
0=(D_{\boldsymbol{v}}I_{m})I_{m}I_{m}^{-1}=D_{\boldsymbol{v}}I_{m}.
\]
In any Clifford algebra bundle, in particular $\mathcal{C}\ell(M,\mathtt{g})$
we can build multiples of the $I_{m}$ not only multiplying it by an scalar
function, but also multiplying it by a an appropriated biform. This result
will be used below to define the shape biform.

\subsection{$\mathbf{S}(\mathring{v})=\mathbf{S}(v)=\partial_{u}%
\boldsymbol{\wedge}\mathbf{P}_{u}\boldsymbol{~}(\mathring{v})$ and
$\mathbf{S}(\mathring{v}_{\perp})=\partial_{u}\boldsymbol{\lrcorner}%
\mathbf{P}_{u}\boldsymbol{~}(\mathring{v})$}

For any $\mathcal{C\in}\sec$ $\mathcal{C}\ell(M,\mathtt{g})$ it is
$\mathcal{C}=\mathbf{P}(\mathcal{C})$ we have (with $u\in\sec%
{\textstyle\bigwedge\nolimits^{1}}
T^{\ast}\mathring{M}\hookrightarrow\sec$ $\mathcal{C}\ell(\mathring
{M},\mathtt{\mathring{g}})$)
\begin{align}
\mathfrak{\mathring{d}}\boldsymbol{(}\mathbf{P}(\mathcal{C}))  &
=\mathfrak{\mathring{d}}\mathbf{P}\boldsymbol{~}(\mathcal{C})-\mathbf{P}%
(\mathfrak{\mathring{d}}\mathcal{C})\nonumber\\
&  =\partial_{u}\mathbf{P}_{u}\boldsymbol{~}(\mathcal{C})-\mathbf{P}%
(\mathfrak{\mathring{d}}\mathcal{C})\nonumber\\
&  =\partial_{u}\boldsymbol{\wedge}\mathbf{P}_{u}\boldsymbol{~}(\mathcal{C}%
)+\partial_{u}\boldsymbol{\lrcorner}\mathbf{P}_{u}\boldsymbol{~}%
(\mathcal{C})-\mathbf{P}(\mathfrak{\mathring{d}}\mathcal{C}) \label{A1}%
\end{align}
where
\begin{equation}
\mathbf{P}_{u}\boldsymbol{~}(\mathcal{C}):=u\cdot\mathfrak{\mathring{d}%
}\mathbf{P~}(\mathcal{C})=u\cdot\mathfrak{\mathring{d}(}\mathbf{P}%
(\mathcal{C}))-\mathfrak{(}\mathbf{P}(u\cdot\mathfrak{\mathring{d}}%
\mathcal{C})). \label{A1a}%
\end{equation}

Recall that for $\mathring{v}\in\sec%
{\textstyle\bigwedge\nolimits^{1}}
T^{\ast}\mathring{M}\hookrightarrow\sec$ $\mathcal{C}\ell(\mathring
{M},\mathtt{\mathring{g}})$ we can write%
\begin{equation}
\mathbf{S}(\mathring{v})=\mathfrak{\mathring{d}}\mathbf{P}\boldsymbol{~}%
(\mathring{v})=\partial_{u}\boldsymbol{\wedge}\mathbf{P}_{u}\boldsymbol{~}%
(\mathring{v})+\partial_{u}\boldsymbol{\lrcorner}\mathbf{P}_{u}\boldsymbol{~}%
(\mathring{v}) \label{A2}%
\end{equation}
where we used that for any $\mathcal{\mathring{C}\in}\sec$ $\mathcal{C}%
\ell(\mathring{M},\mathtt{\mathring{g}})$ it is%
\begin{equation}
\partial_{u}\mathbf{P}_{u}\boldsymbol{~}(\mathcal{\mathring{C}}):=%
{\textstyle\sum\nolimits_{i=1}^{m}}
\theta^{\mathbf{i}}\frac{\partial}{\partial u^{\mathbf{i}}}u\cdot
\mathfrak{\mathring{d}}\mathbf{P\boldsymbol{~}}(\mathcal{\mathring{C}})=%
{\textstyle\sum\nolimits_{i=1}^{m}}
\theta^{\mathbf{i}}\mathring{D}_{\boldsymbol{\mathring{e}}_{\mathbf{i}}%
}\mathbf{P\boldsymbol{~}}(\mathcal{\mathring{C}})=\mathfrak{\mathring{d}%
}\mathbf{P\boldsymbol{~}}(\mathcal{\mathring{C}}) \label{A3}%
\end{equation}

Putting $\mathring{v}=\mathring{v}_{\parallel}+\mathring{v}_{\perp
}=v+\mathring{v}_{\perp}$ we have the

\begin{proposition}
\label{proS}%
\begin{equation}
\mathbf{S}(\mathring{v})=\mathbf{S}(v)=\partial_{u}\boldsymbol{\wedge
}\mathbf{P}_{u}\boldsymbol{~}(\mathring{v}),~~~~~~\mathbf{S}(\mathring
{v}_{\perp})=\partial_{u}\lrcorner\mathbf{P}_{u}(\mathring{v}). \label{A4}%
\end{equation}

\end{proposition}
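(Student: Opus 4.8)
The plan is to start from the grade decomposition already recorded in Eq.(\ref{A2}), namely $\mathbf{S}(\mathring{v})=\partial_{u}\wedge\mathbf{P}_{u}(\mathring{v})+\partial_{u}\lrcorner\mathbf{P}_{u}(\mathring{v})$, and to note that since $\mathbf{P}_{u}(\mathring{v})$ is a $1$-form the first summand is a homogeneous $2$-form and the second a scalar (this is precisely the wedge/contraction split of Eq.(\ref{du1}), with $\partial_{u}=\theta^{\mathbf{k}}\partial/\partial u^{\mathbf{k}}$ running over the $m$ tangential directions). Hence it suffices to identify the $2$-form part with $\mathbf{S}(v)$ and the scalar part with $\mathbf{S}(\mathring{v}_{\perp})$. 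Writing $\mathring{v}=v+\mathring{v}_{\perp}$ and using linearity of $\mathbf{P}_{u}$, I would reduce the statement to two decoupling identities: $\partial_{u}\wedge\mathbf{P}_{u}(\mathring{v}_{\perp})=0$ and $\partial_{u}\lrcorner\mathbf{P}_{u}(v)=0$.

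First I would evaluate $\mathbf{P}_{u}$ on each piece via Eq.(\ref{A1a}). Since $\mathbf{P}(\mathring{v}_{\perp})=0$ (the contraction of a normal $1$-form with $I_{m}$ vanishes, cf. Remark \ref{outside}), we obtain $\mathbf{P}_{\theta_{\mathbf{k}}}(\mathring{v}_{\perp})=-\mathbf{P}(\mathring{D}_{\boldsymbol{e}_{\mathbf{k}}}\mathring{v}_{\perp})=-(\mathring{D}_{\boldsymbol{e}_{\mathbf{k}}}\mathring{v}_{\perp})_{\parallel}$, a tangential $1$-form; while $\mathbf{P}(v)=v$ gives $\mathbf{P}_{\theta_{\mathbf{k}}}(v)=\mathbf{P}_{\perp}(\mathring{D}_{\boldsymbol{e}_{\mathbf{k}}}v)=(\mathring{D}_{\boldsymbol{e}_{\mathbf{k}}}v)_{\perp}$, a normal $1$-form (here I use $\theta_{\mathbf{k}}\cdot\mathfrak{\mathring{d}}=\mathring{D}_{\boldsymbol{e}_{\mathbf{k}}}$). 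The second decoupling identity is then immediate: $\partial_{u}\lrcorner\mathbf{P}_{u}(v)=\theta^{\mathbf{k}}\lrcorner(\mathring{D}_{\boldsymbol{e}_{\mathbf{k}}}v)_{\perp}=\theta^{\mathbf{k}}\cdot(\mathring{D}_{\boldsymbol{e}_{\mathbf{k}}}v)_{\perp}=0$, because a tangential and a normal $1$-form are orthogonal. The same computation shows the scalar part of $\mathbf{S}(v)$ vanishes, so that $\mathbf{S}(v)=\partial_{u}\wedge\mathbf{P}_{u}(v)$ is a pure biform.

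For the first decoupling identity I would show $\theta^{\mathbf{k}}\wedge\mathbf{P}_{\theta_{\mathbf{k}}}(\mathring{v}_{\perp})=0$. As $\mathbf{P}_{\theta_{\mathbf{k}}}(\mathring{v}_{\perp})=-(\mathring{D}_{\boldsymbol{e}_{\mathbf{k}}}\mathring{v}_{\perp})_{\parallel}$ is tangential, this wedge vanishes iff the bilinear form $(\mathbf{k},\mathbf{l})\mapsto\theta_{\mathbf{l}}\cdot(\mathring{D}_{\boldsymbol{e}_{\mathbf{k}}}\mathring{v}_{\perp})_{\parallel}$ is symmetric. By metric compatibility of $\mathring{D}$ and $\mathring{v}_{\perp}\cdot\theta_{\mathbf{l}}=0$ this quantity equals $-\mathring{v}_{\perp}\cdot\mathring{D}_{\boldsymbol{e}_{\mathbf{k}}}\theta_{\mathbf{l}}$, whose antisymmetric part is $\mathring{v}_{\perp}\cdot(\mathring{D}_{\boldsymbol{e}_{\mathbf{l}}}\theta_{\mathbf{k}}-\mathring{D}_{\boldsymbol{e}_{\mathbf{k}}}\theta_{\mathbf{l}})$; this is the contraction of the normal $\mathring{v}_{\perp}$ with (the metric dual of) $[\boldsymbol{e}_{\mathbf{k}},\boldsymbol{e}_{\mathbf{l}}]$, which is tangential because $\mathring{D}$ is torsion free and the $\boldsymbol{e}_{\mathbf{k}}$ span $TM$. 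Hence the antisymmetric part is zero, giving $\partial_{u}\wedge\mathbf{P}_{u}(\mathring{v}_{\perp})=0$; dually the biform part of $\mathbf{S}(\mathring{v}_{\perp})$ vanishes, so $\mathbf{S}(\mathring{v}_{\perp})=\partial_{u}\lrcorner\mathbf{P}_{u}(\mathring{v}_{\perp})$ is a pure scalar.

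Combining, linearity yields $\partial_{u}\wedge\mathbf{P}_{u}(\mathring{v})=\partial_{u}\wedge\mathbf{P}_{u}(v)=\mathbf{S}(v)$ and $\partial_{u}\lrcorner\mathbf{P}_{u}(\mathring{v})=\partial_{u}\lrcorner\mathbf{P}_{u}(\mathring{v}_{\perp})=\mathbf{S}(\mathring{v}_{\perp})$, which are the asserted equalities, the first read as the equality of the $2$-form (biform) content of $\mathbf{S}(\mathring{v})$ with $\mathbf{S}(v)$ while the scalar content is carried by $\mathbf{S}(\mathring{v}_{\perp})$. The one genuinely nontrivial step — and where I expect to spend the most care — is the symmetry argument of the third paragraph, i.e. recognizing the Weingarten-type object $(\mathring{D}_{\boldsymbol{e}_{\mathbf{k}}}\mathring{v}_{\perp})_{\parallel}$ as symmetric in its indices; everything else is grade bookkeeping together with the orthogonality of tangential and normal forms.
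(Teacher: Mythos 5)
Your proof is correct, and its skeleton --- split $\mathring{v}=v+\mathring{v}_{\perp}$ in Eq.(\ref{A2}) and kill the two cross terms $\partial_{u}\lrcorner\mathbf{P}_{u}(v)$ and $\partial_{u}\wedge\mathbf{P}_{u}(\mathring{v}_{\perp})$ --- is the same as the paper's. The differences lie in how the two vanishing statements are obtained. You read off directly from Eq.(\ref{A1a}) that $\mathbf{P}_{u}(v)=\mathbf{P}_{\perp}(u\cdot\mathfrak{\mathring{d}}v)$ is normal and $\mathbf{P}_{u}(\mathring{v}_{\perp})=-\mathbf{P}(u\cdot\mathfrak{\mathring{d}}\mathring{v}_{\perp})$ is tangential, whereas the paper derives the same facts by differentiating the idempotency relation $\mathbf{P}^{2}=\mathbf{P}$ (Eqs.(\ref{A7})--(\ref{A8})); these routes are equivalent. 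The substantive divergence is in the wedge identity: the paper pairs $\partial_{u}\wedge\mathbf{P}_{u}(\mathring{v}_{\perp})$ against an arbitrary tangential bivector $t\wedge y$ and reduces the claim to the symmetry statement $t\cdot\mathbf{P}(y\cdot\mathfrak{\mathring{d}}\mathring{v}_{\perp})=y\cdot\mathbf{P}(t\cdot\mathfrak{\mathring{d}}\mathring{v}_{\perp})$, which is left to the final ``$=0$'' of Eq.(\ref{A11}) without further argument. You instead prove exactly this symmetry: metric compatibility gives $\theta_{\mathbf{l}}\cdot(\mathring{D}_{\boldsymbol{e}_{\mathbf{k}}}\mathring{v}_{\perp})=-\mathring{v}_{\perp}\cdot\mathring{D}_{\boldsymbol{e}_{\mathbf{k}}}\theta_{\mathbf{l}}$, and torsion-freeness identifies the antisymmetric part with the pairing of the normal form $\mathring{v}_{\perp}$ against the (tangential) bracket $[\boldsymbol{e}_{\mathbf{k}},\boldsymbol{e}_{\mathbf{l}}]$, which vanishes --- the classical symmetry of the second fundamental form, and the same fact the paper records separately as Eq.(\ref{O2}) but does not invoke explicitly inside this proof. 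So your argument is not only correct but actually supplies the justification that the paper's last step presupposes; what it costs is nothing beyond the grade bookkeeping you already acknowledge.
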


\begin{proof}
Indeed,%

\begin{equation}
\mathbf{S}(\mathring{v})=\mathbf{S}(\mathring{v}_{\parallel})+\mathbf{S}%
(\mathring{v}_{\perp})=\partial_{u}\boldsymbol{\wedge}\mathbf{P}%
_{u}\boldsymbol{~}(\mathring{v})+\partial_{u}\boldsymbol{\lrcorner}%
\mathbf{P}_{u}\boldsymbol{~}(\mathring{v}) \label{A5a}%
\end{equation}
and so, it is enough to show that
\begin{equation}
\partial_{u}\boldsymbol{\lrcorner}\mathbf{P}_{u}\boldsymbol{~}(\mathring
{v}_{\parallel})=0\text{ \ \ and \ \ }\partial_{u}\boldsymbol{\wedge
}\mathbf{P}_{u}\boldsymbol{~}(\mathring{v}_{\perp})=0, \label{A6}%
\end{equation}
From $\mathbf{P}^{2}(\mathring{v})=\mathbf{P}(\mathring{v})$ we get
\begin{equation}
\mathbf{P}_{u}\mathbf{P}(\mathring{v}_{\parallel})+\mathbf{PP}_{u}%
(\mathring{v}_{\parallel})=\mathbf{P}_{u}(\mathring{v}_{\parallel}),
\label{A7}%
\end{equation}
So, for $\mathring{v}_{\parallel}$ and $\mathring{v}_{\perp}$ it is
\begin{equation}
\mathbf{PP}_{u}(\mathring{v}_{\parallel})=0\text{ \ \ \ and \ \ }%
\mathbf{PP}_{u}(\mathring{v}_{\perp})=\mathbf{P}_{u}(\mathring{v}_{\perp}).
\label{A8}%
\end{equation}
Since $\mathbf{PP}_{u}(\mathring{v}_{\parallel})=0$ we have that%
\begin{equation}
\partial_{u}\lrcorner\mathbf{PP}_{u}(\mathring{v}_{\parallel})=\mathbf{P(}%
\partial_{u})\lrcorner\mathbf{P}_{u}(\mathring{v}_{\parallel})=\partial
_{u}\lrcorner\mathbf{P}_{u}(\mathring{v}_{\parallel})=0. \label{A9}%
\end{equation}
\ \ From $\mathbf{PP}_{u}(\mathring{v}_{\perp})=\mathbf{P}_{u}(\mathring
{v}_{\perp})$ we can write%
\begin{equation}
\partial_{u}\wedge\mathbf{P}_{u}(\mathring{v}_{\perp})=\partial_{u}%
\wedge\mathbf{PP}_{u}(\mathring{v}_{\perp})=\mathbf{P}(\partial_{u}%
)\wedge\mathbf{PP}_{u}(\mathring{v}_{\perp})=\mathbf{P(}\partial_{u}%
\wedge\mathbf{P}_{u}(\mathring{v}_{\perp})). \label{A10}%
\end{equation}
Now, take $t,y\in\sec%
{\textstyle\bigwedge\nolimits^{1}}
T^{\ast}M\hookrightarrow\sec$ $\mathcal{C}\ell(M,\mathtt{g})$. We have
\begin{gather}
(t\wedge y)\cdot(\partial_{u}\wedge\mathbf{PP}_{u}(\mathring{v}_{\perp
}))=(t\wedge y)\cdot\mathbf{(}\partial_{u}\wedge\mathbf{P}_{u}(\mathring
{v}_{\perp}))\nonumber\\
=t\lrcorner((y\cdot\boldsymbol{\partial}_{u}\wedge\mathbf{P}_{u}(\mathring
{v}_{\perp})-\boldsymbol{\partial}_{u}\wedge((y\lrcorner\mathbf{P}%
_{u}(\mathring{v}_{\perp}))\nonumber\\
=t\lrcorner(y\cdot\mathfrak{\mathring{d}(}\mathbf{P}(\mathring{v}_{\perp
}))-\mathbf{P(}y\cdot\mathfrak{\mathring{d}}\mathring{v}_{\perp}%
)-\boldsymbol{\theta}^{\mathbf{i}}\wedge(y\lrcorner(\boldsymbol{\theta
}_{\mathbf{i}}\cdot\mathfrak{\mathring{d}}(\mathbf{P}(\mathring{v}_{\perp
}))-\mathbf{P(}\boldsymbol{\theta}_{\mathbf{i}}\cdot\mathfrak{\mathring{d}%
}\mathring{v}_{\perp})))\nonumber\\
=t\lrcorner(\mathbf{P(}y\cdot\mathfrak{\mathring{d}}\mathring{v}_{\perp
})-\boldsymbol{\theta}^{\mathbf{i}}\wedge(y\lrcorner\mathbf{P(}%
\boldsymbol{\theta}_{\mathbf{i}}\cdot\mathfrak{\mathring{d}}\mathring
{v}_{\perp})))\nonumber\\
=t\lrcorner(D_{\boldsymbol{y}}\mathring{v}_{\perp})-\boldsymbol{\theta
}^{\mathbf{i}}\wedge(y\lrcorner(D_{\boldsymbol{e}_{i}}\mathring{v}_{\perp
})))=0 \label{A11}%
\end{gather}
from where it follows that%
\begin{equation}
\partial_{u}\wedge\mathbf{P}_{u}(\mathring{v}_{\perp})=0 \label{A12}%
\end{equation}
and the proposition is proved.
\end{proof}

\subsubsection{Shape Biform $\mathcal{S}$}

\begin{definition}
The shape biform \emph{(}a $(1,2)$-extensor field\emph{)} is the mapping
\begin{gather}
\mathcal{S}:\sec%
{\textstyle\bigwedge\nolimits^{1}}
T^{\ast}M\rightarrow%
{\textstyle\bigwedge\nolimits^{2}}
T^{\ast}M,\label{p6a}\\
v\boldsymbol{\mapsto}\mathcal{S(}v),\nonumber
\end{gather}
such that
\begin{equation}
v\cdot\mathfrak{\mathring{d}}I_{m}=-\mathcal{S(}v)I_{m}. \label{p6b}%
\end{equation}

\end{definition}

From Eq.(\ref{pcomu}) it follows that%
\begin{equation}
\mathcal{S(}v)\lrcorner I_{m}=0\text{ and }\mathcal{S(}v)\wedge I_{m}=0.
\label{p8}%
\end{equation}
Since $\mathcal{S(}v)\lrcorner I_{m}=0$ it follows from Remark \ref{outside}
that
\begin{equation}
\mathbf{P(}\mathcal{S(}v))=0. \label{P9}%
\end{equation}
Now, using the fact that $D_{\boldsymbol{v}}I_{m}=0$ and Eq.(\ref{p6b}) it
follows from Eq.(\ref{p5a}) that
\[
v\cdot\mathfrak{\mathring{d}}I_{m}=(\mathbf{P}_{\perp}(v\cdot
\mathfrak{\mathring{d}\theta}_{j}\mathfrak{)\wedge\theta^{j})}I_{m}%
=-\mathcal{S(}v)I_{m},
\]
i.e.,
\begin{equation}
\mathcal{S(}v)=-\mathbf{P}_{\perp}(v\cdot\mathfrak{\mathring{d}}\theta
_{j}\mathfrak{)\wedge\theta^{j}}. \label{p10}%
\end{equation}

\begin{proposition}
For any $\mathcal{C}\in\sec\mathcal{C}\ell(M,\mathtt{g})$ we have
\begin{equation}
D_{\boldsymbol{v}}\mathcal{C}=v\cdot\mathfrak{\mathring{d}}\mathcal{C}%
+\mathcal{S(}v)\times\mathcal{C}=\mathring{D}_{\boldsymbol{v}}\mathcal{C}%
+\mathcal{S(}v)\times\mathcal{C}. \label{p11}%
\end{equation}

\end{proposition}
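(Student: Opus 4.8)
The plan is to treat the two asserted equalities separately. The right-hand equality $v\cdot\mathfrak{\mathring{d}}\mathcal{C}=\mathring{D}_{\boldsymbol{v}}\mathcal{C}$ is essentially definitional: since $\boldsymbol{v}\in\sec TM$ is tangent, writing $\mathfrak{\mathring{d}}=\theta^{\mathbf{i}}\mathring{D}_{\boldsymbol{e}_{\mathbf{i}}}$ as in Eq.(\ref{2a}) gives $v\cdot\mathfrak{\mathring{d}}=(v\cdot\theta^{\mathbf{i}})\mathring{D}_{\boldsymbol{e}_{\mathbf{i}}}=v^{\mathbf{i}}\mathring{D}_{\boldsymbol{e}_{\mathbf{i}}}=\mathring{D}_{\boldsymbol{v}}$, so the two operators coincide on every $\mathcal{C}$. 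The substance is therefore the left-hand equality, which by Eq.(\ref{p2}) (so that $D_{\boldsymbol{v}}\mathcal{C}=(\mathring{D}_{\boldsymbol{v}}\mathcal{C})_{\parallel}=\mathbf{P}(\mathring{D}_{\boldsymbol{v}}\mathcal{C})$) is equivalent to showing $\mathcal{S}(v)\times\mathcal{C}=-\mathbf{P}_{\perp}(\mathring{D}_{\boldsymbol{v}}\mathcal{C})$ for every $\mathcal{C}\in\sec\mathcal{C}\ell(M,\mathtt{g})$.

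I would prove this by observing that both maps $\mathcal{C}\mapsto D_{\boldsymbol{v}}\mathcal{C}$ and $\mathcal{C}\mapsto\mathring{D}_{\boldsymbol{v}}\mathcal{C}+\mathcal{S}(v)\times\mathcal{C}$ are derivations of the Clifford product. Indeed $D_{\boldsymbol{v}}$ and $\mathring{D}_{\boldsymbol{v}}$ are metric compatible connections, hence derivations of the geometric product, and $\mathcal{C}\mapsto\mathcal{S}(v)\times\mathcal{C}=\tfrac{1}{2}[\mathcal{S}(v),\mathcal{C}]$ is a derivation because $\mathrm{ad}_{\mathcal{S}(v)}$ is a derivation of any associative product. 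Because $\boldsymbol{g}=\boldsymbol{i}^{\ast}\boldsymbol{\mathring{g}}$, the embedding $\bigwedge T^{\ast}M\hookrightarrow\mathcal{C}\ell(\mathring{M},\mathtt{\mathring{g}})$ respects Clifford products, so a derivation on $\sec\mathcal{C}\ell(M,\mathtt{g})$ is determined by its values on the generators, namely the grade-$0$ and grade-$1$ fields. Thus it suffices to check the identity for scalars and for the tangent $1$-forms $\theta^{\mathbf{k}}$.

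For a scalar $f$ both sides return $\boldsymbol{v}(f)$, since $\mathcal{S}(v)\times f=0$. For $\theta^{\mathbf{k}}$ the crux is the computation of $\mathcal{S}(v)\times\theta^{\mathbf{k}}$: by Eq.(\ref{pcomu2}) it equals $-\theta^{\mathbf{k}}\lrcorner\mathcal{S}(v)$, and inserting the explicit form of $\mathcal{S}(v)$ from Eq.(\ref{p10}) together with the standard identity $a\lrcorner(b\wedge c)=(a\cdot b)c-(a\cdot c)b$ gives
\[
\mathcal{S}(v)\times\theta^{\mathbf{k}}=\theta^{\mathbf{k}}\lrcorner\big(\mathbf{P}_{\perp}(v\cdot\mathfrak{\mathring{d}}\theta_{\mathbf{j}})\wedge\theta^{\mathbf{j}}\big)=-(\theta^{\mathbf{k}}\cdot\theta^{\mathbf{j}})\mathbf{P}_{\perp}(v\cdot\mathfrak{\mathring{d}}\theta_{\mathbf{j}})=-\mathbf{P}_{\perp}(v\cdot\mathfrak{\mathring{d}}\theta^{\mathbf{k}}),
\]
where the term $(\theta^{\mathbf{k}}\cdot\mathbf{P}_{\perp}(v\cdot\mathfrak{\mathring{d}}\theta_{\mathbf{j}}))\theta^{\mathbf{j}}$ drops because a tangent $1$-form is orthogonal to any normal one, and the last step uses $(\theta^{\mathbf{k}}\cdot\theta^{\mathbf{j}})\theta_{\mathbf{j}}=\theta^{\mathbf{k}}$. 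Consequently $\mathring{D}_{\boldsymbol{v}}\theta^{\mathbf{k}}+\mathcal{S}(v)\times\theta^{\mathbf{k}}=\mathbf{P}(\mathring{D}_{\boldsymbol{v}}\theta^{\mathbf{k}})=D_{\boldsymbol{v}}\theta^{\mathbf{k}}$, so the two derivations agree on all generators, hence everywhere.

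The main obstacle I anticipate is the justification of the reduction to generators: one must be careful that although each summand $\mathring{D}_{\boldsymbol{v}}\mathcal{C}$ and $\mathcal{S}(v)\times\mathcal{C}$ individually leaves $\sec\mathcal{C}\ell(M,\mathtt{g})$ (acquiring normal components), their sum does not, and that all Leibniz bookkeeping is carried out in the ambient algebra $\mathcal{C}\ell(\mathring{M},\mathtt{\mathring{g}})$. If one prefers to avoid the abstract derivation argument, the same conclusion follows by a finite induction on grade, expanding $\theta^{\mathbf{i}_{1}\cdots\mathbf{i}_{r+1}}=\theta^{\mathbf{i}_{1}}\theta^{\mathbf{i}_{2}\cdots\mathbf{i}_{r+1}}$ and distributing both derivations over the Clifford product exactly as in the induction of Eq.(\ref{pform}); the base cases are the grade-$0$ and grade-$1$ verifications above, with the contraction computation for $\theta^{\mathbf{k}}$ as the only nontrivial step.
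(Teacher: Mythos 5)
Your proposal is correct and follows essentially the same route as the paper: the substantive step is the contraction of a tangent $1$-form into the explicit expression $\mathcal{S}(v)=-\mathbf{P}_{\perp}(v\cdot\mathfrak{\mathring{d}}\theta_{\mathbf{j}})\wedge\theta^{\mathbf{j}}$, using tangent--normal orthogonality to identify $\mathcal{S}(v)\times w$ with $-\mathbf{P}_{\perp}(\mathring{D}_{\boldsymbol{v}}w)$, which is exactly the paper's Eq.(\ref{p13})--(\ref{p14}). Your extension to general $\mathcal{C}$ by observing that both sides are derivations of the ambient Clifford product and agree on generators is the same mechanism as the paper's Leibniz computation (\ref{p15}) followed by finite induction, merely phrased more abstractly.
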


\begin{proof}
Taking into account Eq.(\ref{p10}) we have for any $v,w\in\sec%
{\textstyle\bigwedge\nolimits^{1}}
T^{\ast}M\hookrightarrow\sec\mathcal{C}\ell(M,\mathtt{g})$%
\begin{align}
v\boldsymbol{\lrcorner}\mathcal{S}(w)  &  =-v\lrcorner(\mathbf{P}_{\perp
}(w\cdot\mathfrak{\mathring{d}}\theta_{j}\mathfrak{)\wedge}\theta
\mathfrak{^{j})}\nonumber\\
&  =-(\boldsymbol{v}\lrcorner(\mathbf{P}_{\perp}(w\cdot\mathfrak{\mathring{d}%
}\theta_{j}\mathfrak{))}\theta\mathfrak{^{j}}+v^{j}\mathbf{P}_{\perp}%
(w\cdot\mathfrak{\mathring{d}}\theta_{j}\mathfrak{)}\nonumber\\
&  =v^{j}\mathbf{P}_{\perp}(w\cdot\mathfrak{\mathring{d}}\theta_{j}%
\mathfrak{)}=\mathbf{P}_{\perp}(w\cdot\mathfrak{\mathring{d}}v)-\mathbf{P}%
_{\perp}[(w\cdot\mathfrak{\mathring{d}}v^{j})\theta_{j}\mathfrak{]}\nonumber\\
&  =\mathbf{P}_{\perp}(w\cdot\mathfrak{\mathring{d}}v). \label{p13}%
\end{align}
So,%

\begin{align}
v\cdot\mathfrak{\mathring{d}}w  &  =\mathbf{P(}v\mathbf{\cdot}%
\mathfrak{\mathring{d}}w\mathbf{)+P}_{\perp}(v\cdot\mathfrak{\mathring{d}%
}w)\nonumber\\
&  =D_{\boldsymbol{v}}w+w\boldsymbol{\lrcorner}\mathcal{S}(v)\nonumber\\
&  =D_{\boldsymbol{v}}w-\mathcal{S}(v)\llcorner w\boldsymbol{.} \label{p14}%
\end{align}
Now, for $v,w\in\sec%
{\textstyle\bigwedge\nolimits^{1}}
T^{\ast}M\hookrightarrow\sec\mathcal{C}\ell(M,\mathtt{g})$ we have%
\begin{align}
D_{\boldsymbol{v}}(wu)  &  =(D_{\boldsymbol{v}}w)u+wD_{\boldsymbol{v}%
}u=(\mathring{D}_{\boldsymbol{v}}w)u+(\mathcal{S}(v)\times w)u+w\mathring
{D}_{\boldsymbol{v}}u+w(\mathcal{S(}v)\times u)\nonumber\\
&  =(\mathring{D}_{\boldsymbol{v}}wu)+(\mathcal{S}(v)\times w)u-w(u\times
\mathcal{S(}v))\nonumber\\
&  =(\mathring{D}_{\boldsymbol{v}}wu)+(\mathcal{S}(v)\times wu), \label{p15}%
\end{align}
from where the proposition follows trivially by finite induction.
\end{proof}

Of course, it is
\begin{equation}
D_{\boldsymbol{e}_{_{i}}}\mathcal{C}=\mathring{D}_{\boldsymbol{e}_{\mathbf{i}%
}}\mathcal{C}+\mathcal{S(}v)\times\mathcal{C} \label{p16}%
\end{equation}
Now, recalling Eq.(\ref{code}) we have\footnote{Take notice that this formula
being gauge dependent is not valid if $\boldsymbol{e}_{i}\mapsto
\mathbf{x}_{\mathbf{i}}$ where the $\mathbf{x}_{\mathbf{i}}$ coordinate vector
fields. See Corollary \ref{coro1}.}
\begin{equation}
\mathring{D}_{\boldsymbol{e}_{\mathbf{i}}}\mathcal{C}=\eth_{\boldsymbol{e}%
_{\mathbf{i}}}\mathcal{C}+\mathring{\omega}_{e_{i}}\times\mathcal{C}
\label{p17}%
\end{equation}
where for $\mathbf{i},\mathbf{j}=1,...,m,$ $\mathring{D}_{\boldsymbol{e}%
_{\mathbf{i}}}\theta^{\mathbf{j}}=\mathring{D}_{\boldsymbol{e}_{i}}%
\mathring{\theta}^{j}=-%
{\textstyle\sum\nolimits_{\mathbf{k}=1}^{n}}
\mathring{\omega}_{\cdot\mathbf{ik}}^{\mathbf{j}\cdot\cdot}\mathring{\theta
}^{\mathbf{k}}$, it is
\begin{equation}
\mathring{\omega}_{\boldsymbol{v}}=\frac{1}{2}v^{\mathbf{c}}\mathring{\omega
}_{\cdot\mathbf{c}\cdot}^{\mathbf{a}\cdot\mathbf{b}}\theta_{\mathbf{a}}%
\wedge\theta_{\mathbf{b}} \label{p18}%
\end{equation}

So, we get
\begin{align}
D_{\boldsymbol{v}}\mathcal{C}  &  =v\cdot\mathfrak{\mathring{d}}%
\mathcal{C}+\mathcal{S(}v)\times\mathcal{C}\nonumber\\
&  =\mathfrak{d}_{\boldsymbol{v}}\mathcal{C}+\mathfrak{(}\mathring{\omega
}_{\boldsymbol{v}}+\mathcal{S(}v))\times\mathcal{C} \label{p19}%
\end{align}
and in particular%
\begin{equation}
D_{\boldsymbol{e}_{_{i}}}\mathcal{C}=\eth_{\boldsymbol{e}_{\mathbf{i}}%
}\mathcal{C}+(\mathring{\omega}_{e_{i}}+\mathcal{S(}\boldsymbol{e}_{_{i}%
}))\times\mathcal{C}. \label{p20}%
\end{equation}
Comparison of Eq.(\ref{p20}) with Eq.(\ref{om2}) (valid for any metric
compatible connection) implies the important result%
\begin{equation}
\omega_{\boldsymbol{v}}=\mathfrak{(}\mathring{\omega}_{\boldsymbol{v}%
}+\mathcal{S(}v)) \label{p21}%
\end{equation}
We can easily find by direct calculation that in a gauge where $\mathring
{\omega}_{\boldsymbol{v}}\neq0$,
\begin{equation}
\omega_{\boldsymbol{v}}=\mathbf{P}\mathfrak{(}\mathring{\omega}%
_{\boldsymbol{v}}) \label{p21a}%
\end{equation}
which is consistent with the fact that from Eq.(\ref{P9}) it is $\mathbf{P}%
(\mathcal{S(}v))=0$.

Let $(\boldsymbol{x}^{1},...,\boldsymbol{x}^{m},...\boldsymbol{x}^{n})$ be the
natural orthogonal coordinate functions of $\mathring{M}\simeq\mathbb{R}^{n}$

\begin{corollary}
\label{coro1}For $\mathcal{C}\in\sec$ $\mathcal{C}\ell(M,\mathtt{g})$
\begin{equation}
D_{\boldsymbol{v}}\mathcal{C}=v^{\mathbf{i}}\frac{\partial}{\partial
\boldsymbol{x}^{\mathbf{i}}}\mathcal{C}+\mathcal{S(}v)\times\mathcal{C}.
\label{cor1}%
\end{equation}

\end{corollary}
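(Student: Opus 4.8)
The plan is to read the corollary as a specialization of Eq.~(\ref{p11}) to the Cartesian gauge of $\mathring{M}\simeq\mathbb{R}^{n}$. Equation~(\ref{p11}) already delivers
\[
D_{\boldsymbol{v}}\mathcal{C}=\mathring{D}_{\boldsymbol{v}}\mathcal{C}+\mathcal{S}(v)\times\mathcal{C},
\]
so the entire task reduces to showing that, in the natural coordinates $(\boldsymbol{x}^{1},\ldots,\boldsymbol{x}^{n})$, the ambient covariant derivative collapses to a plain directional derivative, $\mathring{D}_{\boldsymbol{v}}\mathcal{C}=v^{\mathbf{i}}\frac{\partial}{\partial\boldsymbol{x}^{\mathbf{i}}}\mathcal{C}$.

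First I would invoke the flatness of $\boldsymbol{\mathring{g}}$: since $\boldsymbol{\mathring{g}}=\eta_{ij}d\boldsymbol{x}^{i}\otimes d\boldsymbol{x}^{j}$ has constant components, all Christoffel symbols of $\mathring{D}$ in the coordinate basis $\{\partial/\partial\boldsymbol{x}^{i}\}$ vanish, whence $\mathring{D}_{\partial/\partial\boldsymbol{x}^{i}}\gamma^{j}=0$ with $\gamma^{j}=d\boldsymbol{x}^{j}$. This is exactly the coordinate representation $\boldsymbol{\mathring{\partial}}=\gamma^{i}\partial/\partial\boldsymbol{x}^{i}$ already recorded in Section~1. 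Expanding an arbitrary $\mathcal{C}\in\sec\mathcal{C}\ell(M,\mathtt{g})$ in the constant frame generated by the $\gamma^{j}$ and applying Leibniz, every connection term drops and only the component functions get differentiated; with $\boldsymbol{v}=v^{\mathbf{i}}\partial/\partial\boldsymbol{x}^{\mathbf{i}}$ (the bold index running over the $m$ coordinate directions tangent to $M$) this gives exactly $\mathring{D}_{\boldsymbol{v}}\mathcal{C}=v^{\mathbf{i}}\frac{\partial}{\partial\boldsymbol{x}^{\mathbf{i}}}\mathcal{C}$. Substituting into the displayed form of Eq.~(\ref{p11}) finishes the argument.

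The one point that needs care --- and the reason the footnote preceding the corollary flags gauge dependence --- is that this replacement of $\mathring{D}_{\boldsymbol{v}}$ by $v^{\mathbf{i}}\partial/\partial\boldsymbol{x}^{\mathbf{i}}$ is legitimate only because we have passed to the coordinate gauge, in which the ambient connection biform $\mathring{\omega}_{\boldsymbol{v}}$ of Eq.~(\ref{p19}) is identically zero. In an orthonormal gauge the corresponding form derivative $\eth_{\boldsymbol{v}}$ must still be corrected by $\mathring{\omega}_{\boldsymbol{v}}$, so the analogous formula with $\boldsymbol{e}_{i}$ replaced by coordinate fields would be false. I therefore expect the main (if modest) obstacle to be phrasing this reduction cleanly rather than any computation: the real content of the corollary is that, in the Cartesian gauge of $\mathbb{R}^{n}$, the whole discrepancy between $D$ and the flat directional derivative is carried by the single shape-biform term $\mathcal{S}(v)\times\mathcal{C}$.
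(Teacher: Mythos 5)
Your proposal is correct and follows essentially the same route as the paper: the paper's own proof also specializes Eq.~(\ref{p11})/(\ref{p19}) to the natural coordinate gauge, observing that $\mathring{D}_{\partial/\partial\boldsymbol{x}^{i}}d\boldsymbol{x}^{j}=0$ forces $\mathring{\omega}_{\partial/\partial\boldsymbol{x}^{\mathbf{i}}}=0$, which is exactly your vanishing-Christoffel-symbols argument in the flat metric $\eta_{ij}d\boldsymbol{x}^{i}\otimes d\boldsymbol{x}^{j}$. Your closing remark on gauge dependence matches the intent of the paper's footnote preceding Eq.~(\ref{p17}).
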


\begin{proof}
Taking into account that $D_{\frac{\partial}{\partial\boldsymbol{x}^{i}}%
}d\boldsymbol{x}^{j}=0$ follows that $\mathcal{\mathring{\omega}}%
_{\frac{\partial}{\partial\boldsymbol{x}^{\mathbf{i}}}}=\frac{1}{2}%
(\mathring{\Gamma}_{\mathbf{kil}})d\boldsymbol{x}^{k}\wedge d\boldsymbol{x}%
^{\mathbf{l}}=0$. Using this result in Eq.(\ref{p11}) with $\boldsymbol{e}%
_{\mathbf{i}}\mapsto\frac{\partial}{\partial\boldsymbol{x}^{\mathbf{i}}}$
gives the desired result.
\end{proof}

\begin{remark}
Comparison of \emph{Eq.(\ref{p19})} and \emph{Eq.(\ref{cor1})} shows that
$\mathcal{S(}v)$ cannot always be identified with $\omega\mathcal{(}v)$ which
is a gauge dependent operator.
\end{remark}

\subsection{Integrability Conditions}

\begin{remark}
Take into account that the commutator of Pfaff derivatives acting on any
$\mathcal{C}\in\sec$ $\mathcal{C}\ell(M,\mathtt{g})$ is in general non null,
i.e.,
\begin{equation}
\lbrack\eth_{\boldsymbol{e}_{\mathbf{i}}},\eth_{\boldsymbol{e}_{\mathbf{j}}%
}]\mathcal{C}=%
{\textstyle\sum\nolimits_{r=0}^{m}}
c_{\cdot\mathbf{ij}}^{\mathbf{k\cdot\cdot}}\boldsymbol{e}_{\mathbf{k}%
}(\mathcal{C}_{\mathbf{i}_{1}\cdots\mathbf{i}_{m}})\theta^{\mathbf{i}%
_{1}\cdots\mathbf{i}_{m}}\neq0, \label{p22}%
\end{equation}
unless $\boldsymbol{e}_{\mathbf{i}}$ are coordinate vector fields, i.e.,
$\boldsymbol{e}_{\mathbf{i}}\mapsto\mathbf{x}_{\mathbf{i}}$.
\end{remark}

\begin{remark}
Also, since the torsion of $\mathring{D}$ is null we have in general
\begin{equation}
\lbrack\theta_{\mathbf{i}}\cdot\mathfrak{\mathring{d},}\theta_{\mathbf{j}%
}\cdot\mathfrak{\mathring{d}]}\mathcal{C=[}\theta_{\mathbf{i}},\theta
_{\mathbf{j}}]\cdot\mathfrak{\mathring{d}}\mathcal{C}=c_{\cdot\mathbf{ij}%
}^{\mathbf{k\cdot\cdot}}\theta_{\mathbf{k}}\cdot\mathfrak{\mathring{d}%
}\mathcal{C}=c_{\cdot\mathbf{ij}}^{\mathbf{k\cdot\cdot}}\mathring
{D}_{\boldsymbol{e}_{\mathbf{k}}}\mathcal{C\neq}0, \label{p23}%
\end{equation}
unless $\boldsymbol{e}_{\mathbf{i}}$ are coordinate vector fields. Moreover,
for the case of orthonormal vector fields
\begin{equation}
\lbrack\theta_{\mathbf{i}}\cdot\mathfrak{\mathring{d},}\theta_{\mathbf{j}%
}\cdot\mathfrak{\mathring{d}]}\mathcal{C}\neq\lbrack\eth_{\boldsymbol{e}%
_{\mathbf{i}}},\eth_{\boldsymbol{e}_{\mathbf{j}}}]\mathcal{C}. \label{p23a}%
\end{equation}

\end{remark}

\begin{remark}
The integrability condition for the connection $\mathring{D}$ is expressed,
given the previous results, by
\begin{equation}
\boldsymbol{\mathring{\partial}}\wedge\boldsymbol{\mathring{\partial}}=0
\label{p24}%
\end{equation}
which means that for any $\mathcal{\mathring{C}}\in\sec$ $\mathcal{C}%
\ell(\mathring{M},\mathtt{\mathring{g}})$ it is
\[
\boldsymbol{\mathring{\partial}}\wedge\boldsymbol{\mathring{\partial}%
~}\mathcal{\mathring{C}}=0
\]
For the manifold $M$ recalling that $\mathbf{x}_{\mathbf{i}}\equiv
\eth_{\mathbf{x}_{\mathbf{i}}}$ is a Pfaff derivative we have for any
$\mathcal{C\in}\sec$ $\mathcal{C}\ell(M,\mathtt{g})$%
\begin{equation}
(\eth_{\mathbf{x}_{\mathbf{i}}}\eth_{\mathbf{xj}}-\eth_{\mathbf{x}%
_{\mathbf{j}}}\eth_{\mathbf{x}_{\mathbf{i}}})\mathcal{C}=0. \label{p25}%
\end{equation}
If we recall the definition of the form derivative \emph{(Eq.(\ref{formderiv1}%
)), putting}%
\begin{equation}
\eth:=\vartheta^{\mathbf{i}}\eth_{\mathbf{x}_{\mathbf{i}}} \label{p26}%
\end{equation}
we can express the `integrability' condition in $M$ by
\begin{equation}
\eth\mathfrak{\wedge}\eth=0. \label{p27}%
\end{equation}
Finally recalling \emph{Eqs.(\ref{ri1}) and (\ref{ri2})} for $v\in\sec%
{\textstyle\bigwedge\nolimits^{1}}
T^{\ast}M\hookrightarrow\sec$ $\mathcal{C}\ell(M,\mathtt{g})$ it is
\begin{equation}
\boldsymbol{\partial}\wedge\boldsymbol{\partial~}v=v_{\mathbf{i}}%
\mathcal{R}^{\mathbf{i}} \label{p28}%
\end{equation}
where $\mathcal{R}^{\mathbf{i}}$ are the Ricci $1$-form fields.
\end{remark}

\subsection{$\mathbf{S}(v)=\mathcal{S(}v)$}

\begin{proposition}
Let $\mathcal{C}=v\in\sec%
{\textstyle\bigwedge\nolimits^{1}}
T^{\ast}M\hookrightarrow\sec$ $\mathcal{C}\ell(M,\mathtt{g})$ we have%
\begin{equation}
\mathbf{S}(v)=\mathcal{S(}v). \label{S2}%
\end{equation}

\end{proposition}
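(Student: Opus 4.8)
The plan is to reduce both $\mathbf{S}(v)$ and $\mathcal{S}(v)$ to the same kind of expression — a wedge of a tangent $1$-form with the normal projection of a covariant derivative — and then show that the two expressions differ only by $\mathbf{P}_{\perp}$ applied to a Lie bracket of tangent $1$-forms, which vanishes because $\mathring{D}$ is torsion free. First I would compute $\mathbf{S}(v)$ directly. Since $v\in\sec\bigwedge\nolimits^{1}T^{\ast}M$ satisfies $\mathbf{P}(v)=v$, the definition $\mathbf{S}(v)=\mathfrak{\mathring{d}}(\mathbf{P}(v))-\mathbf{P}(\mathfrak{\mathring{d}}v)$ collapses to $\mathbf{S}(v)=\mathbf{P}_{\perp}(\mathfrak{\mathring{d}}v)$. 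Writing $\mathfrak{\mathring{d}}v=\theta^{\mathbf{i}}(\theta_{\mathbf{i}}\cdot\mathfrak{\mathring{d}}v)$ and splitting $\theta_{\mathbf{i}}\cdot\mathfrak{\mathring{d}}v$ into tangent and normal parts, the tangent part is a section of $\mathcal{C}\ell(M,\mathtt{g})$ killed by $\mathbf{P}_{\perp}$, while the normal part $n_{\mathbf{i}}:=\mathbf{P}_{\perp}(\theta_{\mathbf{i}}\cdot\mathfrak{\mathring{d}}v)$ is orthogonal to $\theta^{\mathbf{i}}$, so $\theta^{\mathbf{i}}n_{\mathbf{i}}=\theta^{\mathbf{i}}\wedge n_{\mathbf{i}}$. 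Hence
\[
\mathbf{S}(v)=\theta^{\mathbf{i}}\wedge\mathbf{P}_{\perp}(\theta_{\mathbf{i}}\cdot\mathfrak{\mathring{d}}v).
\]
(The same formula follows from Proposition \ref{proS}, using $\mathbf{P}_{u}(v)=\mathbf{P}_{\perp}(u\cdot\mathfrak{\mathring{d}}v)$ and the definition of $\partial_{u}\wedge$.)

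Next I would rewrite $\mathcal{S}(v)$ from Eq.(\ref{p10}). Commuting the two $1$-forms in the wedge product gives $\mathcal{S}(v)=-\mathbf{P}_{\perp}(v\cdot\mathfrak{\mathring{d}}\theta_{j})\wedge\theta^{j}=\theta^{\mathbf{i}}\wedge\mathbf{P}_{\perp}(v\cdot\mathfrak{\mathring{d}}\theta_{\mathbf{i}})$. Subtracting the two expressions, everything matches except the order of differentiation:
\[
\mathbf{S}(v)-\mathcal{S}(v)=\theta^{\mathbf{i}}\wedge\mathbf{P}_{\perp}\bigl(\theta_{\mathbf{i}}\cdot\mathfrak{\mathring{d}}v-v\cdot\mathfrak{\mathring{d}}\theta_{\mathbf{i}}\bigr).
\]

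The key (and only nontrivial) step is to show the argument of $\mathbf{P}_{\perp}$ is tangent to $M$. Because $\theta_{\mathbf{i}}$ and $v$ are tangent $1$-forms, the normal summands of $\boldsymbol{\mathring{\partial}}$ contract to zero against them, so $\theta_{\mathbf{i}}\cdot\mathfrak{\mathring{d}}=\theta_{\mathbf{i}}\cdot\boldsymbol{\mathring{\partial}}$ and $v\cdot\mathfrak{\mathring{d}}=v\cdot\boldsymbol{\mathring{\partial}}$, and the antisymmetric combination is $(\theta_{\mathbf{i}}\cdot\boldsymbol{\mathring{\partial}})v-(v\cdot\boldsymbol{\mathring{\partial}})\theta_{\mathbf{i}}$. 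Since $\mathring{D}$ is the Levi-Civita connection of $\boldsymbol{\mathring{g}}$, it is torsion free, so by the $\mathring{M}$-analog of Eq.(\ref{ten1}) this combination equals the Lie bracket $\bra\theta_{\mathbf{i}},v\ket$ of $1$-forms in $\mathring{M}$. The metric duals of the tangent $1$-forms $\theta_{\mathbf{i}}$ and $v$ are tangent vector fields on $M$, whose Lie bracket is again tangent to $M$ by involutivity of $TM$; dualizing, $\bra\theta_{\mathbf{i}},v\ket\in\sec\bigwedge\nolimits^{1}T^{\ast}M$, whence $\mathbf{P}_{\perp}\bra\theta_{\mathbf{i}},v\ket=0$ and therefore $\mathbf{S}(v)=\mathcal{S}(v)$.

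I expect the main obstacle to be precisely this last step. One must resist rewriting the difference using the intrinsic bracket of $M$ — routing through Eq.(\ref{p13}) or the $M$-Dirac operator only reproduces the identity being proved and gives a circular argument — and instead invoke the torsion-freeness of the \emph{ambient} connection $\mathring{D}$ together with the tangency of brackets of tangent fields. The remaining manipulations — the tangent/normal split, the orthogonality $\theta^{\mathbf{i}}\cdot n_{\mathbf{i}}=0$, and the sign picked up from commuting two $1$-forms in a wedge — are routine once the bracket identity is in place.
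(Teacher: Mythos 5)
Your proof is correct, and it takes a genuinely different route from the one the paper gives for this proposition. The paper argues by brute force in an adapted orthonormal frame: it reduces to $v=\theta_{\mathbf{d}}$, expands $\mathfrak{\mathring{d}}\wedge\theta_{\mathbf{d}}$ and $\theta_{\mathbf{d}}\cdot\mathfrak{\mathring{d}}I_{m}$ in the connection coefficients $\mathring{\omega}_{t\mathbf{kd}}$ (Eqs.~(\ref{S6})--(\ref{S7})), and matches the two normal-indexed sums against the defining relation $v\cdot\mathfrak{\mathring{d}}I_{m}=-\mathcal{S}(v)I_{m}$. You instead stay frame- and volume-form-free: you use the already-derived representation $\mathcal{S}(v)=-\mathbf{P}_{\perp}(v\cdot\mathfrak{\mathring{d}}\theta_{j})\wedge\theta^{j}$ of Eq.~(\ref{p10}), rewrite $\mathbf{S}(v)=\mathbf{P}_{\perp}(\mathfrak{\mathring{d}}v)$ as $\theta^{\mathbf{i}}\wedge\mathbf{P}_{\perp}(\theta_{\mathbf{i}}\cdot\mathfrak{\mathring{d}}v)$ (which is just Proposition~\ref{proS} combined with Eq.~(\ref{ap33})), and reduce everything to the symmetry $\mathbf{P}_{\perp}(u\cdot\mathfrak{\mathring{d}}v)=\mathbf{P}_{\perp}(v\cdot\mathfrak{\mathring{d}}u)$, i.e.\ to the torsion-freeness of $\mathring{D}$ plus the tangency of $\bra u,v\ket$. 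That symmetry is exactly the paper's Eq.~(\ref{O2}); the paper establishes and exploits it only in the \emph{next} subsection (to prove $\mathfrak{\mathring{d}}\wedge v=\boldsymbol{\partial}\wedge v+\mathcal{S}(v)$, Eq.~(\ref{03})), whereas you deploy it here, which makes your argument shorter, basis-independent, and transparent about where the geometry (symmetry of the second fundamental form) enters; the paper's computation, by contrast, makes the normal-index structure of $\mathbf{S}(\theta_{\mathbf{d}})$ completely explicit, which it then reuses in the proof that $v\cdot\mathcal{S}(w)=w\cdot\mathcal{S}(v)$. Since Eq.~(\ref{O2}) is proved in the paper without appeal to $\mathbf{S}(v)=\mathcal{S}(v)$, your use of it is not circular, and your cautionary remark about not routing through the intrinsic bracket on $M$ is well taken.
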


\begin{proof}
We have%
\begin{align}
\mathbf{S}(v)  &  =\mathfrak{\mathring{d}(}\mathbf{P}(v))-\mathbf{P}%
\mathfrak{(\mathring{d}}v)\nonumber\\
&  =\mathfrak{\mathring{d}}v-\mathbf{P}\mathfrak{(\mathring{d}}v). \label{S3}%
\end{align}
Now, $\mathfrak{\mathring{d}}v=\mathfrak{\mathring{d}\wedge}%
v+\mathfrak{\mathring{d}\lrcorner}v$ and since $\mathbf{P}\mathfrak{(\mathring
{d}\lrcorner}v)=\mathfrak{\mathring{d}\lrcorner}v$ we have
\begin{equation}
\mathbf{S}(v)=\mathfrak{\mathring{d}\wedge}v-\mathbf{P}\mathfrak{(\mathring
{d}\wedge}v) \label{S4}%
\end{equation}
\ It is only necessary due to the linearity $\mathbf{S}$ of to show
Eq.(\ref{S2}) for $v=\theta_{\mathbf{d}}$, $\mathbf{d=}1,...,m$. We then
evaluate%
\begin{align}
\mathfrak{\mathring{d}\wedge}\theta_{\mathbf{d}}  &  =%
{\textstyle\sum\nolimits_{\mathbf{k}=1}^{m}}
\theta^{\mathbf{k}}\mathring{D}_{\boldsymbol{e}_{\mathbf{k}}}\theta
_{\mathbf{d}}=%
{\textstyle\sum\nolimits_{\mathbf{k}=1}^{m}}
{\textstyle\sum\nolimits_{t=1,t\neq k}^{n}}
\mathring{\omega}_{t\mathbf{kd}}\theta^{\mathbf{k}}\wedge\mathring{\theta}%
^{t}\nonumber\\
&  =%
{\textstyle\sum\nolimits_{k=1}^{m}}
{\textstyle\sum\nolimits_{t=1,t\neq k}^{m}}
\mathring{\omega}_{t\mathbf{kd}}\theta^{\mathbf{k}}\wedge\theta^{\mathbf{t}}+%
{\textstyle\sum\nolimits_{\mathbf{k}=1}^{m}}
{\textstyle\sum\nolimits_{t=m+1}^{m+l}}
\mathring{\omega}_{t\mathbf{kd}}\theta^{\mathbf{k}}\wedge\mathring{\theta}%
^{t}, \label{s5}%
\end{align}
from where it follows that%
\begin{equation}
\mathbf{S}(\theta_{\mathbf{d}})=%
{\textstyle\sum\nolimits_{\mathbf{k}=1}^{m}}
{\textstyle\sum\nolimits_{t=m+1}^{m+l}}
\mathring{\omega}_{t\mathbf{kd}}\theta^{\mathbf{k}}\wedge\mathring{\theta}%
^{t}=\frac{1}{2}%
{\textstyle\sum\nolimits_{\mathbf{k}=1}^{m}}
{\textstyle\sum\nolimits_{t=m+1}^{m+l}}
(\mathring{\omega}_{t\mathbf{kd}}-\mathring{\omega}_{\mathbf{k}t\mathbf{d}%
})\theta^{\mathbf{k}}\wedge\mathring{\theta}^{t} \label{S6}%
\end{equation}
On the other hand%
\begin{align}
\theta_{\mathbf{d}}\cdot\mathfrak{\mathring{d}}I_{m}  &  =\eta^{11}\cdots
\eta^{mm}\mathring{D}_{\boldsymbol{e}_{\mathbf{d}}}(\theta_{1}\wedge
\cdots\wedge\theta_{m})\nonumber\\
&  =\alpha\mathring{D}_{\boldsymbol{e}_{\mathbf{d}}}(\theta_{1}\cdots
\theta_{m})\nonumber\\
&  =\alpha%
{\textstyle\sum\nolimits_{k=1}^{m}}
{\textstyle\sum\nolimits_{t=1}^{n}}
\mathring{\omega}_{t\mathbf{dk}}\theta_{1}\cdots\underset{k\text{-position}%
}{\underbrace{\mathring{\theta}^{t}}}\cdots\theta_{m}\nonumber\\
&  =\alpha%
{\textstyle\sum\nolimits_{k=1}^{m}}
{\textstyle\sum\nolimits_{t=1}^{m}}
\mathring{\omega}_{t\mathbf{dk}}\theta_{1}\cdots\underset{k\text{-position}%
}{\underbrace{\theta^{t}}}\cdots\theta_{m}\nonumber\\
&  +\alpha%
{\textstyle\sum\nolimits_{k=1}^{m}}
{\textstyle\sum\nolimits_{t=m+1}^{m+l}}
\mathring{\omega}_{t\mathbf{dk}}\theta_{1}\cdots\underset{k\text{-position}%
}{\underbrace{\mathring{\theta}^{t}}}\cdots\theta_{m} \label{S7}%
\end{align}
and now we can easily see that%
\begin{equation}
\mathbf{S}(\theta_{\mathbf{d}})\times I_{m}=\theta_{\mathbf{d}}\cdot
\mathfrak{\mathring{d}}I_{m} \label{S8}%
\end{equation}
and it follows that $\mathbf{S}(\theta_{\mathbf{d}})=\mathcal{S(}%
\theta_{\mathbf{d}}\mathcal{)}$.
\end{proof}

We also have the

\begin{proposition}
Let $v,w\in\sec%
{\textstyle\bigwedge\nolimits^{1}}
T^{\ast}M\hookrightarrow\sec$ $\mathcal{C}\ell(M,\mathtt{g})$ we have%

\begin{equation}
v\cdot\mathcal{S(}w\mathcal{)}=w\cdot\mathcal{S(}v\mathcal{)} \label{s9}%
\end{equation}

\end{proposition}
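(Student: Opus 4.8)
The plan is to read the pairing $v\cdot\mathcal{S}(w)$ in Eq.(\ref{s9}) as the contraction $v\lrcorner\mathcal{S}(w)$ of the $1$-form $v$ into the biform $\mathcal{S}(w)$, in the sense of Eq.(\ref{pcomu2}), so that the claim asserts the symmetry $v\lrcorner\mathcal{S}(w)=w\lrcorner\mathcal{S}(v)$. The central tool will be the identity Eq.(\ref{p13}), already established, which evaluates this contraction as
\[
v\lrcorner\mathcal{S}(w)=\mathbf{P}_{\perp}(w\cdot\mathfrak{\mathring{d}}v).
\]
Thus everything reduces to showing that the normal $1$-form $\mathbf{P}_{\perp}(w\cdot\mathfrak{\mathring{d}}v)$ is symmetric under interchange of $v$ and $w$.

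First I would rewrite $w\cdot\mathfrak{\mathring{d}}v$ intrinsically. Since $\mathfrak{\mathring{d}}=\theta^{\mathbf{i}}\mathring{D}_{\boldsymbol{e}_{\mathbf{i}}}$ by Eq.(\ref{2a}) and $w\cdot\theta^{\mathbf{i}}=w^{\mathbf{i}}$, one has $w\cdot\mathfrak{\mathring{d}}v=\mathring{D}_{\boldsymbol{w}}v$, and likewise $v\cdot\mathfrak{\mathring{d}}w=\mathring{D}_{\boldsymbol{v}}w$. Hence it suffices to prove $(\mathring{D}_{\boldsymbol{w}}v)_{\perp}=(\mathring{D}_{\boldsymbol{v}}w)_{\perp}$, i.e. that the normal part of $\mathring{D}_{\boldsymbol{w}}v-\mathring{D}_{\boldsymbol{v}}w$ vanishes. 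This is exactly the statement that the second fundamental form of $M$ is symmetric.

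The key step is to invoke that $\mathring{D}$ is the Levi-Civita connection of $\boldsymbol{\mathring{g}}$, hence torsion-free and metric compatible. Metric compatibility lets me pass between the $1$-forms $v,w$ and their metrically dual vector fields $\boldsymbol{v},\boldsymbol{w}\in\sec TM$, giving $\mathring{D}_{\boldsymbol{w}}v-\mathring{D}_{\boldsymbol{v}}w=\boldsymbol{\mathring{g}}(\mathring{D}_{\boldsymbol{w}}\boldsymbol{v}-\mathring{D}_{\boldsymbol{v}}\boldsymbol{w},\ \cdot\,)$. Torsion-freeness then yields $\mathring{D}_{\boldsymbol{w}}\boldsymbol{v}-\mathring{D}_{\boldsymbol{v}}\boldsymbol{w}=-[\boldsymbol{v},\boldsymbol{w}]$, and because $\boldsymbol{v},\boldsymbol{w}$ are tangent to $M$ so is their Lie bracket $[\boldsymbol{v},\boldsymbol{w}]\in\sec TM$. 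Consequently $\mathring{D}_{\boldsymbol{w}}v-\mathring{D}_{\boldsymbol{v}}w$ is the metric dual of a tangent vector, i.e. an element of $\sec\bigwedge\nolimits^{1}T^{\ast}M\hookrightarrow\sec\mathcal{C}\ell(M,\mathtt{g})$, on which $\mathbf{P}_{\perp}=\mathrm{id}-\mathbf{P}$ vanishes by Eq.(\ref{compl}). Applying $\mathbf{P}_{\perp}$ therefore kills the difference, which is precisely the desired symmetry.

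The one point requiring care — the main obstacle — is the bookkeeping of tangent versus normal components: one must be sure that $\mathbf{P}_{\perp}$ annihilates every tangent $1$-form, and that the flat operation $\boldsymbol{v}\mapsto\boldsymbol{\mathring{g}}(\boldsymbol{v},\ \cdot\,)$ is compatible with the tangent/normal splitting, which holds since $\boldsymbol{g}=\boldsymbol{i}^{\ast}\boldsymbol{\mathring{g}}$ agrees with $\boldsymbol{\mathring{g}}$ on $TM$. As an alternative, purely computational route, I could instead combine the already-proven identity $\mathcal{S}(v)=\mathbf{S}(v)$ with the explicit expression Eq.(\ref{S6}); contracting with $\theta_{\mathbf{c}}$ and using $\theta_{\mathbf{c}}\cdot\mathring{\theta}^{t}=0$ for $t>m$ reduces the claim to the symmetry $\mathring{\omega}_{t\mathbf{cd}}=\mathring{\omega}_{t\mathbf{dc}}$ of the connection coefficients in their tangent indices, which again follows from torsion-freeness since the structure constants $c_{\cdot\mathbf{cd}}^{\mathbf{k}\cdot\cdot}$ carrying a normal value vanish for tangent $\boldsymbol{e}_{\mathbf{c}},\boldsymbol{e}_{\mathbf{d}}$.
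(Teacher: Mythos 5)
Your argument is correct, but it takes a genuinely different route from the paper's. The paper proves Eq.(\ref{s9}) by direct computation: it inserts the explicit expansion Eq.(\ref{S6}) of $\mathbf{S}(\theta_{\mathbf{d}})=\mathcal{S}(\theta_{\mathbf{d}})$ in terms of the connection coefficients $\mathring{\omega}_{t\mathbf{kd}}$, contracts with $\theta_{\mathbf{i}}$ (using $\theta_{\mathbf{i}}\cdot\mathring{\theta}^{t}=0$ for $t>m$), and reads off the symmetry from the symmetry of $\mathring{\omega}_{t\mathbf{id}}$ in its two tangent indices --- which is precisely the ``alternative computational route'' you sketch in your closing paragraph. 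Your main argument instead chains together two identities: Eq.(\ref{p13}), $v\lrcorner\mathcal{S}(w)=\mathbf{P}_{\perp}(w\cdot\mathfrak{\mathring{d}}v)$, which is available since it is established in the proof of Eq.(\ref{p11}), and the symmetry $\mathbf{P}_{\perp}(w\cdot\mathfrak{\mathring{d}}v)=\mathbf{P}_{\perp}(v\cdot\mathfrak{\mathring{d}}w)$, which is the paper's Eq.(\ref{O2}) (derived only in the following subsection, but independently of the present proposition, from the vanishing torsion of $\mathring{D}$ and the tangency of $[\boldsymbol{v},\boldsymbol{w}]$ --- and you rederive it yourself, so there is no circularity). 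Your reading of the pairing $v\cdot\mathcal{S}(w)$ as the contraction $v\lrcorner\mathcal{S}(w)$ agrees with the paper's own usage in its proof. The intrinsic route makes transparent that Eq.(\ref{s9}) is nothing but the symmetry of the second fundamental form of $M$ in $\mathring{M}$; the paper's index computation buys, as a by-product, the explicit formula $v\cdot\mathcal{S}(w)=\sum v^{\mathbf{i}}w^{\mathbf{d}}\mathring{\omega}_{t\mathbf{id}}\mathring{\theta}^{t}$ for the normal-valued second fundamental form, which is used nowhere else but is occasionally convenient.
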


\begin{proof}
Recalling Eq.(\ref{S6}) we can write%
\begin{align*}
v\cdot\mathcal{S(}w\mathcal{)}  &  =%
{\textstyle\sum\nolimits_{\mathbf{i,d}=1}^{m}}
v^{\mathbf{i}}w^{\mathbf{d}}\theta_{\mathbf{i}}\lrcorner%
{\textstyle\sum\nolimits_{\mathbf{k}=1}^{m}}
{\textstyle\sum\nolimits_{t=m+1}^{m+l}}
\mathring{\omega}_{t\mathbf{kd}}\theta^{\mathbf{k}}\wedge\mathring{\theta}%
^{t}\\
&  =\frac{1}{2}%
{\textstyle\sum\nolimits_{\mathbf{i,d}=1}^{m}}
v^{\mathbf{i}}w^{\mathbf{d}}(\mathring{\omega}_{t\mathbf{id}}-\mathring
{\omega}_{\mathbf{i}t\mathbf{d}})\mathring{\theta}^{t}=w\cdot\mathcal{S(}%
v\mathcal{)}%
\end{align*}
and the proposition is proved.
\end{proof}

\subsection{$\mathfrak{\mathring{d}}\wedge v=$ $\boldsymbol{\partial}\wedge
v+\mathcal{S}(v)$ and $\mathfrak{\mathring{d}}\boldsymbol{\lrcorner
}v=\boldsymbol{\partial\lrcorner v}$}

We first observe that since torsion is null for the Levi-Civita connection
$\mathring{D}$ we have for any $u,v\in\sec T^{\ast}\mathring{M}\hookrightarrow
\sec\mathcal{C\ell}(\mathring{M},\mathtt{\mathring{g}})$ we have
\begin{equation}
u\cdot\boldsymbol{\mathring{\partial}}v=v\cdot\boldsymbol{\mathring{\partial}%
}u+%
\bra
u,v%
\ket
\label{O1}%
\end{equation}
from where it follows $\mathbf{P}_{\perp}(%
\bra
u,v%
\ket
)=0$\ when $u,v\in\sec T^{\ast}M\hookrightarrow\sec\mathcal{C\ell
}(M,\mathtt{g})$ since calculating $%
\bra
u,v%
\ket
$ with $\boldsymbol{\mathring{\partial}}$ expressed in the natural coordinates
of $\mathring{M}$ we find that $[u,v]\in\sec T^{\ast}M\hookrightarrow
\sec\mathcal{C\ell}(M,\mathtt{g})$. From this it follows that
\begin{equation}
\mathbf{P}_{\perp}(u\cdot\mathfrak{\mathring{d}}v)=\mathbf{P}_{\perp}%
(v\cdot\mathfrak{\mathring{d}}u). \label{O2}%
\end{equation}

Then we can write%
\begin{align*}
\mathfrak{\mathring{d}}\wedge v  &  =%
{\textstyle\sum\limits_{\mathbf{m,k=1}}^{m}}
\theta^{\mathbf{r}}\wedge\mathring{D}_{\boldsymbol{e}_{\mathbf{r}}%
}(v^{\mathbf{k}}\theta_{\mathbf{k}})\\
&  =%
{\textstyle\sum\limits_{\mathbf{r,k=1}}^{m}}
\theta^{\mathbf{r}}\wedge\{\boldsymbol{e}_{\mathbf{r}}(v^{\mathbf{k}%
})+v^{\mathbf{k}}%
{\textstyle\sum\limits_{\mathbf{s=1}}^{\mathbf{m}}}
L_{\mathbf{rk}}^{\mathbf{s}}\theta_{\mathbf{s}})\}+%
{\textstyle\sum\limits_{\mathbf{r,k=1}}^{m}}
\theta^{\mathbf{r}}\wedge v^{\mathbf{k}}%
{\textstyle\sum\limits_{\mathbf{s=m+1}}^{\mathbf{m+l=n}}}
L_{\mathbf{rk}}^{\mathbf{s}}\mathring{\theta}_{\mathbf{s}}\\
&  =\boldsymbol{\partial}\wedge v+%
{\textstyle\sum\limits_{\mathbf{m,k=1}}^{m}}
\theta^{\mathbf{r}}\wedge v^{\mathbf{k}}\mathbf{P}_{\perp}(\mathring
{D}_{\mathbf{r}}\mathfrak{\theta}_{\mathbf{k}})\\
&  =\boldsymbol{\partial}\wedge v+%
{\textstyle\sum\limits_{\mathbf{m,k=1}}^{m}}
\theta^{\mathbf{r}}\wedge v^{\mathbf{k}}\mathbf{P}_{\perp}(\mathring
{D}_{\mathbf{k}}\mathfrak{\theta}_{\mathbf{r}})\\
&  =\boldsymbol{\partial}\wedge v+\theta^{\mathbf{r}}\wedge\mathbf{P}_{\perp
}(\mathring{D}_{\mathbf{v}}\mathfrak{\theta}_{\mathbf{r}})\\
&  =\boldsymbol{\partial}\wedge v+\theta^{\mathbf{r}}\wedge\mathbf{P}_{\perp
}(v\cdot\mathfrak{\mathring{d}\theta}_{\mathbf{r}})
\end{align*}
and recalling that $\mathcal{S(}v)=-\mathbf{P}_{\perp}(v\cdot
\mathfrak{\mathring{d}}\theta_{\mathbf{r}}\mathfrak{)\wedge}\theta
^{\mathbf{r}}$ \ we finally have
\begin{equation}
\mathfrak{\mathring{d}}\wedge v=\boldsymbol{\partial}\wedge v+\mathcal{S}(v)
\label{03}%
\end{equation}
and the proposition is proved.

Also, from (Eq.(\ref{A6})) we know that $\boldsymbol{\partial}_{u}%
\lrcorner\mathbf{P}_{u}~(\mathring{v}_{\parallel})=0$. So,
\begin{align}
\mathfrak{\mathring{d}}v  &  =\mathfrak{\mathring{d}(}\mathbf{P}%
(v))=\mathfrak{\mathring{d}}\mathbf{P~}(v)+\mathbf{P}(\mathfrak{\mathring{d}%
}v)\nonumber\\
&  =\mathfrak{\partial}_{u}\mathfrak{\wedge}\mathbf{P}_{u}\mathbf{~}%
(v)+\mathfrak{\partial}_{u}\mathfrak{\lrcorner}\mathbf{P}_{u}\mathbf{~}%
(v)+\boldsymbol{\partial}v\nonumber\\
&  =\mathfrak{\partial}_{u}\mathfrak{\wedge}\mathbf{P}_{u}\mathbf{~}%
(v)+\boldsymbol{\partial}\wedge v+\boldsymbol{\partial}\lrcorner v\nonumber\\
&  =\mathcal{S}(v)+\boldsymbol{\partial}\wedge v+\boldsymbol{\partial
}\lrcorner v \label{ss4a}%
\end{align}
and thus we see that%
\begin{equation}
\mathfrak{\mathring{d}\lrcorner}v=\boldsymbol{\partial}\lrcorner v \label{ss5}%
\end{equation}
We then can write
\begin{equation}
\mathfrak{\mathring{d}}v=\boldsymbol{\partial}v+\mathcal{S}(v). \label{ss6}%
\end{equation}

\subsection{$\mathfrak{\mathring{d}}\mathcal{C}=\boldsymbol{\partial
}\mathcal{C}+\mathbf{S}(\mathcal{C})$}

We can generalize Eq.(\ref{ss6}), i.e., we have the

\begin{proposition}
\label{ptoCC}For any\ $\mathcal{C}\in\sec$ $\mathcal{C}\ell(M,\mathtt{g})$ we
have
\begin{gather}
\mathfrak{\mathring{d}}\mathcal{C}=\boldsymbol{\partial}\mathcal{C}%
+\mathbf{S}(\mathcal{C}),\nonumber\\
\mathfrak{\mathring{d}}\wedge\mathcal{C}=\boldsymbol{\partial}\wedge
\mathcal{C}+\mathbf{S}(\mathcal{C}),~~~\mathfrak{\mathring{d}}\lrcorner
\mathcal{C}=\boldsymbol{\partial}\lrcorner\mathcal{C}. \label{g1}%
\end{gather}

\end{proposition}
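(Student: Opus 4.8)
The plan is to reduce all three identities to a single geometric lemma about the projector $\mathbf{P}$ together with the symmetry of the second fundamental form. I would first record that the shape operator is $C^{\infty}(M)$-linear: since $\mathbf{P}(\mathcal{C})=\mathcal{C}$, Eq.(\ref{S1}) gives $\mathbf{S}(\mathcal{C})=\mathfrak{\mathring{d}}\mathcal{C}-\mathbf{P}(\mathfrak{\mathring{d}}\mathcal{C})=\mathbf{P}_{\perp}(\mathfrak{\mathring{d}}\mathcal{C})$, and for a scalar $f$ the derivation property yields $\mathfrak{\mathring{d}}(f\mathcal{C})=(\mathfrak{\mathring{d}}f)\mathcal{C}+f\mathfrak{\mathring{d}}\mathcal{C}$ with $\mathfrak{\mathring{d}}f$ a tangent $1$-form, so that the $(\mathfrak{\mathring{d}}f)\mathcal{C}$ term is fixed by $\mathbf{P}$ and cancels, leaving $\mathbf{S}(f\mathcal{C})=f\mathbf{S}(\mathcal{C})$. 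Consequently the grade-components of $\mathbf{S}$ are $C^{\infty}(M)$-linear as well, and it suffices to work on a homogeneous basis blade $\mathcal{C}_{r}=\theta_{\mathbf{a}_{1}}\wedge\cdots\wedge\theta_{\mathbf{a}_{r}}$, using the grade decompositions $\mathfrak{\mathring{d}}\mathcal{C}=\mathfrak{\mathring{d}}\wedge\mathcal{C}+\mathfrak{\mathring{d}}\lrcorner\mathcal{C}$ and $\boldsymbol{\partial}\mathcal{C}=\boldsymbol{\partial}\wedge\mathcal{C}+\boldsymbol{\partial}\lrcorner\mathcal{C}$ into grades $r+1$ and $r-1$.

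The Lemma I would establish is that $\mathbf{P}$ commutes with left Clifford multiplication by any tangent $1$-form, $\mathbf{P}(\theta^{\mathbf{i}}A)=\theta^{\mathbf{i}}\mathbf{P}(A)$. Writing $A=\mathbf{P}(A)+\mathbf{P}_{\perp}(A)$, the part $\theta^{\mathbf{i}}\mathbf{P}(A)$ remains in $\mathcal{C}\ell(M,\mathtt{g})$ and is fixed by $\mathbf{P}$, while multiplying a blade carrying at least one normal leg by a tangent factor never removes that normal leg, so $\theta^{\mathbf{i}}\mathbf{P}_{\perp}(A)$ stays outside $\mathcal{C}\ell(M,\mathtt{g})$ and is annihilated by $\mathbf{P}$ (Remark \ref{outside}). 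Since $\mathbf{P}$ is grade preserving, matching grades in this identity refines it to $\mathbf{P}(\theta^{\mathbf{i}}\wedge A)=\theta^{\mathbf{i}}\wedge\mathbf{P}(A)$ and $\mathbf{P}(\theta^{\mathbf{i}}\lrcorner A)=\theta^{\mathbf{i}}\lrcorner\mathbf{P}(A)$.

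With the Lemma the first identity is immediate: from $\boldsymbol{\partial}=\theta^{\mathbf{i}}D_{\boldsymbol{e}_{\mathbf{i}}}$ (Eq.(\ref{1})) and $D_{\boldsymbol{e}_{\mathbf{i}}}\mathcal{C}=\mathbf{P}(\mathring{D}_{\boldsymbol{e}_{\mathbf{i}}}\mathcal{C})$ (Eqs.(\ref{p2})--(\ref{p3})) I obtain $\boldsymbol{\partial}\mathcal{C}=\theta^{\mathbf{i}}\mathbf{P}(\mathring{D}_{\boldsymbol{e}_{\mathbf{i}}}\mathcal{C})=\mathbf{P}(\theta^{\mathbf{i}}\mathring{D}_{\boldsymbol{e}_{\mathbf{i}}}\mathcal{C})=\mathbf{P}(\mathfrak{\mathring{d}}\mathcal{C})$, so that $\mathfrak{\mathring{d}}\mathcal{C}=\mathbf{P}(\mathfrak{\mathring{d}}\mathcal{C})+\mathbf{P}_{\perp}(\mathfrak{\mathring{d}}\mathcal{C})=\boldsymbol{\partial}\mathcal{C}+\mathbf{S}(\mathcal{C})$. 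Applying the refined Lemma to $\wedge$ and $\lrcorner$ separately gives $\boldsymbol{\partial}\wedge\mathcal{C}=\mathbf{P}(\mathfrak{\mathring{d}}\wedge\mathcal{C})$ and $\boldsymbol{\partial}\lrcorner\mathcal{C}=\mathbf{P}(\mathfrak{\mathring{d}}\lrcorner\mathcal{C})$, whence $\mathfrak{\mathring{d}}\lrcorner\mathcal{C}-\boldsymbol{\partial}\lrcorner\mathcal{C}=\mathbf{P}_{\perp}(\mathfrak{\mathring{d}}\lrcorner\mathcal{C})$ and $\mathbf{S}(\mathcal{C})=\mathbf{P}_{\perp}(\mathfrak{\mathring{d}}\wedge\mathcal{C})+\mathbf{P}_{\perp}(\mathfrak{\mathring{d}}\lrcorner\mathcal{C})$. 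Therefore both remaining identities reduce to the single claim $\mathbf{P}_{\perp}(\mathfrak{\mathring{d}}\lrcorner\mathcal{C})=0$, equivalently that $\mathbf{S}(\mathcal{C}_{r})$ is homogeneous of grade $r+1$.

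The last step is the main obstacle. On the blade $\mathcal{C}_{r}$, Leibniz for $\mathring{D}_{\boldsymbol{e}_{\mathbf{i}}}$ gives $\mathbf{P}_{\perp}(\mathring{D}_{\boldsymbol{e}_{\mathbf{i}}}\mathcal{C}_{r})=\sum_{s}\theta_{\mathbf{a}_{1}}\wedge\cdots\wedge\mathbf{P}_{\perp}(\mathring{D}_{\boldsymbol{e}_{\mathbf{i}}}\theta_{\mathbf{a}_{s}})\wedge\cdots\wedge\theta_{\mathbf{a}_{r}}$, where each $\mathbf{P}_{\perp}(\mathring{D}_{\boldsymbol{e}_{\mathbf{i}}}\theta_{\mathbf{a}_{s}})=\sum_{t>m}\mathring{\omega}_{t\mathbf{i}\mathbf{a}_{s}}\mathring{\theta}^{t}$ is a normal $1$-form whose coefficients are symmetric in $\mathbf{i}$ and $\mathbf{a}_{s}$ by Eq.(\ref{O2}) (the vanishing torsion of $\mathring{D}$, equivalently the symmetry Eq.(\ref{s9}) of the shape biform). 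Contracting with $\theta^{\mathbf{i}}$ and summing over $\mathbf{i}$, the contraction kills the normal factor $\mathring{\theta}^{t}$ and acts only on the tangent legs $\theta_{\mathbf{a}_{j}}$, $j\neq s$, converting $\mathring{\omega}_{t\mathbf{i}\mathbf{a}_{s}}$ into $\mathring{\omega}_{t\mathbf{a}_{j}\mathbf{a}_{s}}$; the surviving sum then runs over ordered pairs $(s,j)$. Exchanging $s$ and $j$ reproduces the same blade up to the sign incurred in moving the lone normal factor across the intervening tangent factors, while the coefficient is unchanged by symmetry, so the $(s,j)$ and $(j,s)$ terms cancel in pairs and $\mathbf{P}_{\perp}(\mathfrak{\mathring{d}}\lrcorner\mathcal{C}_{r})=0$. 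The delicate point I expect to require the most care is precisely this sign bookkeeping: one must verify that the antisymmetry of the exterior and contraction operations combines with the symmetry of the second fundamental form to produce exact cancellation rather than a doubling.
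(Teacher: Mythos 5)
Your proof is correct, and it reaches the result by a route that differs in its key mechanisms from the paper's. The paper's proof works entirely with the extensor-derivative apparatus: it differentiates the multiplicativity $\mathbf{P}(\mathcal{A}\wedge\mathcal{B})=\mathbf{P}(\mathcal{A})\wedge\mathbf{P}(\mathcal{B})$ to get a Leibniz rule for $\mathbf{P}_{u}$, splits $\mathbf{S}=\partial_{u}\wedge\mathbf{P}_{u}+\partial_{u}\lrcorner\mathbf{P}_{u}$, and then disposes of the contraction part on tangent arguments by invoking "steps analogous to Proposition \ref{proS}", i.e.\ the idempotency identity $\mathbf{PP}_{u}(\mathcal{C}_{\parallel})=0$ — a step the paper carries out in detail only for $1$-forms. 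You instead isolate two ingredients explicitly: the commutation $\mathbf{P}(\theta^{\mathbf{i}}A)=\theta^{\mathbf{i}}\mathbf{P}(A)$ (which is only implicit in the paper, e.g.\ in the manipulation $\partial_{u}\lrcorner\mathbf{PP}_{u}=\mathbf{P}(\partial_{u})\lrcorner\mathbf{P}_{u}$), and the symmetry $\mathbf{P}_{\perp}(u\cdot\mathfrak{\mathring{d}}v)=\mathbf{P}_{\perp}(v\cdot\mathfrak{\mathring{d}}u)$ of the second fundamental form, which you use in an explicit blade computation to force the pairwise cancellation giving $\mathbf{P}_{\perp}(\mathfrak{\mathring{d}}\lrcorner\mathcal{C}_{r})=0$. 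This is where the real content lies for $r\geq 2$ (for $r=1$ the statement is vacuous since $\mathfrak{\mathring{d}}\lrcorner v$ is a scalar), and your sign bookkeeping is right: the $(s,j)$ and $(j,s)$ terms differ by the factor $(-1)^{j-1}$ versus $(-1)^{s-1}(-1)^{j-1-s}$ from repositioning the lone normal leg, hence cancel against the symmetric coefficients. Your approach buys two things: it makes visible exactly where the torsion-freeness of $\mathring{D}$ enters (the paper uses it too, but buried in the earlier subsection proving $\mathfrak{\mathring{d}}\wedge v=\boldsymbol{\partial}\wedge v+\mathcal{S}(v)$), and it delivers $\boldsymbol{\partial}\mathcal{C}=\mathbf{P}(\mathfrak{\mathring{d}}\mathcal{C})$ — the paper's subsequent Eq.(\ref{DD1}) — as an immediate byproduct rather than a separate corollary. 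The cost is that the blade computation is less structural and must be supplemented by the $C^{\infty}(M)$-linearity reduction, which you correctly supply.
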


\begin{proof}
\textbf{(i)} From the fact that for any $\mathcal{A},\mathcal{B}\in\sec$
$\mathcal{C}\ell(\mathring{M},\mathtt{\mathring{g}})$ it is $\mathbf{P}%
(\mathcal{A}\wedge\mathcal{B})=\mathbf{P}(\mathcal{A})\wedge\mathbf{P}%
(\mathcal{B})$ we have differentiating with respect to $u\in\sec%
{\textstyle\bigwedge\nolimits^{1}}
T^{\ast}M\hookrightarrow\sec$ $\mathcal{C}\ell(M,\mathtt{g})$
\begin{equation}
\mathbf{P}_{u}(\mathcal{A}\wedge\mathcal{B})=\mathbf{P}_{u}(\mathcal{A}%
)\wedge\mathcal{B}+\mathcal{A}\wedge\mathbf{P}_{u}(\mathcal{B}) \label{g2}%
\end{equation}
and of course
\begin{gather}
\mathbf{P}_{u}(\mathcal{A}_{\perp}\wedge\mathcal{B}_{\parallel})=\mathbf{P}%
_{u}(\mathcal{A}_{\perp})\wedge\mathcal{B}_{\parallel},~~~~\mathbf{P}%
_{u}(\mathcal{A}_{\perp}\wedge\mathcal{B}_{\perp})=0,\nonumber\\
\mathbf{P}_{u}(\mathcal{A}_{\parallel}\wedge\mathcal{B}_{\parallel
})=\mathbf{P}_{u}(\mathcal{A}_{\parallel})\wedge\mathcal{B}_{\parallel
}+\mathcal{A}_{\parallel}\wedge\mathbf{P}_{u}(\mathcal{B}_{\parallel}).
\label{g3}%
\end{gather}
(\textbf{ii}) For $\mathcal{C}\in\sec$ $\mathcal{C}\ell(M,\mathtt{g})$ it is
\ $\mathcal{C}=\mathbf{P}(\mathcal{C})$ and we have using Eq.(\ref{S1})%
\begin{align}
\mathfrak{\mathring{d}}\mathcal{C}  &  =\mathfrak{\mathring{d}}\mathbf{P~}%
(\mathcal{C})-\mathbf{P}(\mathfrak{\mathring{d}}\mathcal{C})\nonumber\\
&  =\mathfrak{\mathring{d}\wedge}\mathbf{P~}(\mathcal{C})+\mathfrak{\mathring
{d}\lrcorner}\mathbf{P~}(\mathcal{C})+\boldsymbol{\partial}\mathcal{C}
\label{g3a}%
\end{align}
(\textbf{iii}) Now, we can verify recalling that $\mathbf{S(}\mathcal{C}%
)=\mathbf{S(}\mathcal{C}_{\parallel}+\mathcal{C}_{\perp})=\mathbf{S(}%
\mathcal{C}_{\parallel})+\mathbf{S(}\mathcal{C}_{\perp})$ and following steps
analogous to the ones used in the proof of Proposition \ref{proS} that
\begin{equation}
\mathbf{S(}\mathcal{C}_{\parallel})=\mathbf{S(P(}\mathcal{C}_{\parallel
}))=\mathfrak{\mathring{d}\wedge}\mathbf{P~(}\mathcal{C}\mathbf{_{\parallel}%
)}~~~~\mathbf{S(}\mathcal{C}_{\perp})=\mathbf{P(S(}\mathcal{C}_{\perp
}))=\mathfrak{\mathring{d}\lrcorner}\mathbf{P~(}\mathcal{C}\mathbf{_{\parallel
}).} \label{G4}%
\end{equation}
(\textbf{iv}) Using Eq.(\ref{G4}) in Eq.(\ref{g3}) we have
\[
\mathfrak{\mathring{d}\wedge}\mathcal{C}+\mathfrak{\mathring{d}\lrcorner
}\mathcal{C}=\mathfrak{\mathring{d}\wedge}\mathbf{P~}(\mathcal{C}%
)+\mathfrak{\mathring{d}\lrcorner}\mathbf{P~}(\mathcal{C}%
)+\boldsymbol{\partial}\mathfrak{\wedge}\mathcal{C}+\boldsymbol{\partial
}\mathfrak{\lrcorner}\mathcal{C}%
\]
or
\begin{align}
\mathfrak{\mathring{d}\wedge}\mathcal{C}+\mathfrak{\mathring{d}\lrcorner
}\mathcal{C}  &  =\mathbf{S(}\mathcal{C})+\mathbf{S(}\mathcal{C}_{\perp
})+\boldsymbol{\partial}\mathfrak{\wedge}\mathcal{C}+\boldsymbol{\partial
}\mathfrak{\lrcorner}\mathcal{C}\nonumber\\
&  =\mathbf{S(}\mathcal{C})+\boldsymbol{\partial}\mathfrak{\wedge}%
\mathcal{C}+\boldsymbol{\partial}\mathfrak{\lrcorner}\mathcal{C}, \label{G5}%
\end{align}
which provides the proof of the proposition.
\end{proof}

\begin{proposition}
For any $\mathcal{C}\in\sec$ $\mathcal{C}\ell(M,\mathtt{g})$ we have:%
\begin{equation}
\boldsymbol{\partial}\mathcal{C}=\mathbf{P(}\mathfrak{\mathring{d}%
}\mathcal{C)} \label{DD1}%
\end{equation}

\end{proposition}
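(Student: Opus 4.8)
The plan is to obtain the identity as an immediate consequence of Proposition \ref{ptoCC}, by projecting its first identity onto $\mathcal{C}\ell(M,\mathtt{g})$. Since that proposition gives $\mathfrak{\mathring{d}}\mathcal{C}=\boldsymbol{\partial}\mathcal{C}+\mathbf{S}(\mathcal{C})$, applying the projection operator $\mathbf{P}$ to both sides yields $\mathbf{P}(\mathfrak{\mathring{d}}\mathcal{C})=\mathbf{P}(\boldsymbol{\partial}\mathcal{C})+\mathbf{P}(\mathbf{S}(\mathcal{C}))$, so the whole statement reduces to establishing the two facts $\mathbf{P}(\boldsymbol{\partial}\mathcal{C})=\boldsymbol{\partial}\mathcal{C}$ and $\mathbf{P}(\mathbf{S}(\mathcal{C}))=0$.

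The first fact is immediate: $\boldsymbol{\partial}=\theta^{\mathbf{i}}D_{\boldsymbol{e}_{\mathbf{i}}}$ is the Dirac operator intrinsic to $M$, so $\boldsymbol{\partial}\mathcal{C}\in\sec\mathcal{C}\ell(M,\mathtt{g})$, and $\mathbf{P}$, being the projection of Definition \ref{projope} with image $\mathcal{C}\ell(M,\mathtt{g})$ and satisfying $\mathbf{P}^{2}=\mathbf{P}$, acts as the identity on sections already living in $\mathcal{C}\ell(M,\mathtt{g})$. For the second fact I would simply read off the definition of the shape operator in Eq.(\ref{S1}): for $\mathcal{C}\in\sec\mathcal{C}\ell(M,\mathtt{g})$ one has $\mathbf{P}(\mathcal{C})=\mathcal{C}$, whence
\begin{equation*}
\mathbf{S}(\mathcal{C})=\mathfrak{\mathring{d}}(\mathbf{P}(\mathcal{C}))-\mathbf{P}(\mathfrak{\mathring{d}}\mathcal{C})=\mathfrak{\mathring{d}}\mathcal{C}-\mathbf{P}(\mathfrak{\mathring{d}}\mathcal{C})=\mathbf{P}_{\perp}(\mathfrak{\mathring{d}}\mathcal{C}),
\end{equation*}
using the definition of $\mathbf{P}_{\perp}$ in Eq.(\ref{compl}). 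Since $\mathbf{P}_{\perp}(\mathfrak{\mathring{d}}\mathcal{C})$ has, by construction, only components lying outside $\mathcal{C}\ell(M,\mathtt{g})$ and $\mathbf{P}\mathbf{P}_{\perp}=0$, we get $\mathbf{P}(\mathbf{S}(\mathcal{C}))=0$; this is precisely the inhomogeneous generalization of Eq.(\ref{P9}). Combining the two facts gives $\mathbf{P}(\mathfrak{\mathring{d}}\mathcal{C})=\boldsymbol{\partial}\mathcal{C}$.

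I do not expect a genuine obstacle, because Proposition \ref{ptoCC} has already done the substantive work of splitting $\mathfrak{\mathring{d}}\mathcal{C}$ into the tangential part $\boldsymbol{\partial}\mathcal{C}$ and the normal part $\mathbf{S}(\mathcal{C})$; the present statement merely records that $\boldsymbol{\partial}\mathcal{C}$ is the $\mathbf{P}$-projection of $\mathfrak{\mathring{d}}\mathcal{C}$. The only point requiring slight care is that $\mathbf{P}(\mathbf{S}(\mathcal{C}))=0$ must hold for an arbitrary, possibly inhomogeneous $\mathcal{C}$, not merely for $1$-forms as in Eq.(\ref{P9}). Should one prefer to avoid the closed form $\mathbf{S}(\mathcal{C})=\mathbf{P}_{\perp}(\mathfrak{\mathring{d}}\mathcal{C})$, an equivalent route is to write $\mathbf{S}(\mathcal{C})=\partial_{u}\wedge\mathbf{P}_{u}(\mathcal{C})$ as in Proposition \ref{proS}, to note that differentiating $\mathbf{P}^{2}=\mathbf{P}$ forces $\mathbf{P}(\mathbf{P}_{u}(\mathcal{C}))=0$ whenever $\mathbf{P}(\mathcal{C})=\mathcal{C}$, and then, since $\partial_{u}$ supplies only $1$-forms tangent to $M$, to invoke $\mathbf{P}(\mathcal{A}\wedge\mathcal{B})=\mathbf{P}(\mathcal{A})\wedge\mathbf{P}(\mathcal{B})$ and conclude $\mathbf{P}(\partial_{u}\wedge\mathbf{P}_{u}(\mathcal{C}))=\partial_{u}\wedge\mathbf{P}(\mathbf{P}_{u}(\mathcal{C}))=0$.
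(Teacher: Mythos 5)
Your proposal is correct and follows essentially the same route as the paper: apply $\mathbf{P}$ to the identity $\mathfrak{\mathring{d}}\mathcal{C}=\boldsymbol{\partial}\mathcal{C}+\mathbf{S}(\mathcal{C})$ of Proposition \ref{ptoCC}, use that $\boldsymbol{\partial}\mathcal{C}$ already lies in $\sec\mathcal{C}\ell(M,\mathtt{g})$ so $\mathbf{P}$ fixes it, and use that Eq.(\ref{S1}) forces $\mathbf{P}(\mathbf{S}(\mathcal{C}))=0$ for $\mathcal{C}\in\sec\mathcal{C}\ell(M,\mathtt{g})$. Your explicit justification $\mathbf{S}(\mathcal{C})=\mathbf{P}_{\perp}(\mathfrak{\mathring{d}}\mathcal{C})$ together with $\mathbf{P}\mathbf{P}_{\perp}=0$ is a welcome spelling-out of a step the paper only asserts.
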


\begin{proof}
From Eq.(\ref{S1}) when\ $\mathcal{C}\in\sec$ $\mathcal{C}\ell(M,\mathtt{g})$
it is%
\[
\mathbf{P(S}(\mathcal{C)})=0.
\]
Then, applying $\mathbf{P}$ to both members of the first line of Eq.(\ref{g1})
we have
\begin{equation}
\mathbf{P(}\mathfrak{\mathring{d}}\mathcal{C)=}\mathbf{(}\mathfrak{\mathring
{d}}\mathcal{C)}_{\parallel}=\mathbf{P(}\partial\mathcal{C)}+\mathbf{P}%
^{2}\mathbf{(S}(\mathcal{C)})=\mathbf{P(}\boldsymbol{\partial}\mathcal{C)}%
=\boldsymbol{\partial}\mathcal{C} \label{ufa}%
\end{equation}
and the proposition is proved.
\end{proof}

\section{Curvature Biform $\mathfrak{R}(u\wedge v)$ Expressed in Terms of the
Shape Operator}

\subsection{Equivalent Expressions for $\mathfrak{R}(u\wedge v)$}

In this section we suppose that the structure $(M,\boldsymbol{g},D)$ \ is such
that is $M$ a submanifold of $\mathring{M}\simeq\mathbb{R}^{n}$ and $D$ the
Levi-Civita connection of $\boldsymbol{g=i}^{\ast}\boldsymbol{\mathring{g}}$.
We obtained in Section 1 a formula (Eq.(\ref{rie1})) for the curvature biform
$\mathfrak{R}(u\wedge v)$ of a general Riemann-Cartan connection. Of course,
taking into account the fact that $\mathfrak{R}$ is an intrinsic object, the
evaluation of $\mathfrak{R}(u\wedge v)$ does not depend on the coordinate
chart and basis for vector and form fields used for its calculation. In what
follows we take advantage of this fact choosing the basis $\{\frac{\partial
}{\partial\boldsymbol{x}^{\mathbf{i}}},d\boldsymbol{x}^{\mathbf{i}}\}$ as
introduced above for which $\mathring{\omega}(u)=0$. Thus, we have, recalling
Eq.(\ref{rie1}) and Eq.(\ref{curvature}) that%

\begin{align}
\mathfrak{R}(u\wedge v)  &  =D_{\boldsymbol{u}}~\omega(v)-D_{\boldsymbol{v}%
}~\omega(u)+\omega(u)\times\omega(v)-\boldsymbol{\omega}_{[\boldsymbol{u}%
,\boldsymbol{v}]}\nonumber\\
&  =\mathring{D}_{u}~\omega(v)-\mathring{D}_{\boldsymbol{v}}~\omega
(u)+\omega(u)\times\omega(v)\nonumber\\
&  -\omega(v)\times\omega(u)-\omega(u)\times\omega(v)-\boldsymbol{\omega
}_{[\boldsymbol{u},\boldsymbol{v}]}\nonumber\\
&  =\mathring{D}_{\boldsymbol{u}}~\omega(v)-\mathring{D}_{\boldsymbol{v}%
}~\omega(u)+\omega(u)\times\omega(v)-\boldsymbol{\omega}_{[\boldsymbol{u}%
,\boldsymbol{v}]}. \label{bi1}%
\end{align}
On the other hand since in the gauge where $\mathring{\omega}(u)=0$ we have
that $\omega(u)=\mathcal{S}(u)$ and thus we can also write%
\begin{equation}
\mathfrak{R}(u\wedge v)=\mathring{D}_{\boldsymbol{u}}~\omega(v)-\mathring
{D}_{\boldsymbol{v}}~\omega(u)+\mathcal{S}(u)\times\mathcal{S}(v)-\mathcal{S(}%
[u,v]). \label{BI2}%
\end{equation}
Now, putting $\mathbf{x}_{\mathbf{i}}=\partial/\partial\boldsymbol{x}%
^{\mathbf{i}}$ we have%

\begin{align}
&  \mathring{D}_{\boldsymbol{u}}~\omega(v)-\mathring{D}_{\boldsymbol{v}%
}~\omega(u)\nonumber\\
&  =-u^{\mathbf{i}}v^{\mathbf{j}}\{\mathring{D}_{\mathbf{x}_{\mathbf{i}}%
}\mathcal{S}(\vartheta_{\mathbf{j}})-\mathring{D}_{\mathbf{x}_{\mathbf{J}}%
}\mathcal{S}(\vartheta_{\mathbf{i}})\}\nonumber\\
&  =-u^{\mathbf{i}}v^{\mathbf{j}}\{\mathring{D}_{\mathbf{x}_{\mathbf{i}}%
}(\mathring{D}_{\mathbf{x}_{\mathbf{j}}}I_{m}~I_{m}^{-1})-\mathring
{D}_{\mathbf{x}_{\mathbf{j}}}(\mathring{D}_{\mathbf{x}_{\mathbf{i}}}%
I_{m}~I_{m}^{-1})\}\nonumber\\
&  =-u^{\mathbf{i}}v^{\mathbf{j}}\{\mathring{D}_{\mathbf{x}_{\mathbf{i}}%
}\mathring{D}_{\mathbf{x}_{\mathbf{j}}}I_{m})~I_{m}^{-1})+(\mathring
{D}_{\mathbf{x}_{\mathbf{j}}}I_{m})~(\mathring{D}_{\mathbf{x}_{\mathbf{i}}%
}I_{m}^{-1})\nonumber\\
&  -(\mathring{D}_{\mathbf{x}_{\mathbf{j}}}\mathring{D}_{\mathbf{x}%
_{\mathbf{i}}}I_{m})~I_{m}^{-1})-(\mathring{D}_{\mathbf{x}_{\mathbf{i}}}%
I_{m})~(\mathring{D}_{\mathbf{x}_{\mathbf{j}}}I_{m}^{-1})\}\nonumber\\
&  =-u^{\mathbf{i}}v^{\mathbf{j}}\{(\mathring{D}_{[\mathbf{x}_{\mathbf{i}%
},\mathbf{x}_{\mathbf{j}}]}I_{m}~I_{m}^{-1})-(\mathring{D}_{\mathbf{x}%
_{\mathbf{j}}}I_{m})~(\mathring{D}_{\mathbf{x}_{\mathbf{i}}}I_{m}%
^{-1})-(\mathring{D}_{\mathbf{x}_{\mathbf{i}}}I_{m})~(\mathring{D}%
_{\mathbf{x}_{\mathbf{j}}}I_{m}^{-1})\}\nonumber\\
&  =-u^{\mathbf{i}}v^{\mathbf{j}}\{-(\mathring{D}_{\mathbf{x}_{\mathbf{j}}%
}I_{m})~(\mathring{D}_{\mathbf{x}_{\mathbf{i}}}I_{m}^{-1})-(\mathring
{D}_{\mathbf{x}_{\mathbf{i}}}I_{m})~(\mathring{D}_{\mathbf{x}_{\mathbf{j}}%
}I_{m}^{-1})\}\nonumber\\
&  =-u^{\mathbf{i}}v^{\mathbf{j}}\{((\mathring{D}_{\mathbf{x}_{\mathbf{i}}%
}I_{m})I_{m}^{-1})~((\mathring{D}_{\mathbf{x}_{j}}I_{m}^{-1})I_{m}%
)-((\mathring{D}_{\mathbf{x}_{\mathbf{i}}}I_{m})I_{m}^{-1})~((\mathring
{D}_{\mathbf{x}_{j}}I_{m}^{-1})I_{m})\}\nonumber\\
&  =-u^{\mathbf{i}}v^{\mathbf{j}}\{((\mathring{D}_{\mathbf{x}_{\mathbf{i}}%
}I_{m})I_{m}^{-1})~((\mathring{D}_{\mathbf{x}_{j}}I_{m})I_{m}^{-1}%
)-((\mathring{D}_{\mathbf{x}_{\mathbf{j}}}I_{m})I_{m}^{-1})~((\mathring
{D}_{\mathbf{x}_{\mathbf{i}}}I_{m})I_{m}^{-1})\}\nonumber\\
&  =-u^{\mathbf{i}}v^{\mathbf{j}}\mathcal{S}(\vartheta_{\mathbf{i}%
})\mathcal{S}(\vartheta_{\mathbf{j}})+u^{\mathbf{i}}v^{\mathbf{j}}%
\mathcal{S}(\vartheta_{\mathbf{j}})\mathcal{S}(\vartheta_{\mathbf{i}%
})\nonumber\\
&  =-\mathcal{S}(u)\mathcal{S}(v)+\mathcal{S}(v)\mathcal{S}(u)=-2\mathcal{S}%
(u)\times\mathcal{S}(v). \label{BL3}%
\end{align}

Thus, we get
\begin{equation}
\mathfrak{R}(u\wedge v)=-\mathcal{S}(u)\times\mathcal{S}(v)-\mathcal{S}%
([u,v]). \label{bi4}%
\end{equation}
Now take into account that since $\mathfrak{R}(u\wedge v)\in\sec%
{\textstyle\bigwedge\nolimits^{2}}
T^{\ast}M\hookrightarrow\sec$ $\mathcal{C}\ell(M,\mathtt{g})$ we must have, of
course, $-\mathcal{S}(u)\times\mathcal{S}(v)-\mathcal{S}([u,v])=\mathbf{P}%
(-\mathcal{S}(u)\times\mathcal{S}(v)-\mathcal{S}([u,v]))$ and since
Eq.(\ref{P9}) tell us that $\mathbf{P}(\mathcal{S}([u,v])=0$ we have\ the nice
formula\footnote{Note that in \cite{hs1984,sobczyk} the second member of
Eq.(\ref{bi5}) is the negative of what we found. Our result agrees with the
one in \cite{dl}.}%
\begin{equation}
\mathfrak{R}(u\wedge v)=-\mathbf{P}(\mathcal{S}(u)\times\mathcal{S}(v))
\label{bi5}%
\end{equation}
which express the curvature biform in terms of the shape biform.

\subsection{$\mathbf{S}^{2}(v)=-\boldsymbol{\partial}\wedge
\boldsymbol{\partial}~(v)$}

In this subsection we want to show the

\begin{proposition}
\label{shapericci} Let $v\in\sec%
{\textstyle\bigwedge\nolimits^{1}}
T^{\ast}M\hookrightarrow\sec$ $\mathcal{C}\ell(M,\mathtt{g})$. Then,
\end{proposition}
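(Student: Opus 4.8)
The plan is to evaluate the two sides separately and reduce each to one and the same contraction of the curvature biform. First I would record the link between the Ricci operator and $\mathfrak{R}$. The definition of the Ricci extensor, Eq.~\eqref{ricciext}, together with Eq.~\eqref{proctness} and the Remark preceding it, gives $\mathcal{R}(v)=\partial_{u}\lrcorner\mathfrak{R}(u\wedge v)$; and since $\mathcal{R}(\theta^{\mathbf{d}})=\mathcal{R}^{\mathbf{d}}=\boldsymbol{\partial}\wedge\boldsymbol{\partial}\,\theta^{\mathbf{d}}$ by Eq.~\eqref{3a} (this being the content of Proposition~\ref{RICCI}), linearity yields $\boldsymbol{\partial}\wedge\boldsymbol{\partial}\,v=\partial_{u}\lrcorner\mathfrak{R}(u\wedge v)$ for every $1$-form $v$, in agreement with Eq.~\eqref{p28}. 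Feeding the shape-operator expression for the curvature biform, Eq.~\eqref{bi5}, into this contraction gives $\boldsymbol{\partial}\wedge\boldsymbol{\partial}\,v=-\partial_{u}\lrcorner\mathbf{P}(\mathcal{S}(u)\times\mathcal{S}(v))$, so the target identity is equivalent to $\mathbf{S}^{2}(v)=\partial_{u}\lrcorner\mathbf{P}(\mathcal{S}(u)\times\mathcal{S}(v))$.

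Next I would simplify the left-hand side. Because $v$ is a $1$-form we have $\mathbf{S}(v)=\mathcal{S}(v)$ by Eq.~\eqref{S2}, and $\mathbf{P}(\mathcal{S}(v))=0$ by Eq.~\eqref{P9}; hence the definition Eq.~\eqref{S1} collapses to $\mathbf{S}^{2}(v)=\mathbf{S}(\mathcal{S}(v))=\mathfrak{\mathring{d}}(\mathbf{P}(\mathcal{S}(v)))-\mathbf{P}(\mathfrak{\mathring{d}}\,\mathcal{S}(v))=-\mathbf{P}(\mathfrak{\mathring{d}}\,\mathcal{S}(v))$. Thus the whole proposition reduces to proving $\mathbf{P}(\mathfrak{\mathring{d}}\,\mathcal{S}(v))=\partial_{u}\lrcorner\mathbf{P}(\mathcal{S}(u)\times\mathcal{S}(v))$.

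To establish this I would work in the coordinate gauge $\{\partial/\partial\boldsymbol{x}^{\mathbf{i}},d\boldsymbol{x}^{\mathbf{i}}\}$ of Section~4.1, in which $v\cdot\mathfrak{\mathring{d}}=\mathring{D}_{\boldsymbol{v}}$ on $M$ and, as in Eq.~\eqref{BL3}, $\mathcal{S}(u)=(\mathring{D}_{\boldsymbol{u}}I_{m})I_{m}^{-1}$. I would expand $\mathfrak{\mathring{d}}\,\mathcal{S}(v)=\theta^{\mathbf{i}}\mathring{D}_{\boldsymbol{e}_{\mathbf{i}}}\mathcal{S}(v)$ as a Clifford product, write $\mathcal{S}(v)$ in terms of the tangent$\wedge$normal bivectors of Eq.~\eqref{S6}, apply the Leibniz rule for $\mathring{D}$, and then project with $\mathbf{P}$. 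The flatness of $\mathring{M}\simeq\mathbb{R}^{n}$, i.e.\ $[\mathring{D}_{\mathbf{x}_{\mathbf{i}}},\mathring{D}_{\mathbf{x}_{\mathbf{j}}}]=0$ on coordinate fields, converts the second-derivative contribution into the quadratic term $\mathcal{S}\times\mathcal{S}$ exactly as in Eq.~\eqref{BL3}, while the identities relating $\mathbf{P}$ to its covariant derivative collected in the Appendix keep the bookkeeping tractable.

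I expect this last step to be the main obstacle. Since $\mathcal{S}(v)$ is a mixed tangent-normal bivector, its ambient covariant derivative $\mathring{D}_{\boldsymbol{e}_{\mathbf{i}}}\mathcal{S}(v)$ has components in all three sectors (tangent-tangent, tangent-normal, normal-normal), and left Clifford multiplication by $\theta^{\mathbf{i}}$ mixes grades; the delicate point is to verify that after applying $\mathbf{P}$ only the grade-lowering piece survives and reproduces $\partial_{u}\lrcorner\mathbf{P}(\mathcal{S}(u)\times\mathcal{S}(v))$ with the correct sign, the grade-raising ($\theta^{\mathbf{i}}\wedge\,$) and purely normal pieces being annihilated by the projection. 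Once this matching is verified, comparison with the expression for $\boldsymbol{\partial}\wedge\boldsymbol{\partial}\,v$ obtained in the first step closes the argument.
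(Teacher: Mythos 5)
Your reduction is sound up to a sign, and the first half of your route coincides with the paper's: both start from $\boldsymbol{\partial}\wedge\boldsymbol{\partial}\,v=\mathcal{R}(v)=\partial_{u}\lrcorner\mathfrak{R}(u\wedge v)$ and from $\mathfrak{R}(u\wedge v)=-\mathbf{P}(\mathcal{S}(u)\times\mathcal{S}(v))$ (Eq.~(\ref{bi5})), and your observation that $\mathbf{S}^{2}(v)=\mathbf{S}(\mathcal{S}(v))=-\mathbf{P}(\mathfrak{\mathring{d}}\,\mathcal{S}(v))$ is correct. But the identity you then declare the proposition equivalent to carries the wrong sign: substituting your two simplifications into the target $\mathbf{S}^{2}(v)=\partial_{u}\lrcorner\mathbf{P}(\mathcal{S}(u)\times\mathcal{S}(v))$ gives $\mathbf{P}(\mathfrak{\mathring{d}}\,\mathcal{S}(v))=-\partial_{u}\lrcorner\mathbf{P}(\mathcal{S}(u)\times\mathcal{S}(v))$, not $+$. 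If you proved the identity exactly as you stated it you would land on $\mathbf{S}^{2}(v)=+\boldsymbol{\partial}\wedge\boldsymbol{\partial}\,v$, i.e.\ the Hestenes--Sobczyk sign that the authors explicitly say is opposite to theirs.

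The more serious problem is that the step you yourself call ``the main obstacle'' is never carried out: you describe a plan (expand $\mathfrak{\mathring{d}}\,\mathcal{S}(v)$ by Leibniz in the coordinate gauge, sort the tangent/normal sectors, check that only the grade-lowering piece survives projection with the right sign) and then say ``once this matching is verified'' the argument closes. That matching is the entire remaining content of the proposition, so as written the proof is incomplete. The paper avoids this computation altogether: from its Lemma $\mathbf{P}_{v}(\mathcal{C})=\mathbf{P}(\mathcal{C})\times\mathcal{S}(v)-\mathbf{P}(\mathcal{C}\times\mathcal{S}(v))$ applied to $\mathcal{C}=\mathcal{S}(u)$ (using $\mathbf{P}(\mathcal{S}(u))=0$) one gets $\mathfrak{R}(u\wedge v)=\mathbf{P}_{v}(\mathcal{S}(u))=-\mathbf{P}_{u}(\mathcal{S}(v))$, and then $\mathcal{R}(v)=\partial_{u}\mathfrak{R}(u\wedge v)=-\partial_{u}\mathbf{P}_{u}(\mathcal{S}(v))=-\mathfrak{\mathring{d}}\mathbf{P}\,(\mathbf{S}(v))=-\mathbf{S}^{2}(v)$, using $\partial_{u}\mathbf{P}_{u}=\mathfrak{\mathring{d}}\mathbf{P}$ (Eq.~(\ref{A3})). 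The identity $\mathbf{P}_{u}(\mathcal{S}(v))=-\mathbf{P}(u\cdot\mathfrak{\mathring{d}}\,\mathcal{S}(v))$, which follows from $\mathbf{P}(\mathcal{S}(v))=0$ and the definition of $\mathbf{P}_{u}$, is exactly the bridge your sketch is missing, and it also supplies the minus sign you dropped.
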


\begin{equation}
\mathbf{S}^{2}(v)=-\boldsymbol{\partial}\wedge\boldsymbol{\partial}~(v).
\label{bi6}%
\end{equation}

Eq.(\ref{bi6}) tell us that the square of the shape operator applied to a
$1$-form field $v$ is equal to the Ricci operator applied to $v$. We will
comment more on the significance of this result in the conclusions.

Now, to prove the Proposition \ref{shapericci} we need the following lemmas

\begin{lemma}
Let $\mathcal{C\in}\sec$ $\mathcal{C}\ell(\mathring{M},\mathtt{g})$ and
$v\in\sec%
{\textstyle\bigwedge\nolimits^{1}}
T^{\ast}M\hookrightarrow\sec$ $\mathcal{C}\ell(M,\mathtt{g})$.Then
\end{lemma}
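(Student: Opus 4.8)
The statement is an auxiliary Leibniz-type product rule for the restricted Dirac operator $\mathfrak{\mathring{d}}$ of Eq.(\ref{2a}) acting on the Clifford product $v\mathcal{C}$ of the tangent $1$-form $v$ with the general ambient section $\mathcal{C}$; I expect it to read
\[
\mathfrak{\mathring{d}}(v\mathcal{C})=(\mathfrak{\mathring{d}}v)\mathcal{C}-v(\mathfrak{\mathring{d}}\mathcal{C})+2\mathring{D}_{\boldsymbol{v}}\mathcal{C},
\]
with $v=\boldsymbol{g}(\boldsymbol{v},\ )$ and $\boldsymbol{v}\in\sec TM$. The plan is to expand $\mathfrak{\mathring{d}}$ in the adapted orthonormal coframe, apply the ordinary Leibniz rule of the ambient Levi-Civita connection $\mathring{D}$ over the Clifford product, and then resolve the left multiplication $\theta^{\mathbf{i}}v$ by the fundamental identity $\theta^{\mathbf{i}}v+v\theta^{\mathbf{i}}=2\,\theta^{\mathbf{i}}\cdot v$.

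First I would write $\mathfrak{\mathring{d}}(v\mathcal{C})=\sum_{i=1}^{m}\theta^{\mathbf{i}}\mathring{D}_{\boldsymbol{e}_{\mathbf{i}}}(v\mathcal{C})$ and use $\mathring{D}_{\boldsymbol{e}_{\mathbf{i}}}(v\mathcal{C})=(\mathring{D}_{\boldsymbol{e}_{\mathbf{i}}}v)\mathcal{C}+v(\mathring{D}_{\boldsymbol{e}_{\mathbf{i}}}\mathcal{C})$, which is legitimate because $\mathring{D}$ is a genuine metric connection on all of $\mathcal{C}\ell(\mathring{M},\mathtt{\mathring{g}})$ and hence a derivation of the Clifford product, valid for the arbitrary section $\mathcal{C}$. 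The first batch of terms collapses directly to $(\mathfrak{\mathring{d}}v)\mathcal{C}$ by the definition (\ref{2a}). For the second batch I substitute $\theta^{\mathbf{i}}v=2\,\theta^{\mathbf{i}}\cdot v-v\theta^{\mathbf{i}}$, so that $\sum_{i}\theta^{\mathbf{i}}v\,(\mathring{D}_{\boldsymbol{e}_{\mathbf{i}}}\mathcal{C})$ splits into a scalar-contraction piece $2\sum_{i}(\theta^{\mathbf{i}}\cdot v)\mathring{D}_{\boldsymbol{e}_{\mathbf{i}}}\mathcal{C}$ and a remainder $-v\sum_{i}\theta^{\mathbf{i}}\mathring{D}_{\boldsymbol{e}_{\mathbf{i}}}\mathcal{C}=-v(\mathfrak{\mathring{d}}\mathcal{C})$, where associativity of the Clifford product lets me factor the single $v$ out to the left.

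The step I expect to be most delicate is the bookkeeping of the summation range when identifying the contraction piece with $2\mathring{D}_{\boldsymbol{v}}\mathcal{C}$: since $\mathfrak{\mathring{d}}$ runs only over the $m$ tangential directions while $\mathcal{C}$ is an arbitrary ambient section, one must check that no normal term $\mathring{D}_{\boldsymbol{e}_{m+d}}\mathcal{C}$ is silently required. This is exactly where the tangency hypothesis $v\in\sec\bigwedge\nolimits^{1}T^{\ast}M$ is invoked: the normal coframe components vanish on $M$, so $\theta^{\mathbf{i}}\cdot v=v^{\mathbf{i}}$ with $v^{\mathbf{i}}=0$ for normal labels, and the truncated sum $\sum_{i=1}^{m}v^{\mathbf{i}}\mathring{D}_{\boldsymbol{e}_{\mathbf{i}}}$ already recovers the full directional derivative $\mathring{D}_{\boldsymbol{v}}=v\cdot\boldsymbol{\mathring{\partial}}$. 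Everything else is routine linearity, and the resulting identity is precisely the tool needed to push $\mathfrak{\mathring{d}}$ through products such as $\mathcal{S}(v)$ of Eq.(\ref{p10}) when iterating the shape operator toward $\mathbf{S}^{2}(v)=-\boldsymbol{\partial}\wedge\boldsymbol{\partial}(v)$ in Eq.(\ref{bi6}).
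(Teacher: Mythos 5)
Your proposal proves the wrong statement. The lemma whose hypotheses you were given is completed in the paper by Eq.(\ref{l1}),
\[
\mathbf{P}_{v}(\mathcal{C})=\mathbf{P}(\mathcal{C})\times\mathcal{S}(v)-\mathbf{P}(\mathcal{C}\times\mathcal{S}(v)),
\]
i.e.\ it computes the derivative $\mathbf{P}_{v}:=v\cdot\mathfrak{\mathring{d}}\mathbf{P}$ of the projection extensor, applied to a general \emph{ambient} section $\mathcal{C}\in\sec\mathcal{C}\ell(\mathring{M},\mathtt{\mathring{g}})$, in terms of commutators with the shape biform $\mathcal{S}(v)$. The paper's proof is a four-line cancellation: expand $\mathbf{P}_{v}(\mathcal{C})=\mathring{D}_{\boldsymbol{v}}(\mathbf{P}(\mathcal{C}))-\mathbf{P}(\mathring{D}_{\boldsymbol{v}}\mathcal{C})$ from the definition in Eq.(\ref{A1a}), substitute $\mathring{D}_{\boldsymbol{v}}=D_{\boldsymbol{v}}-\mathcal{S}(v)\times(\ \,)$ from Eq.(\ref{p11}) in both occurrences, observe that the two $D_{\boldsymbol{v}}$ terms cancel, and use the antisymmetry $\mathcal{S}(v)\times\mathbf{P}(\mathcal{C})=-\mathbf{P}(\mathcal{C})\times\mathcal{S}(v)$. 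Two contextual clues point away from your reading: the hypothesis places $\mathcal{C}$ in the ambient bundle $\mathcal{C}\ell(\mathring{M},\mathtt{\mathring{g}})$ rather than in $\mathcal{C}\ell(M,\mathtt{g})$, and the lemma sits immediately after the sentence announcing the lemmas needed for Proposition \ref{shapericci}.

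To be fair, the Leibniz-type identity you derived, $\mathfrak{\mathring{d}}(v\mathcal{C})=(\mathfrak{\mathring{d}}v)\mathcal{C}-v(\mathfrak{\mathring{d}}\mathcal{C})+2\mathring{D}_{\boldsymbol{v}}\mathcal{C}$, is correct as a computation (the split $\theta^{\mathbf{i}}v=2\,\theta^{\mathbf{i}}\cdot v-v\theta^{\mathbf{i}}$ and your summation-range check are both sound), but it cannot substitute for the paper's lemma: it involves neither $\mathbf{P}$, nor $\mathbf{P}_{v}$, nor $\mathcal{S}(v)$, and the downstream argument consumes Eq.(\ref{l1}) in an essential way. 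Specifically, taking $\mathcal{C}=\mathcal{S}(u)$ and using $\mathbf{P}(\mathcal{S}(u))=0$ (Eq.(\ref{P9})), Eq.(\ref{l1}) collapses to $\mathbf{P}_{v}(\mathcal{S}(u))=-\mathbf{P}(\mathcal{S}(u)\times\mathcal{S}(v))=\mathfrak{R}(u\wedge v)$ (Eqs.(\ref{ll11}) and (\ref{II10})), which is the pivot for $\mathcal{R}(v)=-\mathbf{S}^{2}(v)$, i.e.\ Eq.(\ref{bi6}). Your product rule gives no route to that chain, so the proof of the actual statement is missing entirely.
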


\begin{equation}
\mathbf{P}_{v}(\mathcal{C})=\mathbf{P}(\mathcal{C})\times\mathcal{S(}%
v\mathcal{)}-\mathbf{P}(\mathcal{C}\times\mathcal{S(}v\mathcal{))}. \label{l1}%
\end{equation}

\begin{proof}
Indeed,%
\begin{align}
\mathbf{P}_{v}(\mathcal{C})  &  =\mathring{D}_{\boldsymbol{v}}(\mathbf{P}%
(\mathcal{C}))-\mathbf{P}(\mathring{D}_{\boldsymbol{v}}\mathcal{C})\nonumber\\
&  =D_{\boldsymbol{v}}(\mathbf{P}(\mathcal{C}))-\mathcal{S}(v)\times
\mathbf{P}(\mathcal{C})-\mathbf{P}(D_{\boldsymbol{v}}\mathcal{C}%
-\mathcal{S}(v)\times\mathcal{C})\nonumber\\
&  =D_{\boldsymbol{v}}\mathcal{C}-\mathcal{S}(v)\times\mathbf{P}%
(\mathcal{C})-D_{\boldsymbol{v}}\mathcal{C}+\mathbf{P}(\mathcal{S}%
(v)\times\mathcal{C})\nonumber\\
&  =\mathbf{P}(\mathcal{C})\times\mathcal{S(}v\mathcal{)}-\mathbf{P}%
(\mathcal{C}\times\mathcal{S(}v\mathcal{))} \label{l4}%
\end{align}
which proves the lemma.
\end{proof}

\begin{lemma}
Let $\mathcal{C\in}\sec$ $\mathcal{C}\ell(M,\mathtt{g})$ and $v\in\sec%
{\textstyle\bigwedge\nolimits^{1}}
T^{\ast}M\hookrightarrow\sec$ $\mathcal{C}\ell(M,\mathtt{g})$.Then%
\begin{equation}
D_{\boldsymbol{v}}\mathcal{C}=\mathring{D}_{\boldsymbol{v}}\mathcal{C-}%
\mathbf{P}_{v}(\mathcal{C}). \label{l5}%
\end{equation}

\end{lemma}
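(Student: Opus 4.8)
The plan is to obtain (\ref{l5}) directly from the definition of the operator $\mathbf{P}_v$, exploiting the fact that for $\mathcal{C}\in\sec\mathcal{C}\ell(M,\mathtt{g})$ the argument already lies in the target bundle so that the whole expression collapses. First I would invoke the defining relation (\ref{A1a}), namely $\mathbf{P}_v(\mathcal{C})=v\cdot\mathfrak{\mathring{d}}(\mathbf{P}(\mathcal{C}))-\mathbf{P}(v\cdot\mathfrak{\mathring{d}}\mathcal{C})$, which is simply the Leibniz rule for the directional derivative of the extensor field $\mathbf{P}$. Since $\mathcal{C}\in\sec\mathcal{C}\ell(M,\mathtt{g})$ we have $\mathbf{P}(\mathcal{C})=\mathcal{C}$, so the first term reduces to $v\cdot\mathfrak{\mathring{d}}\mathcal{C}$.

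The key identification is that for a tangent $v$ one has $v\cdot\mathfrak{\mathring{d}}\mathcal{C}=\mathring{D}_{\boldsymbol{v}}\mathcal{C}$: writing $\mathfrak{\mathring{d}}=\sum_{i=1}^m\theta^{\mathbf{i}}\mathring{D}_{\boldsymbol{e}_i}$ as in Eq.(\ref{2a}), the contraction against $v=\boldsymbol{g}(\boldsymbol{v},\ )$ reproduces $\mathring{D}_{\boldsymbol{v}}$ exactly because the normal legs $\mathring{\theta}^{m+d}$ of $\boldsymbol{\mathring{\partial}}$ are already absent from the restricted operator $\mathfrak{\mathring{d}}$. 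I would then use Eq.(\ref{p2}) (equivalently Eq.(\ref{p3})), which states that the Levi-Civita connection of $\boldsymbol{g}$ is the tangential part of $\mathring{D}$, i.e. $D_{\boldsymbol{v}}\mathcal{C}=(\mathring{D}_{\boldsymbol{v}}\mathcal{C})_{\parallel}=\mathbf{P}(\mathring{D}_{\boldsymbol{v}}\mathcal{C})$. Substituting into the definition gives $\mathbf{P}_v(\mathcal{C})=\mathring{D}_{\boldsymbol{v}}\mathcal{C}-\mathbf{P}(\mathring{D}_{\boldsymbol{v}}\mathcal{C})=\mathring{D}_{\boldsymbol{v}}\mathcal{C}-D_{\boldsymbol{v}}\mathcal{C}$, which upon rearrangement is precisely (\ref{l5}).

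The computation is short, and the statement is genuinely easy once the earlier identifications are in place; the only point that needs care — and which I regard as the main (if minor) obstacle — is justifying cleanly that $v\cdot\mathfrak{\mathring{d}}\mathcal{C}=\mathring{D}_{\boldsymbol{v}}\mathcal{C}$ and that $\mathbf{P}$ coincides with the parallel projection $(\cdot)_{\parallel}$ onto $\mathcal{C}\ell(M,\mathtt{g})$, so that Eq.(\ref{p2}) applies verbatim. As an alternative route I could instead specialize the preceding Lemma (\ref{l1}) to $\mathcal{C}\in\sec\mathcal{C}\ell(M,\mathtt{g})$, so that $\mathbf{P}(\mathcal{C})=\mathcal{C}$, and observe that $\mathbf{P}(\mathcal{C}\times\mathcal{S}(v))=0$: each term of $\mathcal{C}\times\mathcal{S}(v)$ carries a normal factor $\mathring{\theta}^t$ with $t>m$, since $\mathcal{S}(v)$ is built from $\theta^{\mathbf{k}}\wedge\mathring{\theta}^t$ by Eq.(\ref{S6}), and the tangential projection kills any term containing such a factor. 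This yields $\mathbf{P}_v(\mathcal{C})=\mathcal{C}\times\mathcal{S}(v)=-\mathcal{S}(v)\times\mathcal{C}$, and combining with Eq.(\ref{p11}) again delivers $D_{\boldsymbol{v}}\mathcal{C}=\mathring{D}_{\boldsymbol{v}}\mathcal{C}-\mathbf{P}_v(\mathcal{C})$. I would present the first route as the main argument and flag the second as a consistency check against the previous lemma.
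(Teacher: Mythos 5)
Your main argument is correct and is essentially the paper's own proof: the paper simply says the lemma ``follows from the first line in Eq.~(\ref{l4})'', which is exactly the defining Leibniz relation $\mathbf{P}_{v}(\mathcal{C})=\mathring{D}_{\boldsymbol{v}}(\mathbf{P}(\mathcal{C}))-\mathbf{P}(\mathring{D}_{\boldsymbol{v}}\mathcal{C})$ that you invoke, combined with $\mathbf{P}(\mathcal{C})=\mathcal{C}$ and $D_{\boldsymbol{v}}\mathcal{C}=\mathbf{P}(\mathring{D}_{\boldsymbol{v}}\mathcal{C})$ from Eq.~(\ref{p2}). Your secondary route via Lemma~(\ref{l1}) is a valid consistency check but is logically circular as a proof here, since the paper derives Eq.~(\ref{l1}) using Eq.~(\ref{p11}), whose content is equivalent to the present lemma.
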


\begin{proof}
Follows from the first line in Eq.(\ref{l4}).
\end{proof}

\begin{lemma}
Let $\mathcal{C\in}\sec$ $\mathcal{C}\ell(M,\mathtt{g})$ and $u,v\in\sec%
{\textstyle\bigwedge\nolimits^{1}}
T^{\ast}M$.$\hookrightarrow\sec$ $\mathcal{C}\ell(M,\mathtt{g})$.Then%
\begin{equation}
D_{\boldsymbol{u}}D_{\boldsymbol{v}}\mathcal{C}=\mathbf{P}(\mathring
{D}_{\boldsymbol{u}}\mathring{D}_{\boldsymbol{v}}\mathcal{C})+\mathbf{P}%
_{u}\mathbf{P}_{v}(\mathcal{C}). \label{ll1}%
\end{equation}

\end{lemma}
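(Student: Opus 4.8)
The plan is to obtain $D_{\boldsymbol{u}}D_{\boldsymbol{v}}\mathcal{C}$ by iterating the first-order identity $D_{\boldsymbol{v}}\mathcal{C}=\mathbf{P}(\mathring{D}_{\boldsymbol{v}}\mathcal{C})$, which is Eq.(\ref{p2}) (equivalently Eq.(\ref{DD1})) and is valid precisely because $\mathcal{C}\in\sec\mathcal{C}\ell(M,\mathtt{g})$, together with the Leibniz-type formula defining the derivative of the projection extensor, namely $\mathring{D}_{\boldsymbol{u}}(\mathbf{P}(\mathcal{A}))=\mathbf{P}_{u}(\mathcal{A})+\mathbf{P}(\mathring{D}_{\boldsymbol{u}}\mathcal{A})$, a trivial rearrangement of Eq.(\ref{A1a}). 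The point enabling the iteration is that $D_{\boldsymbol{v}}\mathcal{C}$ again lies in $\sec\mathcal{C}\ell(M,\mathtt{g})$, so Eq.(\ref{p2}) may be applied a second time.

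First I would write $D_{\boldsymbol{u}}D_{\boldsymbol{v}}\mathcal{C}=\mathbf{P}\!\left(\mathring{D}_{\boldsymbol{u}}\bigl(\mathbf{P}(\mathring{D}_{\boldsymbol{v}}\mathcal{C})\bigr)\right)$ and expand the inner $\mathring{D}_{\boldsymbol{u}}\mathbf{P}$ via the Leibniz formula with $\mathcal{A}=\mathring{D}_{\boldsymbol{v}}\mathcal{C}$. This yields $D_{\boldsymbol{u}}D_{\boldsymbol{v}}\mathcal{C}=\mathbf{P}\bigl(\mathbf{P}_{u}(\mathring{D}_{\boldsymbol{v}}\mathcal{C})\bigr)+\mathbf{P}^{2}(\mathring{D}_{\boldsymbol{u}}\mathring{D}_{\boldsymbol{v}}\mathcal{C})$, and the idempotency $\mathbf{P}^{2}=\mathbf{P}$ (used already in Eq.(\ref{A7})) collapses the last term to exactly $\mathbf{P}(\mathring{D}_{\boldsymbol{u}}\mathring{D}_{\boldsymbol{v}}\mathcal{C})$, which is the first term on the right-hand side of the claim. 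Thus the entire content of the lemma reduces to identifying the cross term $\mathbf{P}\bigl(\mathbf{P}_{u}(\mathring{D}_{\boldsymbol{v}}\mathcal{C})\bigr)$ with $\mathbf{P}_{u}\mathbf{P}_{v}(\mathcal{C})$.

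To close this, I would first record that for $\mathcal{C}\in\sec\mathcal{C}\ell(M,\mathtt{g})$ Eq.(\ref{l5}) combined with Eq.(\ref{p2}) gives $\mathbf{P}_{v}(\mathcal{C})=\mathring{D}_{\boldsymbol{v}}\mathcal{C}-\mathbf{P}(\mathring{D}_{\boldsymbol{v}}\mathcal{C})$, so $\mathbf{P}_{v}(\mathcal{C})$ is purely normal and $\mathbf{P}(\mathbf{P}_{v}(\mathcal{C}))=0$ by idempotency. Feeding this into the definition of $\mathbf{P}_{u}$ applied to the section $\mathbf{P}_{v}(\mathcal{C})$ annihilates the $\mathring{D}_{\boldsymbol{u}}\mathbf{P}$ term and leaves $\mathbf{P}_{u}\mathbf{P}_{v}(\mathcal{C})=-\mathbf{P}\bigl(\mathring{D}_{\boldsymbol{u}}\mathbf{P}_{v}(\mathcal{C})\bigr)=-\mathbf{P}(\mathring{D}_{\boldsymbol{u}}\mathring{D}_{\boldsymbol{v}}\mathcal{C})+\mathbf{P}\bigl(\mathring{D}_{\boldsymbol{u}}\mathbf{P}(\mathring{D}_{\boldsymbol{v}}\mathcal{C})\bigr)$. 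On the other hand, expanding $\mathbf{P}\bigl(\mathbf{P}_{u}(\mathring{D}_{\boldsymbol{v}}\mathcal{C})\bigr)$ by the same definition of $\mathbf{P}_{u}$ and again using $\mathbf{P}^{2}=\mathbf{P}$ produces the identical expression, so the two coincide and the lemma follows.

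The main obstacle is purely one of bookkeeping of parallel versus perpendicular parts: one must apply the defining formula for $\mathbf{P}_{u}$ to the section $\mathring{D}_{\boldsymbol{v}}\mathcal{C}$, which in general is \emph{not} in $\sec\mathcal{C}\ell(M,\mathtt{g})$, so $\mathbf{P}$ there is not the identity and $\mathbf{P}_{u}$ cannot be manipulated as if it were an ordinary derivation on $\mathcal{C}\ell(M,\mathtt{g})$. The identity genuinely depends on exploiting both $\mathbf{P}^{2}=\mathbf{P}$ and the vanishing $\mathbf{P}(\mathbf{P}_{v}(\mathcal{C}))=0$; the tempting shortcut of commuting $\mathbf{P}_{u}$ past $\mathbf{P}$ is invalid, and keeping track of which terms survive projection is where the care is required.
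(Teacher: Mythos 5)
Your proof is correct. It rests on the same two ingredients as the paper's own proof --- the first-order relation between $D$ and $\mathring{D}$ and the defining formula Eq.(\ref{A1a}) for $\mathbf{P}_{u}$ --- but it is organized differently and, on one point, more carefully. The paper starts from Eq.(\ref{l5}), writes $D_{\boldsymbol{u}}D_{\boldsymbol{v}}\mathcal{C}=D_{\boldsymbol{u}}(\mathring{D}_{\boldsymbol{v}}\mathcal{C}-\mathbf{P}_{v}(\mathcal{C}))$ and then applies $D_{\boldsymbol{u}}=\mathring{D}_{\boldsymbol{u}}-\mathbf{P}_{u}$ termwise to $\mathring{D}_{\boldsymbol{v}}\mathcal{C}$ and to $\mathbf{P}_{v}(\mathcal{C})$, neither of which is in general a section of $\mathcal{C}\ell(M,\mathtt{g})$, so that Eq.(\ref{l5}) as stated does not literally apply to them; the four resulting terms are then matched against a separate expansion of $\mathbf{P}(\mathring{D}_{\boldsymbol{u}}\mathring{D}_{\boldsymbol{v}}\mathcal{C})$. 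You instead iterate $D_{\boldsymbol{v}}\mathcal{C}=\mathbf{P}(\mathring{D}_{\boldsymbol{v}}\mathcal{C})$ (Eq.(\ref{p2})), which is legitimate at the second step because $D_{\boldsymbol{v}}\mathcal{C}$ is again tangent, and after one application of Eq.(\ref{A1a}) and $\mathbf{P}^{2}=\mathbf{P}$ you reduce the whole lemma to the single cross-term identity $\mathbf{P}(\mathbf{P}_{u}(\mathring{D}_{\boldsymbol{v}}\mathcal{C}))=\mathbf{P}_{u}\mathbf{P}_{v}(\mathcal{C})$, which you verify by showing that both sides equal $\mathbf{P}(\mathring{D}_{\boldsymbol{u}}\mathbf{P}(\mathring{D}_{\boldsymbol{v}}\mathcal{C}))-\mathbf{P}(\mathring{D}_{\boldsymbol{u}}\mathring{D}_{\boldsymbol{v}}\mathcal{C})$, using idempotency and the fact that $\mathbf{P}(\mathbf{P}_{v}(\mathcal{C}))=0$ for tangent $\mathcal{C}$ (the content of Eq.(\ref{ap6})). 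Every application of a first-order identity in your argument stays within its stated domain, a modest but genuine gain in rigor over the printed proof; the price is the explicit verification of the cross-term identity, which the paper sidesteps by its subtract-and-match bookkeeping.
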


\begin{proof}
Using Eq.(\ref{l5}) we have
\begin{align}
D_{\boldsymbol{u}}(D_{\boldsymbol{v}}\mathcal{C)}  &  =D_{\boldsymbol{u}%
}(\mathring{D}_{\boldsymbol{v}}\mathcal{C-}\mathbf{P}_{v}(\mathcal{C}%
))\nonumber\\
&  =\mathring{D}_{\boldsymbol{u}}\mathring{D}_{\boldsymbol{v}}\mathcal{C-}%
\mathbf{P}_{u}(\mathring{D}_{\boldsymbol{v}}\mathcal{C})-\mathring
{D}_{\boldsymbol{u}}(\mathbf{P}_{v}(\mathcal{C}))+\mathbf{P}_{u}\mathbf{P}%
_{v}(\mathcal{C}). \label{ll1a}%
\end{align}

On the other hand we have%
\begin{align}
\mathbf{P}(\mathring{D}_{\boldsymbol{u}}\mathring{D}_{\boldsymbol{v}%
}\mathcal{C})  &  =-\mathbf{P}_{u}(\mathring{D}_{\boldsymbol{v}}%
\mathcal{C})+\mathring{D}_{\boldsymbol{u}}(\mathbf{P}(\mathring{D}%
_{\boldsymbol{v}}\mathcal{C}))\nonumber\\
&  =-\mathbf{P}_{u}(\mathring{D}_{\boldsymbol{v}}\mathcal{C})+\mathring
{D}_{\boldsymbol{u}}(D_{\boldsymbol{v}}\mathcal{C})\nonumber\\
&  =-\mathbf{P}_{u}(\mathring{D}_{\boldsymbol{v}}\mathcal{C})+\mathring
{D}_{\boldsymbol{u}}(\mathring{D}_{\boldsymbol{v}}\mathcal{C-}\mathbf{P}%
_{v}(\mathcal{C}))\nonumber\\
&  =\mathring{D}_{\boldsymbol{u}}\mathring{D}_{\boldsymbol{v}}\mathcal{C-}%
\mathbf{P}_{u}(\mathring{D}_{\boldsymbol{v}}\mathcal{C})-\mathring
{D}_{\boldsymbol{u}}(\mathbf{P}_{v}(\mathcal{C})). \label{ll2}%
\end{align}
Putting Eq.(\ref{ll2}) in Eq.(\ref{ll1}) gives the desired result.
\end{proof}

\begin{lemma}
Let $\mathcal{C\in}\sec$ $\mathcal{C}\ell(M,\mathtt{g})$ and $u,v\in\sec%
{\textstyle\bigwedge\nolimits^{1}}
T^{\ast}M\hookrightarrow\sec$ $\mathcal{C}\ell(M,\mathtt{g})$.Then
\end{lemma}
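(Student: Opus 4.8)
The plan is to read this closing lemma as the Clifford-bundle commutator identity that converts the second-order formula Eq.(\ref{ll1}) into a statement about the curvature of $D$, namely $([D_{\boldsymbol u},D_{\boldsymbol v}]-D_{[\boldsymbol u,\boldsymbol v]})\mathcal C=[\mathbf P_u,\mathbf P_v](\mathcal C)$, which by Eq.(\ref{exercise}) is the same as $\tfrac12[\mathfrak R(u\wedge v),\mathcal C]=[\mathbf P_u,\mathbf P_v](\mathcal C)$. This is the natural last ingredient before Proposition \ref{shapericci}, since it exhibits $\mathfrak R$ acting by commutation through the operators $\mathbf P_u,\mathbf P_v$, which Eq.(\ref{l1}) already ties to the shape biform $\mathcal S$.

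First I would write Eq.(\ref{ll1}) as it stands and again with $\boldsymbol u,\boldsymbol v$ interchanged, and subtract. The pieces $\mathbf P(\mathring D_{\boldsymbol u}\mathring D_{\boldsymbol v}\mathcal C)$ combine into $\mathbf P([\mathring D_{\boldsymbol u},\mathring D_{\boldsymbol v}]\mathcal C)$ while the projected terms give the commutator $[\mathbf P_u,\mathbf P_v](\mathcal C)$. Then I would invoke the flatness of $\mathring M\simeq\mathbb R^n$ recorded in Eq.(\ref{p24}): the curvature of $\mathring D$ vanishes, so $[\mathring D_{\boldsymbol u},\mathring D_{\boldsymbol v}]\mathcal C=\mathring D_{[\boldsymbol u,\boldsymbol v]}\mathcal C$. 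Applying $\mathbf P$ and using that $D_{\boldsymbol w}\mathcal C=\mathbf P(\mathring D_{\boldsymbol w}\mathcal C)$ for every $\boldsymbol w$ (Eq.(\ref{p2})) collapses the symmetric piece to exactly $D_{[\boldsymbol u,\boldsymbol v]}\mathcal C$, leaving the desired identity; comparison with Eq.(\ref{exercise}) then supplies the $\mathfrak R(u\wedge v)$ form.

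If instead the lemma is meant to record $[\mathbf P_u,\mathbf P_v](\mathcal C)$ \emph{explicitly} in terms of $\mathcal S$, I would push Eq.(\ref{l1}) one step further. For $\mathcal C\in\sec\mathcal C\ell(M,\mathtt g)$ one has $\mathbf P(\mathcal C)=\mathcal C$, so Eq.(\ref{l1}) collapses to $\mathbf P_v(\mathcal C)=\mathbf P_\perp(\mathcal C\times\mathcal S(v))$, a purely normal section. Feeding this back into $\mathbf P_u$, whose argument now has vanishing tangential part, and using Eq.(\ref{P9}) gives $\mathbf P_u\mathbf P_v(\mathcal C)=-\mathbf P(\mathbf P_\perp(\mathcal C\times\mathcal S(v))\times\mathcal S(u))$, so that the commutator reduces to projected double shape products of the same shape as the curvature expression already obtained in Eq.(\ref{bi5}).

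The step I expect to fight with is the projection bookkeeping in this second route: $\mathbf P_u$ and $\mathbf P_v$ neither commute with $\mathbf P$ nor with each other, and the mixed tangential/normal grading of $\mathcal S(v)$ forces one to track carefully which Clifford products fall back inside $\mathcal C\ell(M,\mathtt g)$ and which are annihilated by $\mathbf P$; the associativity and sign juggling of the triple $\times$-products is where an error is most likely to appear. For that reason I would prove the commutator identity first, since it needs only Eqs.(\ref{ll1}), (\ref{p24}), (\ref{p2}) and (\ref{exercise}) and no triple-product algebra, and treat the explicit $\mathcal S$-rewriting as a corollary once Eq.(\ref{bi5}) is in hand.
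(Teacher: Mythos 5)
Your first route is exactly the paper's proof: antisymmetrize Eq.(\ref{ll1}) to get $[D_{\boldsymbol u},D_{\boldsymbol v}]\mathcal C=\mathbf P([\mathring D_{\boldsymbol u},\mathring D_{\boldsymbol v}]\mathcal C)+[\mathbf P_u,\mathbf P_v]\mathcal C$, use the flatness of $\mathring D$ and Eq.(\ref{p2}) to collapse the first term to $D_{[\boldsymbol u,\boldsymbol v]}\mathcal C$, and then identify $([D_{\boldsymbol u},D_{\boldsymbol v}]-D_{[\boldsymbol u,\boldsymbol v]})\mathcal C$ with $\tfrac12[\mathfrak R(u\wedge v),\mathcal C]=\mathfrak R(u\wedge v)\times\mathcal C$ via Eq.(\ref{exercise}). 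You correctly identified the intended statement, Eq.(\ref{II3a}), and the second, $\mathcal S$-based route you sketch is unnecessary; the paper does not use it here.
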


\begin{equation}
\mathfrak{R}(u\wedge v)\times\mathcal{C}=[\mathbf{P}_{u},\mathbf{P}%
_{v}]\mathcal{C}. \label{II3a}%
\end{equation}

\begin{proof}
Using Eq.(\ref{ll1}) we have%
\begin{align}
\lbrack D_{\boldsymbol{u}},D_{\boldsymbol{v}}]\mathcal{C}  &  =\mathbf{P}%
([\mathring{D}_{\boldsymbol{u}},\mathring{D}_{\boldsymbol{v}}]\mathcal{C)}%
+[\mathbf{P}_{u},\mathbf{P}_{v}]\mathcal{C}\nonumber\\
&  =\mathbf{P}(\mathring{D}_{[\boldsymbol{u,v]}}\mathcal{C)}+[\mathbf{P}%
_{u},\mathbf{P}_{v}]\mathcal{C}\nonumber\\
&  =D_{[\boldsymbol{u,v]}}\mathcal{C}+[\mathbf{P}_{u},\mathbf{P}%
_{v}]\mathcal{C} \label{ll4}%
\end{align}
Thus we get that%
\begin{equation}
([D_{\boldsymbol{u}},D_{\boldsymbol{v}}]-D_{[\boldsymbol{u,v]}})\mathcal{C}%
=[\mathbf{P}_{u},\mathbf{P}_{v}]\mathcal{C} \label{ll5}%
\end{equation}
Recalling now Eq.(\ref{exercise}) we have%
\[
\mathfrak{R}(u\wedge v)\times\mathcal{C}=[\mathbf{P}_{u},\mathbf{P}%
_{v}]\mathcal{C}%
\]
and the lemma is proved
\end{proof}

\begin{lemma}
Let $\mathcal{C\in}\sec\mathcal{C}\ell(M,\mathtt{g})$ and $u,v\in\sec%
{\textstyle\bigwedge\nolimits^{1}}
T^{\ast}M$.$\hookrightarrow\sec\mathcal{C}\ell(M,\mathtt{g})$. Then,%

\begin{equation}
\mathfrak{R}(u\wedge v)\times\mathcal{C}=-\mathbf{P}(\mathcal{S}%
(u)\times\mathcal{S}(v))\times\mathcal{C} \label{ll6}%
\end{equation}

\end{lemma}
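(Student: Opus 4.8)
The plan is to recognize that this lemma is an immediate corollary of the gauge-independent identity Eq.(\ref{bi5}), namely $\mathfrak{R}(u\wedge v)=-\mathbf{P}(\mathcal{S}(u)\times\mathcal{S}(v))$, already established in Subsection 4.1. Both sides of Eq.(\ref{bi5}) are genuine biforms in $\sec\bigwedge\nolimits^{2}T^{\ast}M\hookrightarrow\sec\mathcal{C}\ell(M,\mathtt{g})$, and the commutator product $\mathcal{D}\mapsto\mathcal{D}\times\mathcal{C}$ defined in Eq.(\ref{pcomu1}) is linear in its left argument. Hence I can simply apply $(\,\cdot\,)\times\mathcal{C}$ to both sides of Eq.(\ref{bi5}) and read off the claim; no computation beyond this substitution is required.

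First I would state that by Eq.(\ref{bi5}) we have $\mathfrak{R}(u\wedge v)=-\mathbf{P}(\mathcal{S}(u)\times\mathcal{S}(v))$. Forming the commutator with an arbitrary $\mathcal{C}\in\sec\mathcal{C}\ell(M,\mathtt{g})$ and using linearity of $\times$ in its first slot then yields $\mathfrak{R}(u\wedge v)\times\mathcal{C}=-\mathbf{P}(\mathcal{S}(u)\times\mathcal{S}(v))\times\mathcal{C}$, which is exactly Eq.(\ref{ll6}). This closes the argument, and since Eq.(\ref{bi5}) is an intrinsic identity valid in every gauge, the one-line substitution is both rigorous and the cleanest route.

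For a self-contained alternative that does not lean on the special-gauge derivation of Eq.(\ref{bi5}), one may instead start from the previously proved identity Eq.(\ref{II3a}), $\mathfrak{R}(u\wedge v)\times\mathcal{C}=[\mathbf{P}_u,\mathbf{P}_v]\mathcal{C}$, and expand $[\mathbf{P}_u,\mathbf{P}_v]\mathcal{C}=\mathbf{P}_u\mathbf{P}_v(\mathcal{C})-\mathbf{P}_v\mathbf{P}_u(\mathcal{C})$ using Eq.(\ref{l1}), i.e. $\mathbf{P}_v(\mathcal{C})=\mathbf{P}(\mathcal{C})\times\mathcal{S}(v)-\mathbf{P}(\mathcal{C}\times\mathcal{S}(v))$. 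The expected obstacle on that route is purely bookkeeping: the substitution produces several mixed second-order terms, and one must repeatedly invoke $\mathbf{P}(\mathcal{S}(w))=0$ from Eq.(\ref{P9}) together with the Leibniz behaviour of $\mathbf{P}_u$ to cancel the pieces that do not reassemble into $-\mathbf{P}(\mathcal{S}(u)\times\mathcal{S}(v))\times\mathcal{C}$. I would therefore present the direct deduction from Eq.(\ref{bi5}) as the proof and, if desired, relegate the commutator-expansion verification to a remark.
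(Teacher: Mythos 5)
Your proof is correct and is essentially identical to the paper's: the authors likewise dispose of Eq.(\ref{ll6}) by noting that it "follows directly from Eq.(\ref{bi5})", i.e.\ by forming the commutator product of both sides of $\mathfrak{R}(u\wedge v)=-\mathbf{P}(\mathcal{S}(u)\times\mathcal{S}(v))$ with $\mathcal{C}$ and using linearity in the first slot. The alternative commutator-expansion route you sketch is not pursued in the paper and is unnecessary here.
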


\begin{proof}
This follows directly from Eq.(\ref{bi5})\emph{.}
\end{proof}

\begin{remark}
We shall now evaluate directly the first member of \emph{Eq.(\ref{ll8}) to get
Eq.(\ref{oi}) which when compared with Eq.(\ref{bi5}) will furnish identities
given by Eq. (\ref{ll9}).}%
\begin{equation}
([D_{\boldsymbol{u}},D_{\boldsymbol{v}}]-D_{[\boldsymbol{u,v]}})\mathcal{C}%
=\mathfrak{R}(u\wedge v)\times\mathcal{C}. \label{ll8}%
\end{equation}
Given the linearity of $\mathfrak{R}(u\wedge v)$ we calculate the first member
of \emph{Eq.(\ref{ll8})} for the case $\boldsymbol{u}=\mathbf{x}_{\mathbf{i}}%
$, $\boldsymbol{v}$ $=\mathbf{x}_{\mathbf{j}}$. Taking into account that
$D_{\boldsymbol{u}}\mathcal{C}=\mathring{D}_{\boldsymbol{u}}\mathcal{C+S}%
(u)\times C$ we get with calculations analogous to the ones in
\emph{Eq.(\ref{BL3})} that
\end{remark}

\begin{equation}
\lbrack D_{\mathbf{x}_{\mathbf{i}}},D_{\mathbf{x}_{\mathbf{j}}}]\mathcal{C}%
=-\mathcal{S}(\vartheta_{\mathbf{i}})\times\mathcal{S}(\vartheta_{\mathbf{j}%
})\times\mathcal{C}%
\end{equation}
and taking into account that it is $[\mathbf{x}_{\mathbf{i}},\mathbf{x}%
_{\mathbf{j}}]=0$ we can write the last equation as%
\begin{equation}
([D_{\mathbf{x}_{\mathbf{i}}},D_{\mathbf{x}_{\mathbf{j}}}]-D_{[\mathbf{x}%
_{\mathbf{i}},\mathbf{x}_{\mathbf{j}}]})\mathcal{C}=\mathfrak{R}%
(\vartheta_{\mathbf{i}}\wedge\vartheta_{\mathbf{j}})\times\mathcal{C=}%
-\mathcal{S}(\vartheta_{\mathbf{i}})\times\mathcal{S}(\vartheta_{\mathbf{j}%
})\times\mathcal{C}%
\end{equation}
and so it follows that
\begin{equation}
\mathfrak{R}(u\wedge v)\times\mathcal{C=}-\mathcal{S}(u)\times\mathcal{S}%
(v)\times\mathcal{C}. \label{oi}%
\end{equation}
Of course, we must have $\mathbf{P}(\mathcal{S}(u)\times\mathcal{S}%
(v)\times\mathcal{C})$ $\in\sec$ $\mathcal{C}\ell(M,\mathtt{g})$. Since
$\mathcal{S}(u)\times\mathcal{S}(v)=\mathbf{P}(\mathcal{S}(u)\times
\mathcal{S}(v))+\mathbf{P}_{\perp}(\mathcal{S}(u)\times\mathcal{S}(v))$ and we
already know that $\mathfrak{R}(u\wedge v)=-\mathbf{P}(\mathcal{S}%
(u)\times\mathcal{S}(v))$\ it follows that $\mathbf{P}_{\perp}(\mathcal{S}%
(u)\times\mathcal{S}(v))=0$ and moreover we get that
\begin{equation}
\mathbf{P}(\mathcal{S}(u)\times\mathcal{S}(v)\times\mathcal{C})=\mathbf{P}%
(\mathcal{S}(u)\times\mathcal{S}(v))\times\mathcal{C}=\mathbf{P}%
(\mathcal{S}(u)\times\mathcal{S}(v))\times\mathbf{P}(\mathcal{C}). \label{ll9}%
\end{equation}

\begin{lemma}
Let $\mathcal{C\in}\sec$ $\mathcal{C}\ell(M,\mathtt{g})$ and $u,v\in\sec%
{\textstyle\bigwedge\nolimits^{1}}
T^{\ast}M\hookrightarrow\mathcal{\in}\sec$ $\mathcal{C}\ell(M,\mathtt{g})$%
\begin{equation}
\mathfrak{R}(u\wedge v)=\mathbf{P}_{v}(\mathcal{S}(u)) \label{II10}%
\end{equation}
\ 
\end{lemma}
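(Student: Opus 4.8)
The plan is to read off the result directly from the commutator identity of Eq.(\ref{l1}) by specializing its argument to $\mathcal{C}=\mathcal{S}(u)$, and then to discard one term and recognize the other via the two facts already in hand: the vanishing $\mathbf{P}(\mathcal{S}(u))=0$ of Eq.(\ref{P9}) and the curvature formula $\mathfrak{R}(u\wedge v)=-\mathbf{P}(\mathcal{S}(u)\times\mathcal{S}(v))$ of Eq.(\ref{bi5}). No new geometry is needed; everything has been front-loaded into those earlier results.

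First I would observe that $\mathcal{S}(u)\in\sec\bigwedge^{2}T^{\ast}\mathring{M}\hookrightarrow\sec\mathcal{C}\ell(\mathring{M},\mathtt{\mathring{g}})$, so it is a legitimate section of the ambient Clifford bundle and therefore an admissible argument for the operator $\mathbf{P}_{v}$ and for the identity of Eq.(\ref{l1}). Substituting $\mathcal{C}=\mathcal{S}(u)$ there gives
\[
\mathbf{P}_{v}(\mathcal{S}(u))=\mathbf{P}(\mathcal{S}(u))\times\mathcal{S}(v)-\mathbf{P}(\mathcal{S}(u)\times\mathcal{S}(v)).
\]

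Next I would invoke Eq.(\ref{P9}), which states $\mathbf{P}(\mathcal{S}(u))=0$: since $\mathcal{S}(u)\lrcorner I_{m}=0$ (Eq.(\ref{p8})), Remark \ref{outside} places all components of $\mathcal{S}(u)$ outside $\mathcal{C}\ell(M,\mathtt{g})$. Hence the first term $\mathbf{P}(\mathcal{S}(u))\times\mathcal{S}(v)$ vanishes identically, leaving
\[
\mathbf{P}_{v}(\mathcal{S}(u))=-\mathbf{P}(\mathcal{S}(u)\times\mathcal{S}(v)).
\]
Comparing this with Eq.(\ref{bi5}), the right-hand side is exactly $\mathfrak{R}(u\wedge v)$, and the lemma follows.

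I do not anticipate a genuine obstacle: the substantive work already resides in the derivation of the product rule Eq.(\ref{l1}) for $\mathbf{P}_{v}$ and in the identification Eq.(\ref{bi5}) of the curvature biform with the shape biform. The single point requiring care is the bookkeeping of which objects live in $\mathcal{C}\ell(\mathring{M},\mathtt{\mathring{g}})$ versus $\mathcal{C}\ell(M,\mathtt{g})$, so that both the applicability of Eq.(\ref{l1}) to $\mathcal{C}=\mathcal{S}(u)$ and the vanishing $\mathbf{P}(\mathcal{S}(u))=0$ are properly justified.
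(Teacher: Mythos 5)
Your proof is correct and is essentially identical to the paper's own argument: the authors likewise set $\mathcal{C}=\mathcal{S}(u)$ in Eq.(\ref{l4}) (the identity you call Eq.(\ref{l1})), use $\mathbf{P}(\mathcal{S}(u))=0$ from Eq.(\ref{P9}) to drop the first term, and identify the remainder with $\mathfrak{R}(u\wedge v)$ via Eq.(\ref{bi5}). Your added remark that Eq.(\ref{l1}) is applicable because $\mathcal{S}(u)$ is a section of the ambient bundle $\mathcal{C}\ell(\mathring{M},\mathtt{\mathring{g}})$ is a worthwhile point of care that the paper leaves implicit.
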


\begin{proof}
Taking $\mathcal{C}=\mathcal{S(}u\mathcal{)}$ in Eq.(\ref{l4})\emph{ }and
recalling\emph{ }Eq.(\ref{P9})\emph{ }$\mathbf{P(}\mathcal{S(}u))=0.$we get%
\begin{equation}
\mathbf{P}_{v}(\mathcal{S(}u))=-\mathbf{P}(\mathcal{S(}u)\times\mathcal{S(}%
v\mathcal{))} \label{ll11}%
\end{equation}
which proves the lemma.
\end{proof}

\begin{remark}
From \emph{Eq.(\ref{ll11})}\ we immediately have
\begin{equation}
\mathbf{P}_{u}(\mathcal{S(}v))=-\mathbf{P}(\mathcal{S(}v)\times\mathcal{S(}%
u\mathcal{))}=\mathbf{P}(\mathcal{S(}u)\times\mathcal{S(}v\mathcal{))}%
=-\mathbf{P}_{v}(\mathcal{S(}u))=-\mathbf{P}_{v}(\mathbf{S}\mathcal{(}u))
\label{ll12}%
\end{equation}
where the last term follows from the fact that $\mathbf{S}\mathcal{(}%
u)=\mathcal{S(}u)$.
\end{remark}

\begin{proof}
(of Proposition\emph{ }\ref{shapericci}) We know that $\mathcal{R(}%
v\mathcal{)}=\partial_{u}\mathfrak{R}(u\wedge v)$. Thus using Eq.(\ref{ll12})
and recalling Eq.(\ref{S1}) we can write
\begin{align}
\mathcal{R(}v\mathcal{)}  &  =\partial_{u}\mathbf{P}_{v}(\mathcal{S(}%
u))=-\partial_{u}\mathbf{P}_{u}(\mathcal{S(}v))\label{ll12a}\\
&  =-\mathfrak{\mathring{d}}\mathbf{P~}(\mathbf{S}\mathcal{(}v))=-\mathbf{S}%
(\mathbf{S}\mathcal{(}v))=-\mathbf{S}^{2}(v).\nonumber
\end{align}
Since we already showed that $\mathcal{R(}v\mathcal{)}=\boldsymbol{\partial
}\wedge\boldsymbol{\partial~}(v)$ we get
\[
\boldsymbol{\partial}\wedge\boldsymbol{\partial~}(v)=-\mathbf{S}^{2}(v)
\]
and the proposition is proved.
\end{proof}

\begin{remark}
Note that whereas $\mathbf{S}\mathcal{(}v)$ is a section of $\mathcal{C\ell
(}\mathring{M},\mathtt{\mathring{g}}\mathcal{)}$, $\mathbf{S}^{2}%
\mathcal{(}v)\in\sec%
{\textstyle\bigwedge\nolimits^{1}}
T^{\ast}M\hookrightarrow\sec\mathcal{C}\ell(M,\mathtt{g})$
\end{remark}

\begin{proposition}
\label{procommp}Let $u,v,w\in\sec%
{\textstyle\bigwedge\nolimits^{1}}
T^{\ast}M\hookrightarrow\sec\mathcal{C}\ell(M,\mathtt{g})$. Then,%

\begin{equation}
\mathfrak{R}(u\wedge v)=\frac{1}{2}\partial_{w}\wedge\lbrack\mathbf{P}%
_{v},\mathbf{P}_{u}](w). \label{II16K}%
\end{equation}

\end{proposition}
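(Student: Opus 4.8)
The plan is to reduce the claim entirely to the commutator identity in Eq.(\ref{II3a}), which already encodes the curvature biform in terms of the operators $\mathbf{P}_u,\mathbf{P}_v$, and then to recover the full $2$-form $\mathfrak{R}(u\wedge v)$ from its $1$-form contraction $w\lrcorner\mathfrak{R}(u\wedge v)$ by applying $\frac{1}{2}\partial_w\wedge$ and invoking a grade-counting identity. No new geometric input is needed; every ingredient is already in hand.

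First I would specialize Eq.(\ref{II3a}) to the case $\mathcal{C}=w\in\sec\bigwedge\nolimits^{1}T^{\ast}M$, which gives $\mathfrak{R}(u\wedge v)\times w=[\mathbf{P}_u,\mathbf{P}_v](w)$, and hence by antisymmetry of the commutator
\[
[\mathbf{P}_v,\mathbf{P}_u](w)=-\,\mathfrak{R}(u\wedge v)\times w.
\]
Since $\mathfrak{R}(u\wedge v)$ is a biform and $w$ a $1$-form, Eq.(\ref{pcomu2}) lets me rewrite $\mathfrak{R}(u\wedge v)\times w=-\,w\lrcorner\mathfrak{R}(u\wedge v)$, so that the two sign reversals cancel and
\[
[\mathbf{P}_v,\mathbf{P}_u](w)=w\lrcorner\mathfrak{R}(u\wedge v).
\]

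Next I would apply $\tfrac{1}{2}\partial_w\wedge$ to both sides. On the left this is exactly the right-hand side of the proposition. On the right, the map $w\mapsto w\lrcorner\mathfrak{R}(u\wedge v)$ is a linear extensor field in $w$ (with $u,v$ held fixed), so the derivative rule Eq.(\ref{du1}) applies directly and yields $\partial_w\wedge\bigl(w\lrcorner\mathfrak{R}(u\wedge v)\bigr)=\theta^{\mathbf{k}}\wedge\bigl(\theta_{\mathbf{k}}\lrcorner\mathfrak{R}(u\wedge v)\bigr)$. I would then invoke the standard grade-counting identity $\theta^{\mathbf{k}}\wedge(\theta_{\mathbf{k}}\lrcorner A_r)=r\,A_r$; for the $2$-form $\mathfrak{R}(u\wedge v)$ this produces $2\,\mathfrak{R}(u\wedge v)$, and the prefactor $\tfrac{1}{2}$ restores the stated equality.

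The step needing the most care is the sign and commutator bookkeeping: Eq.(\ref{II3a}) carries $[\mathbf{P}_u,\mathbf{P}_v]$ whereas the proposition uses the reversed $[\mathbf{P}_v,\mathbf{P}_u]$, and Eq.(\ref{pcomu2}) contributes a second sign through $B\times w=-w\lrcorner B$; one must check that these two reversals cancel so the contraction enters with a plus sign, which is precisely what makes the grade-counting identity close the argument. A secondary point worth stating explicitly is that $\partial_w$ differentiates with respect to the argument $w$ only, so that the object being differentiated is genuinely linear in $w$ and Eq.(\ref{du1}) is legitimately applicable; beyond this the computation is immediate.
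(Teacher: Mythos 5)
Your proposal is correct and follows essentially the same route as the paper's first proof: specialize Eq.(\ref{II3a}) to $\mathcal{C}=w$, convert $\mathfrak{R}(u\wedge v)\times w$ to $-w\lrcorner\mathfrak{R}(u\wedge v)$ via Eq.(\ref{pcomu2}), and recover the biform by applying $\tfrac{1}{2}\partial_w\wedge$. The only cosmetic difference is that you invoke the grade-counting identity $\theta^{\mathbf{k}}\wedge(\theta_{\mathbf{k}}\lrcorner A_r)=rA_r$ abstractly where the paper verifies the same fact by expanding $\mathfrak{R}(u\wedge v)$ in components.
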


\begin{proof}
From Eq.(\ref{II3a}) with $\mathcal{C}=w\in\sec%
{\textstyle\bigwedge\nolimits^{1}}
T^{\ast}M\hookrightarrow\sec\mathcal{C}\ell(M,\mathtt{g})$ we have%

\begin{equation}
\mathfrak{R}(u\wedge v)\times w=[\mathbf{P}_{u},\mathbf{P}_{v}](w).
\label{II16A}%
\end{equation}
Now, the first member of Eq.(\ref{II16A}) is%
\[
\mathfrak{R}(u\wedge v)\times w=-w\lrcorner\mathfrak{R}(u\wedge v)
\]
Now, writing $\mathfrak{R}(u\wedge v)=\frac{1}{2}u^{\mathbf{i}}%
vR_{\mathbf{ij\cdot\cdot}}^{\cdot\cdot\mathbf{kl}}\theta_{\mathbf{k}}%
\wedge\theta_{\mathbf{l}}$ we have
\begin{align}
\partial_{w}\wedge(\mathfrak{R}(u\wedge v)\times w)  &  =-\theta^{\mathbf{r}%
}\wedge(\theta_{\mathbf{r}}\lrcorner\mathfrak{R}(u\wedge v))\nonumber\\
-\frac{1}{2}u^{\mathbf{i}}v^{\mathbf{j}}\theta_{\mathbf{r}}\wedge
(\theta^{\mathbf{r}}\lrcorner R_{\mathbf{ij}}^{~~\mathbf{kl}}\theta
_{\mathbf{k}}\wedge\theta_{\mathbf{l}})  &  =-2\mathfrak{R}(u\wedge v).
\label{II16B}%
\end{align}
Taking into account Eq.(\ref{II16A}) and Eq.(\ref{II16B}) the proof follows.

We can also prove the proposition as follows: directly from Eq.(\ref{ll5}) we
can write%
\begin{equation}
\lbrack\mathbf{P}_{u},\mathbf{P}_{v}](w)=u^{\mathbf{k}}v^{\mathbf{l}%
}([D_{\boldsymbol{e}_{\mathbf{k}}},D_{\boldsymbol{e}_{\mathbf{l}}%
}]-D_{[\boldsymbol{e}_{\mathbf{k}}\boldsymbol{,e}_{\mathbf{l}}\boldsymbol{]}%
})w=u^{\mathbf{k}}v^{\mathbf{l}}w^{\mathbf{j}}R_{\cdot\mathbf{jkl}%
}^{\mathbf{i\cdot\cdot\cdot}}\theta_{\mathbf{i}}. \label{ll20}%
\end{equation}
Thus
\begin{align}
\frac{1}{2}\partial_{w}\wedge\lbrack\mathbf{P}_{u},\mathbf{P}_{v}](w)  &
=\frac{1}{2}\theta^{m}\frac{\partial}{\partial w^{m}}\wedge(u^{\mathbf{k}%
}v^{\mathbf{l}}w^{\mathbf{j}}R_{\cdot\mathbf{jkl}}^{\mathbf{i\cdot\cdot\cdot}%
}\theta_{\mathbf{i}})\nonumber\\
&  =\frac{1}{2}u^{\mathbf{k}}v^{\mathbf{l}}R_{\mathbf{imkl}}\theta
^{\mathbf{m}}\wedge\theta^{\mathbf{i}}\nonumber\\
&  =-u^{\mathbf{k}}v^{\mathbf{l}}\mathcal{R}_{\mathbf{kl}}=-\mathfrak{R}%
(u\wedge v) \label{ll21}%
\end{align}
and the proof is complete.
\end{proof}

\begin{proposition}
\label{prodif}Let $u,v,w\in\sec%
{\textstyle\bigwedge\nolimits^{1}}
T^{\ast}M\hookrightarrow\sec\mathcal{C}\ell(M,\mathtt{g})$. Then,%

\begin{equation}
\mathfrak{R}(u\wedge v)=\partial_{w}\wedge\mathbf{P}_{v}\mathbf{P}_{u}(w).
\label{np1}%
\end{equation}

\end{proposition}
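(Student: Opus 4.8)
The plan is to get Eq.(\ref{np1}) directly from Eq.(\ref{II10}), which already supplies $\mathfrak{R}(u\wedge v)=\mathbf{P}_{v}(\mathcal{S}(u))$, rather than from the commutator form of Proposition \ref{procommp}. First I would rewrite the shape biform as a derivative of the projector: since $\mathbf{S}(u)=\mathcal{S}(u)$ by Eq.(\ref{S2}) and, by Proposition \ref{proS} (Eq.(\ref{A4})), $\mathbf{S}(u)=\partial_{w}\boldsymbol{\wedge}\mathbf{P}_{w}(u)=\theta^{\mathbf{k}}\wedge\mathbf{P}_{\theta_{\mathbf{k}}}(u)$, I obtain
\[
\mathcal{S}(u)=\theta^{\mathbf{k}}\wedge\mathbf{P}_{\theta_{\mathbf{k}}}(u),
\]
a sum of $2$-forms each of which is a tangential $1$-form $\theta^{\mathbf{k}}$ (so $\mathbf{P}(\theta^{\mathbf{k}})=\theta^{\mathbf{k}}$) wedged with the purely normal $1$-form $\mathbf{P}_{\theta_{\mathbf{k}}}(u)$ (so $\mathbf{P}(\mathbf{P}_{\theta_{\mathbf{k}}}(u))=0$, by Eq.(\ref{P9})).

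Next I would apply $\mathbf{P}_{v}$ to this expression using the projected Leibniz rule behind Eq.(\ref{g3}), namely $\mathbf{P}_{v}(\mathcal{A}\wedge\mathcal{B})=\mathbf{P}_{v}(\mathcal{A})\wedge\mathbf{P}(\mathcal{B})+\mathbf{P}(\mathcal{A})\wedge\mathbf{P}_{v}(\mathcal{B})$, with projectors sitting on the undifferentiated factors. Taking $\mathcal{A}=\theta^{\mathbf{k}}$ and $\mathcal{B}=\mathbf{P}_{\theta_{\mathbf{k}}}(u)$, the term in which $\mathbf{P}_{v}$ hits the tangential factor is annihilated because it is wedged with $\mathbf{P}(\mathbf{P}_{\theta_{\mathbf{k}}}(u))=0$, and only
\[
\mathbf{P}_{v}(\mathcal{S}(u))=\theta^{\mathbf{k}}\wedge\mathbf{P}_{v}\mathbf{P}_{\theta_{\mathbf{k}}}(u)=\partial_{w}\wedge\mathbf{P}_{v}\mathbf{P}_{w}(u)
\]
survives. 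Finally I would invoke the symmetry of the second fundamental form: for $1$-forms one has $\mathbf{P}_{w}(u)=\mathbf{P}_{\perp}(w\cdot\mathfrak{\mathring{d}}u)=\mathbf{P}_{\perp}(u\cdot\mathfrak{\mathring{d}}w)=\mathbf{P}_{u}(w)$ by Eq.(\ref{O2}), whence $\partial_{w}\wedge\mathbf{P}_{v}\mathbf{P}_{w}(u)=\partial_{w}\wedge\mathbf{P}_{v}\mathbf{P}_{u}(w)$, which combined with Eq.(\ref{II10}) yields Eq.(\ref{np1}).

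I expect the only delicate point to be the bookkeeping in the Leibniz step. A naive rule of the form $\mathbf{P}_{v}(\mathcal{A}\wedge\mathcal{B})=\mathbf{P}_{v}(\mathcal{A})\wedge\mathcal{B}+\mathcal{A}\wedge\mathbf{P}_{v}(\mathcal{B})$ would produce the spurious extra term $\mathbf{P}_{v}(\theta^{\mathbf{k}})\wedge\mathbf{P}_{\theta_{\mathbf{k}}}(u)$, a purely doubly-normal $2$-form (the normal-curvature term $\sim[\Pi^{s},\Pi^{t}]$ of the Ricci equation) which is generically nonzero; it is precisely the projector $\mathbf{P}(\mathbf{P}_{\theta_{\mathbf{k}}}(u))=0$ in Eq.(\ref{g3}) that removes it, consistent with the fact that $\mathbf{P}_{v}(\mathcal{S}(u))=-\mathbf{P}(\mathring{D}_{\boldsymbol{v}}\mathcal{S}(u))$ is tangential. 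As a cross-check, one may instead start from Proposition \ref{procommp}; the statement Eq.(\ref{np1}) is then equivalent to the ``anticommutator'' identity $\partial_{w}\wedge(\mathbf{P}_{v}\mathbf{P}_{u}+\mathbf{P}_{u}\mathbf{P}_{v})(w)=0$, which in turn follows from the same symmetry $\mathbf{P}_{u}(w)=\mathbf{P}_{w}(u)$ together with the antisymmetry $\mathfrak{R}(u\wedge v)=-\mathfrak{R}(v\wedge u)$.
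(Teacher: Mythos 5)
Your proposal is correct and follows essentially the same route as the paper's proof: both start from Eq.(\ref{II10}), write $\mathcal{S}(u)=\partial_{w}\wedge\mathbf{P}_{w}(u)$ via Eq.(\ref{A4}), use the symmetry $\mathbf{P}_{w}(u)=\mathbf{P}_{u}(w)$ of Eq.(\ref{ap4}), and pass $\mathbf{P}_{v}$ through the tangential factor $\partial_{w}$ because $\mathbf{P}_{u}(w)$ is purely normal. Your explicit Leibniz-rule justification of that last step (with projectors on the undifferentiated factors, which kills the would-be doubly-normal term) is precisely the content of the Appendix identity Eq.(\ref{ap7}) that the paper cites, so the difference is only one of presentation.
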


\begin{proof}
Recall that we proved (Eq.(\ref{II10})) that \ $\mathfrak{R}(u\wedge
v)=\mathbf{P}_{v}(\mathcal{S}(u))$. Also Eq.(\ref{A4}) says that
$\mathbf{S}(u)=\mathcal{S}(u)=\partial_{w}\wedge\mathbf{P}_{w}(u)$ for any
$u,w\in\sec%
{\textstyle\bigwedge\nolimits^{1}}
T^{\ast}M\hookrightarrow\sec\mathcal{C}\ell(M,\mathtt{g})$. Now from
Eq.(\ref{ap4}) we have
\begin{equation}
\mathbf{P}_{w}(u)=\mathbf{P}_{u}(w)=u\cdot\mathfrak{\mathring{d}}%
\mathbf{P~(}w\mathbf{)}=\mathring{D}\mathbf{_{u}P}~(w)=\mathring
{D}\mathbf{_{u}(P(}w))-\mathbf{P}(\mathring{D}\mathbf{_{u}}w)=(\mathring
{D}\mathbf{_{u}}w)_{\perp} \label{np2}%
\end{equation}
which means that
\begin{equation}
\mathbf{P}_{w}(u)=(\mathbf{P}_{w}(u))_{\perp}. \label{np3}%
\end{equation}
Then, we have that
\begin{align*}
\mathfrak{R}(u\wedge v)  &  =\mathbf{P}_{v}(\mathcal{S}(u))=\mathbf{P}%
_{v}(\partial_{w}\wedge\mathbf{P}_{w}(u))\\
&  =\mathbf{P}_{v}(\partial_{w}\wedge\mathbf{P}_{u}(w))=\mathbf{P}%
_{v}((\partial_{w})_{\parallel}\wedge(\mathbf{P}_{u}(w))_{\perp})\\
&  \overset{\text{Eq}.(\ref{ap7})}{=}\partial_{w}\wedge\mathbf{P}%
_{v}\mathbf{P}_{u}(w))_{\perp}%
\end{align*}
and the proof is complete.
\end{proof}

\begin{remark}
Since $\mathfrak{R}(u\wedge v)=-\mathfrak{R}(v\wedge u)$, \emph{Eq.(\ref{np1}%
)}\ implies that $u,v,w\in\sec%
{\textstyle\bigwedge\nolimits^{1}}
T^{\ast}M\hookrightarrow\sec\mathcal{C}\ell(M,\mathtt{g})$%
\begin{equation}
\partial_{w}\wedge\mathbf{P}_{v}\mathbf{P}_{u}(w)=-\partial_{w}\wedge
\mathbf{P}_{u}\mathbf{P}_{v}(w), \label{np4}%
\end{equation}
thus exhibiting the consistency of \emph{Eq.(\ref{np1})} with
\emph{Eq.(\ref{II16K}).}
\end{remark}

\section{On Clifford's Little Hills}

One could think that the fact that $\boldsymbol{\partial}\wedge
\boldsymbol{\partial~}(v)=\mathcal{R}(v)=-\mathbf{S}^{2}(v)$ when applied to
General Relativity coupled with brane theory permits to give a mathematical
formalization to Clifford's intuition\footnote{Taking into account, of course,
that differently from Clifford's idea,\ instead of a space theory of matter,
we must talk about a spacetime theory of matter.} presented in \cite{clifford}%
, namely that:

\begin{quote}
(1) That small portions of space are in fact of a nature analogous to little
hills on a surface which is on the average flat; namely, that the ordinary
laws of geometry are not valid in them.

(2) That this property of being curved or distorted is continually being
passed on from one portion of space to another after the manner of a wave.

(3) That this variation of the curvature of space is what really happens in
that phenomenon which we call the motion of matter, whether ponderable or ethereal.

(4) That in the physical world nothing else takes place but this variation,
subject (possibly) to the law of continuity.
\end{quote}

Let us see how to proceed. Let\footnote{The symbol $\uparrow$ means that the
Lorentzian manifold $(M,\boldsymbol{g})$ is time orientable. Details in
\cite{rodcap2007}.} $(M,\boldsymbol{g},D,\tau_{g},\uparrow)$ be a model of a
gravitational field generated by an energy momentum tensor $T^{\mathbf{a}%
}:=T_{\mathbf{b}}^{\mathbf{a}}\theta^{\mathbf{a}}\otimes\theta^{\mathbf{b}}$
describing all matter of the universe according to General Relativity theory.
As well known Einstein equation can be written as%
\begin{equation}
\boldsymbol{\partial\wedge}\partial~\theta^{\mathbf{a}}=-\mathcal{T}%
^{\mathbf{a}}+\frac{1}{2}\mathcal{T}\theta^{\mathbf{a}}, \label{e1}%
\end{equation}
where $\mathcal{T}^{\mathbf{a}}:=T_{\mathbf{b}}^{\mathbf{a}}\theta
^{\mathbf{b}}$ and $\mathcal{T}:=T_{\mathbf{a}}^{\mathbf{a}}$, with
$T_{\mathbf{b}}^{\mathbf{a}}$. If we suppose that the structure
$(M,\boldsymbol{g})$ is a submanifold of $(\mathring{M}\simeq\mathbb{R}%
^{n},\boldsymbol{\mathring{g}})$ for $n$ large enough as discussed in the
beginning of Section 3 we can write Eq.(\ref{e1}) taking into account
Eq.(\ref{bi6}) as
\begin{equation}
\mathbf{S}^{2}(\theta^{\mathbf{a}})=\mathcal{T}^{\mathbf{a}}-\frac{1}%
{2}\mathcal{T}\theta^{\mathbf{a}}. \label{e2}%
\end{equation}
So, in a region where there is no matter $\mathbf{S}^{2}(\theta^{\mathbf{a}%
})=0$, despite the fact that $\mathbf{S}(\theta^{\mathbf{a}})=\mathcal{S}%
(\theta^{\mathbf{a}})$ may be \emph{non} null. So, \ a being living in the
hyperspace $\mathbb{R}^{n}$ and looking at our brane world will see the little
hills (i.e., \textquotedblleft matter\textquotedblright) \ are special shapes
in $M$, places where the $\mathbf{S}^{2}(\theta^{\mathbf{a}})\neq0$ which act
as sources for $\mathbf{P}(\mathfrak{\mathring{d}\lrcorner}\mathcal{S}%
(\theta^{\mathbf{a}}))$ since $\mathbf{P}(\mathfrak{\mathring{d}\lrcorner
}\mathcal{S}(\theta^{\mathbf{a}}))=-\mathbf{S}^{2}(\theta^{\mathbf{a}})$.

\begin{remark}
To properly appreciate the above argument one must take in mind that the shape
extensor\ depends for its definition on the metric $\boldsymbol{\mathring{g}}$
and the Levi-Civita $\mathring{D}$ connection\ of\ $\boldsymbol{\mathring{g}}$
used in $\mathring{M}$. So, a different choice of metric in $\mathring{M}$
will imply in Clifford's little hills to be represented by different shape
extensors. Despite this fact, it seems to us that shape is most appealing than
the curvature biform\footnote{Recall that $\mathfrak{R}$ is in general non
null even in vacuum.} or the Ricci $1$-form fields $\mathcal{R}^{\mathbf{a}%
}=\boldsymbol{\partial\wedge}\partial~\theta^{\mathbf{a}}$ as indicator of the
presence of matter as distortions in the world brane\ $M$. Indeed, inner
observers living in $M$ in general may not have enough skills and technology
to discover the topology \ of $M$ and\ so cannot know if their brane world is
a bended surface in the hyperspace \emph{(}i.e., $\mathring{M}$\emph{) }or
even if a open set\emph{ }$U\subset M$ is a part of an hyperplane or
not\emph{. }Moreover, those inner observers\ that have learned a little bit of
differential geometry know that they cannot say that their manifold is curved
based on the fact that the curvature biform is non null, for they know that
the curvature biform is a property of the connection \emph{(}parallelism
rule\emph{) that }they decide to use by convention in $M$ and not an intrinsic
property of $M$. They know that if they choose a different connection it may
happen that its curvature biform may be null and their connection (not their
manifold) may have torsion and even a non null nonmetricity
tensor\footnote{Details about these possibilities are discussed in
\cite{fr2010} where a theory of the gravitational field on a brane
diffeomorphic to $R^{4}$ is discussed.} So, with their knowledge of
differential geometry they infer that little hills \emph{(}as seems for beings
living in $\mathring{M}$\emph{)} can only be associated to the shape extensor
if they use Levi-Civita connection of $g$ in $M$.
\end{remark}

\section{A Maxwell Like Equation for a Brane World with a Killing Vector
Field}

When $(M,\boldsymbol{g})$ admits a Killing vector field $\boldsymbol{A}\in\sec
TM$ then it follows \cite{rod2010} that $\delta A=0$, where $A=\boldsymbol{g}%
(\boldsymbol{A,~})\in\sec%
{\textstyle\bigwedge\nolimits^{1}}
T^{\ast}M\hookrightarrow\sec\mathcal{C\ell(}M,\mathtt{g}\mathcal{)}$. In this
case we can show that the Ricci operator applied to $A$ is equal to the
covariant D'Alembertian operator applied to $A$, i.e.,
\begin{equation}
\boldsymbol{\partial\wedge\partial}~A=\boldsymbol{\partial\cdot\partial}~A
\label{E4}%
\end{equation}
Now, recalling Eq.(\ref{3}) that the square of the Dirac operator
$\boldsymbol{\partial}^{2}$ can be decomposed in two ways, i.e.,
\begin{equation}
\boldsymbol{\partial\wedge\partial}~A+\boldsymbol{\partial\cdot\partial
}~A=\boldsymbol{\partial}^{2}A=-d\delta A-\delta dA \label{e5}%
\end{equation}
we have writing $F=dA$ and taking into account that $\delta A=0$ that Einstein
equation can be written as
\begin{equation}
\delta F=2\mathbf{S}^{2}(A) \label{E6}%
\end{equation}
and since $dF=ddA=0$ we can write Einstein equation as:
\begin{equation}
\boldsymbol{\partial}F=-2\mathbf{S}^{2}(A). \label{e7}%
\end{equation}

Eq.(\ref{e7}) shows that in a Lorentzian brane $M$ of dim $4$ which contains a
Killing vector field $\boldsymbol{A}$, Einstein equation is encoded in an
\textquotedblleft electromagnetic like field\textquotedblright\ $F$ having as
source a current $J=2\mathbf{S}^{2}(A)\in\sec\mathcal{C\ell(}M,\mathtt{g}%
\mathcal{)}$.

\section{Conclusions}

In this paper, we gave a thoughtful presentation of the geometry of vector
manifolds using the Clifford bundle formalism, hopping to provide a useful
text for people (who know the Cartan theory of differential forms)
\footnote{This includes people, we think, interested in string and brane
theories and General Relativity.} and are interested in the differential
geometry of submanifolds $M$ (of dimension $m$ equipped with a
metric\footnote{$\boldsymbol{i}:M\rightarrow\mathring{M}$ is the inclusion
map.} $\boldsymbol{g=i}^{\ast}\boldsymbol{\mathring{g}}$ of signature $(p,q)$
and its Levi-Civita connection $D$) of a manifold $\mathring{M}\simeq
\mathbb{R}^{n}$ (of dimension $n$ and equipped with a metric
$\boldsymbol{\mathring{g}}$ of signature $(\mathring{p},\mathring{q})$ and its
Levi-Civita connection $\mathring{D}$). We proved in details several
equivalent expressions for the curvature biforms $\mathfrak{R}(u\wedge v)$ and
moreover proved that the Ricci operator $\boldsymbol{\partial}\wedge
\boldsymbol{\partial}$ when applied to a $1$-form field $v$ is such that
$\boldsymbol{\partial}\wedge\boldsymbol{\partial~}(v)=\mathcal{R}%
(v)=-\mathbf{S}^{2}(v)$ ($\mathcal{R}(v)=R_{\mathbf{b}}^{\mathbf{a}}%
\theta_{\mathbf{b}}$) is the negative of the square of the shape operator
$\mathbf{S}$.\ We showed in Section 5 that when this result is applied to
General Relativity it permits to give a mathematical realization to Clifford's
theory of matter. Moreover in Section 6 we show that in a Lorentzian brane
containing a Killing vector field Einstein equation can be encoded in a
Maxwell like equation whose source is a current given by $J=2\mathbf{S}%
^{2}(A).$

To end we observe that although some (but not all) of the results in this
paper appear in \cite{hs1984,h1986,sobczyk}, our methodology and many proofs
differs considerably. We use of the Clifford bundle of differential forms
$\mathcal{C\ell(}M,\mathtt{g}\mathcal{)}$) and give detailed and (we hope)
intelligible proofs of all formulas, clarifying some important issues,
presenting, e.g., the precise relation between the shape biform $S$ evaluate
at $v$ (a $1$-form field) and the connection extensor $\omega$ evaluated at
$v$ (Eq.(\ref{p21})). In particular, our approach also generalizes for a
general Riemann-Cartan connection the results in \cite{h1986} which are valid
only for the Levi-Civita connection $D$ of a Lorentzian metric of signature
$(1,3)$. Moreover our approach makes rigorous the results in \cite{h1986}
which are valid only for 4-dimensional Lorentzian spacetimes admiting spinor
structures\footnote{See \cite{geroch} for details.
\par
{}}, since in \cite{h1986} it is postulated that the frame bundle of $M$ has a
global section (there called a fiducial frame).

\appendix{}

\section{Some Identities Involving $\mathbf{P}$ and $\mathbf{P}_{u}$}

The projection operator $\mathbf{P}$ has been defined by Eq.(\ref{proj}) and
its covariant derivative $\mathbf{P}_{u}:=u\cdot\mathfrak{\mathring{d}%
}\mathbf{P}$ has been defined by Eq.(\ref{A1a}). Let \ $\mathcal{C}%
,\mathcal{D}\in\sec\mathcal{C}\ell(\mathring{M},\boldsymbol{\mathring{g}})$.
Since%
\begin{equation}
\mathbf{P}(\mathcal{C}\wedge\mathcal{D})=\mathbf{P}(\mathcal{C)}%
\wedge\mathbf{P}(\mathcal{D}), \label{ap1}%
\end{equation}
we have for any $u\in\sec%
{\textstyle\bigwedge\nolimits^{1}}
T^{\ast}M\hookrightarrow\sec\mathcal{C}\ell(M,\boldsymbol{g})$ that
\begin{equation}
\mathbf{P}_{u}(\mathcal{C}\wedge\mathcal{D})=\mathbf{P}_{u}(\mathcal{C)}%
\wedge\mathbf{P}(\mathcal{D})+\mathbf{P}(\mathcal{C)}\wedge\mathbf{P}%
_{u}(\mathcal{D}). \label{ap2}%
\end{equation}

From $\mathbf{P}^{2}(\mathcal{C)}=\mathbf{P}(\mathcal{C})$ we have that
\begin{equation}
\mathbf{P}_{u}\mathbf{P}(\mathcal{C})+\mathbf{PP}_{u}(\mathcal{C}%
)=\mathbf{P}_{u}(\mathcal{C}). \label{ap3}%
\end{equation}

Now, we easily verify that
\begin{equation}
\mathbf{P}_{u}(w)=\mathbf{P}_{\perp}(u\cdot\mathfrak{\mathring{d}%
}w),~~~\mathbf{P}_{w}(u)=\mathbf{P}_{\perp}(w\cdot\mathfrak{\mathring{d}}u).
\label{ap33}%
\end{equation}

Now, we already know from Eq.(\ref{O2}) that $\mathbf{P}_{\perp}%
(u\cdot\mathfrak{\mathring{d}}w)$ and $\mathbf{P}_{\perp}(w\cdot
\mathfrak{\mathring{d}}u)$ \ are equal and thus
\begin{equation}
\mathbf{P}_{u}(w)=\mathbf{P}_{w}(u). \label{ap4}%
\end{equation}
From this equation also follows immediately that%
\begin{equation}
\mathbf{P}_{u}\mathbf{P}(w)=\mathbf{P}_{w}\mathbf{P}(u). \label{ap5}%
\end{equation}

Now given that each $\mathcal{X}\in\sec\mathcal{C}\ell(\mathring
{M},\boldsymbol{\mathring{g}})$ can be written as $\mathcal{X}=\mathcal{X}%
_{\parallel}+\mathcal{X}_{\perp}$, with $\mathcal{X}_{\parallel}%
=\mathbf{P}(\mathcal{X)}$ we get from Eq.(\ref{ap3}) that
\begin{equation}
\mathbf{PP}_{u}(\mathcal{X}_{\parallel})=\mathbf{0}\text{,~~~~}\mathbf{PP}%
_{u}(\mathcal{X}_{\perp})=\mathbf{P}_{u}(\mathcal{X}_{\perp}). \label{ap6}%
\end{equation}
Also, from Eq.(\ref{ap2}) we have immediately taking into account
Eq.(\ref{ap6}) for any $\mathcal{C},\mathcal{D}\in\sec\mathcal{C}%
\ell(\mathring{M},\boldsymbol{\mathring{g}})$ that
\begin{equation}
\mathbf{P}_{u}(\mathcal{C}_{\parallel}\wedge\mathcal{D}_{\perp})=\mathcal{C}%
_{\parallel}\wedge\mathbf{P}_{u}(\mathcal{D}_{\perp}). \label{ap7}%
\end{equation}


\begin{thebibliography}{99}                                                                                               %


\bibitem {am}{\footnotesize Auslander, L. and MacKenzie, R. E.,
\emph{Introduction to Differential Manifolds}, McGraw Hill Book Co. New York
(1963).}

\bibitem {becker}{\footnotesize Becker, K., Becker M., and Schwarz, J.,
\emph{String Theory and M-Theory}, Cambridge Univ. Press, Cambridge (2007).}

\bibitem {choquet}{\footnotesize Choquet-Bruhat, Y, DeWitt-Morette, C. and
Dillard-Bleick, M., \emph{Analysis Manifold and Physics. Part 1: Basics}
(revised edition), North Holland, Amsterdam, 1982.}

\bibitem {duff}{\footnotesize Duff, M., M-Theory (The theory Formely Known as
Strings), \emph{Int. J. Mod. Phys. A} \textbf{11}, 5623-5642 (1996).
[hep-th/9608117]}

\bibitem {clarke}{\footnotesize {Clarke, C. J. S., On the Global Isometric
Embedding of Pseudo-Riemannian Manifolds, \textit{Proc. Roy. Soc. A
}\textbf{314}, 417-428 (1970).}\textbf{\ }}

\bibitem {clifford}{\footnotesize Clifford, W. K., On the Space-Theory of
Matter, Proc. Cambridge Phil.. Soc. \textbf{2}, 157-158 (1864-1876 - Printed
1876).}

\bibitem {dl}{\footnotesize Doran, C. and Lasenby A.,\emph{ Geometric Algebra
for Physicists}, Cambridge Univ. Press, Cambridge (2003).}

\bibitem {rbh}{\footnotesize da Rocha, R., Bernardini, A. E., and Hoff da
Silva, J. M., Exotic Dark Spinor Fields, \emph{JHEP} \textbf{4}, article:110
[arXiv:1103.4759] [hep-th] }

\bibitem {eddington}{\footnotesize {Eddington, A. S., \textit{The Mathematical
Theory of Relativity}, 3rd edn., Chelsea, New York, 1975.} }

\bibitem {fr2010}{\footnotesize Fern\'{a}ndez, V. V. and Rodrigues, W. A. Jr.,
\emph{Gravitation as a Plastic Distortion of the Lorentz Vaccum}, Fundamental
Theories of Physics \textbf{168}, Springer, Berlin, 2010. Errata at :
http://www.ime.unicamp.br/\symbol{126}walrod/plastic04162013}

\bibitem {geroch}{\footnotesize {Geroch, R. Spinor Structure of Space-Times in
General Relativity I, \textit{J. Math. Phys.} \textbf{9}, 1739-1744 (1968).}}

\bibitem {hicks}{\footnotesize Hicks, N. J., \emph{Notes in Differential
Geometry}, van Nostrand, Princeton, 1965.}

\bibitem {hs1984}{\footnotesize {Hestenes, D., \ and Sobczyk, G.,
\textit{Clifford Algebra to Geometric Calculus}, D. Reidel Publ. Co.,
Dordrecht, 1984.} }

\bibitem {h1986}{\footnotesize {Hestenes, D., Curvature Calcualtions with
Spacetime Algebra, \emph{Int. J. Theor. Phys}.\textbf{ 25}, 581-588 (1986).}}

\bibitem {horava}{\footnotesize Horava, P. and Witten, E., Eleven-Dimensional
Supergravity on a Manifold with Boundary, \emph{Nucl. Phys B} \textbf{457},
94-114 (1996). [hep-th/9603142]}

\bibitem {ma}{\footnotesize Mannhein, P. D., \emph{Brane Localized Gravity},
World Sci. Publ. Co., Singapore (2005).}

\bibitem {mr2004}{\footnotesize {Mosna, R. A. and Rodrigues, W. A., Jr.,
\textit{The Bundles of Algebraic and Dirac-Hestenes Spinor Fields}, \textit{J.
Math. Phys} \textbf{45}, 2945-2966 (2004). [math-ph/0212033]} }

\bibitem {nra}{\footnotesize Notte-Cuello, E., Rodrigues, W. A. Jr, and Q. A.
G. de Souza,The Square of the Dirac and spin-Dirac Operators on a
Riemann-Cartan Space(time), \emph{Rep. Math. Phys}. \textbf{60}, 135-157
(2007) [arXiv:math-ph/0703052 ]}

\bibitem {rodcap2007}{\footnotesize Rodrigues, W. A. Jr. and Capelas de
Oliveira, E.,\emph{ The Many Faces of Maxwell Equations. A Clifford Bundle
Approach}, Lecture Notes in Physics \textbf{722}, Springer, Heildeberg, 2007.
Errata and preliminary version of a second edition at
http://www.ime.unicamp.br/\symbol{126}walrod/recentes.htm}

\bibitem {rod2010}{\footnotesize Rodrigues, W. A. Jr., \ Killing Vector
Fields, Maxwell Equations and Lorentzian Spacetimes, Adv. Appllied. Clifford
Algebras 20, 871-884 (2010).}

\bibitem {sobczyk}{\footnotesize Sobczyk, G., Conformal Mappings in Geometric
Algebra, \emph{Not. Am. Math. Soc} \textbf{59}, 264-273 (2012).}
\end{thebibliography}
\end{document}